\documentclass[a4paper,12pt]{article}
\usepackage{amsmath,amssymb,mathtools,algpseudocode,algorithm,indentfirst,amsthm}
\usepackage[x11names,table]{xcolor}
\usepackage{etoolbox,graphicx}
\usepackage[toc,page]{appendix}
\usepackage{subcaption}
\usepackage[english]{babel}
\usepackage{tikz}
\usepackage{verbatim}
\usepackage[margin=3cm]{geometry}
\usepackage{url}

\DeclareMathOperator*{\argmax}{argmax}
\DeclareMathOperator*{\argmin}{argmin}

\DeclarePairedDelimiter\norm{\|}{\|}

\newcommand{\rcol}[2]{R_#1^{(#2)}}

\algnewcommand{\Initialize}[1]{
  \State \textbf{Initialize:}\:#1
}

\newtheorem{theorem}{Theorem}
\newtheorem{lemma}[theorem]{Lemma}
\newtheorem{proposition}[theorem]{Proposition}
\newtheorem{corollary}[theorem]{Corollary}
\theoremstyle{definition}

\newtheorem{definition}[theorem]{Definition}
\numberwithin{equation}{section}
\numberwithin{theorem}{section}

\newcommand{\rv}[1]{
\textcolor{red}{#1}%
}
\newcommand{\rvc}[2]{
\protect\textcolor{red}{#1}%
    \IfNoValueF{#2}{\protect\endnote{#2}%
    }%
}

\title{Common lines ab-initio reconstruction of $D_2$-symmetric molecules}
\author{{Eitan Rosen} and {Yoel Shkolnisky}}

\begin{document}
\maketitle

\begin{abstract}

Cryo-electron microscopy is a state-of-the-art method for determining high-resolution three-dimensional models of molecules, from their two-dimensional projection images taken by an electron microscope. A crucial step in this method is to determine a low-resolution model of the molecule using only the given projection images, without using any three-dimensional information, such as an assumed reference model. For molecules without symmetry, this is often done by exploiting common lines between pairs of images. Common lines algorithms have been recently devised for molecules with cyclic symmetry, but no such algorithms exist for molecules with dihedral symmetry.

In this work, we present a common lines algorithm for determining the structure of molecules with $D_{2}$ symmetry. The algorithm exploits the common lines between all pairs of images simultaneously, as well as common lines within each image. We demonstrate the applicability of our algorithm using experimental cryo-electron microscopy data.

\end{abstract}

\section{Introduction}\label{sec:introduction}
Cryo-electron microscopy (cryo-EM) is a technique for acquiring two-dimensional projection images of biological macromolecules~\cite{cryoBook}. In this technique, a large number of copies of the same molecule is rapidly frozen in a thin layer of ice, fixing each molecule in some random unknown orientation.
The frozen specimen is then imaged by an electron microscope, producing a set of two-dimensional projection images (defined below).
Once the imaging orientations of the frozen molecules which produced the images are obtained, the three-dimensional structure of the molecule can be recovered from the projection images by standard tomographic procedures.

Formally, choosing some arbitrary fixed coordinate system of $\mathbb{R}^3$,
the orientations of the imaged molecules at the moment of freezing can be described by a set of rotation matrices
\begin{equation}\label{Intro:Rots}
R_i=\left(\begin{array}{ccc} | & | & | \\R_i^{(1)} & R_i^{(2)} & R_i^{(3)} \\ |& | & | \end{array}\right)\in SO(3),\quad i\in[N],
\end{equation}
where we denote by $[N]$ the set $\{1,\ldots,N\}$, and $SO(3)$ is the group of $3\times 3$ rotation matrices.
We denote the density function of the molecule by $\phi(r):\mathbb{R}^3\to\mathbb{R}$, where $r=(x,y,z)^T$, and by
$P_{R_i}$ the image generated by the microscope when imaging a copy of $\phi$ rotated by $R_i$. The image $P_{R_i}$ is then given by the line integrals of $\phi(r)$ along the lines parallel to $R_i^{(3)}$, namely
\begin{equation}\label{Intro:ProjEq}
P_{R_i}(x,y)=\int_{-\infty}^{\infty}\phi(R_ir)dz=\int_{-\infty}^{\infty}\phi(xR^{(1)}_i+yR^{(2)}_i+zR^{(3)}_i)dz.
\end{equation}
The orthogonal unit vectors $R^{(1)}_i$ and $R^{(2)}_i$, which span the plane perpendicular to $R_i^{(3)}$, form the $(x,y)$ coordinate
system for the image $P_{R_i}$ from the point of view of an observer looking from the direction of the electron beam.
We refer to $R_i^{(3)}$ and its perpendicular plane as the beaming direction and the projection plane of $P_{R_i}$, respectively.

We can now state the ``orientation assignment problem'' as the task of finding a set of $N$ matrices $\{R_1,\ldots,R_N\} \in SO(3)$ such that~\eqref{Intro:ProjEq} is satisfied for all
$i\in[N]$, given only the images $P_{R_{1}},\ldots,P_{R_{N}}$ (in particular, $\phi$ in~\eqref{Intro:ProjEq} is unknown). In this work, we address the task of determining the orientations of a set of projection images obtained from a molecule with  $D_2$ symmetry.

In plain language, a $D_2$-symmetric molecule has three mutually perpendicular symmetry axes, where after we rotate the molecule by $180^\circ$ about any of these axes, the molecule looks exactly the same.
Formally, let
\begin{equation}\label{Sec1:gpMem}
g_2= \begin{pmatrix*}[r]
			1 & 0 & 0 \\
			0 & -1 & 0 \\
			0 & 0 & -1\end{pmatrix*},\quad
g_3=\begin{pmatrix*}[r]
			-1 & 0 & 0 \\
			0 & \phantom{+}1 & 0 \\
			0 & 0 & -1\end{pmatrix*} ,\quad
g_4= \begin{pmatrix*}[r]
			-1 & 0 & 0 \\
			0 & -1 & 0 \\
			0 & 0 & \phantom{+}1\end{pmatrix*}
\end{equation}
denote the three rotation matrices by $180^\circ$ about the $x,y$ and $z$ axes, respectively.
For notational convenience we also denote the $3\times3$ identity matrix by $g_1$.
Choosing a coordinate system in which the rotational symmetry axes of the molecule coincide with the $x,y$ and $z$ axes, the $D_2$ symmetry property implies that
\begin{equation}\label{Intro:volEquivalence}
\phi(r)=\phi(g_1r)=\phi(g_2r)=\phi(g_3r)=\phi(g_4r)
\end{equation}
for any $r\in\mathbb{R}^3$. Considering any projection image $P_{R_i}(x,y)$, by~\eqref{Intro:volEquivalence} we have
\begin{equation}\label{Intro:ProjIdentity}
P_{R_{i}}(x,y) =  \int_{-\infty}^{\infty}\phi(R_{i}r)dz=\int_{-\infty}^{\infty}\phi(g_mR_{i}r)dz= P_{g_mR_{i}}(x,y)
\end{equation}
for $m=2,3,4$. Equation~\eqref{Intro:ProjIdentity} shows that a $D_2$-symmetric molecule induces an ambiguity in which all orientation assignments of the form
$\{g_{m_i}R_i\}_{i=1}^N$, $g_{m_i}\in\{g_1,g_2,g_3,g_4\}$ are consistent with the same set of images $\{P_{R_i}\}_{i=1}^N$.

An additional ambiguity inherent to cryo-EM arises from the well known fact that the handedness (chirality) of a molecule cannot be established from its projection images.
Denoting by $J=\text{diag}(1,1,-1)$ the reflection matrix through the $xy$-plane, we define by $\psi(r)=\phi(Jr)$ the mirror image of the molecule $\phi(r)$.
Since $J^2=I$, we also have that $\phi(r)=\psi(Jr)$, and thus, by~\eqref{Intro:ProjEq} we have
\begin{equation*}
P_{R_i}(x,y)=\int_{-\infty}^{\infty}\phi(R_ir)dz =\int_{-\infty}^{\infty}\psi(JR_ir)dz=\int_{-\infty}^{\infty}\psi(JR_iJJr)dz,
\end{equation*}
where $r=(x,y,z)^T$.
Noting that $Jr=(x,y,-z)^T$, and changing the variable $z$ to $z'=-z$ we have
\begin{equation}\label{Intro:Jamb}
\int_{-\infty}^{\infty}\psi((JR_iJ)Jr)dz=\int_{-\infty}^{\infty}\psi(JR_iJ\begin{pmatrix}x \\ y \\ z'\end{pmatrix})dz'=\widetilde{P}_{JR_iJ}(x,y),
\end{equation}
which shows that any projection image $P_{R_i}$ of the molecule $\phi(r)$ is identical to the projection $\widetilde{P}_{JR_iJ}$ of its mirror image molecule $\psi(r)$.
Thus, both orientation assignments $\{R_k\}_{k=1}^N$ and $\{JR_kJ\}_{k=1}^N$ are consistent with the same set of projection images $\{P_{R_1},\ldots,P_{R_N}\}$.

The ``orientation assignment problem'' for a $D_2$-symmetric molecule can now be stated as the task of finding one of the sets of matrices $\{{R}_i\}_{i=1}^N$ or
$\{J{R}_iJ\}_{i=1}^N$ satisfying~\eqref{Intro:ProjEq}, where each matrix $R_i$ can be independently replaced by a matrix $\widetilde{R}_i\in\{g_mR_i\}_{m=2}^4$.

\section{Common lines and their $D_2$ induced geometry}\label{sec:D2 geometry}
One of the principal approaches to solving the orientation assignment problem, which we also employ in this work,
relies on the well know Fourier slice theorem~\cite{Natr2001a}, which states that the two-dimensional Fourier transform of a projection image $P_{R_i}$ is a central planar slice of the three-dimensional Fourier transform of the molecule $\phi(r)$. Formally, denoting by $\hat{\phi}$ the three-dimensional Fourier transform of $\phi(r)$, and by
$\hat{P}_{R_i}$ the two-dimensional Fourier transform of the image $P_{R_i}$, we have that
\begin{equation}\label{eq:Fourier slice theorem}
\hat{P}_{R_i}(\omega_x,\omega_y)=\hat{\phi}(\omega_x\rcol{i}{1}+\omega_y\rcol{i}{2})\:,\quad (\omega_x,\omega_y)\in\mathbb{R}^2.
\end{equation}
Note that the central slice of $\hat{\phi}$ in~\eqref{eq:Fourier slice theorem} is spanned by the vectors $\rcol{i}{1}$ and $\rcol{i}{2}$ (the first two columns of the rotation matrix $R_i$ of~\eqref{Intro:Rots}). Since any two non-coinciding central planes intersect at a single line through the origin, it follows that any pair of transformed projection images $\hat{P}_{R_i}$ and $\hat{P}_{R_j}$ for which $ | \langle \rcol{i}{3},\rcol{j}{3}  \rangle |\neq 1$ share a unique pair of identical central lines. Formally, consider the unit vector
\begin{equation}
q_{ij}=\frac{\rcol{i}{3}\times\rcol{j}{3}}{\|\rcol{i}{3}\times\rcol{j}{3}\|}.
\end{equation}
By the definition of the cross product, $q_{ij}$ is perpendicular to the normal vectors of both projection planes of $\hat{P}_{R_i}$ and $\hat{P}_{R_j}$, and thus it gives the direction of their common line (see \cite{sync} for a detailed discussion). We can express $q_{ij}$ using its local coordinates on both projection planes by
\begin{equation}\label{Sec2:LocalCoorDef}
q_{ij}=\cos\alpha_{ij}\rcol{i}{1}+\sin\alpha_{ij}\rcol{i}{2}=\cos\alpha_{ji}\rcol{j}{1}+\sin\alpha_{ji}\rcol{j}{2},
\end{equation}
where $\alpha_{ij}$ and $\alpha_{ji}$ are the angles between $q_{ij}$ and the local $x$-axes of the planes.
Using this notation, the common line property implies that
\begin{equation}
\hat{P}_{R_i}(\xi\cos\alpha_{ij},\xi\sin\alpha_{ij})=\hat{P}_{R_j}(\xi\cos\alpha_{ji},\xi\sin\alpha_{ji})\:, \quad \xi\in\mathbb{R}.
\end{equation}

Now, let $P_{R_i}$ and $P_{R_j}$ be a pair of images of a $D_2$-symmetric molecule.
By~\eqref{Intro:ProjIdentity}, the image $P_{R_j}$ (also denoted $P_{g_1R_j}$) is identical to the three images $P_{g_2R_j},P_{g_3R_j}$ and $P_{g_4R_j}$. However, each of these four images corresponds to a plane, and by~\eqref{Intro:ProjEq} the four planes are different from each other since their rotation matrices are different. Each of these planes has a unique common line with the projection plane of ${P}_{R_i}$, and thus, we conclude that there are four different pairs of identical central lines between the images $\hat{P}_{R_i}$ and $\hat{P}_{R_j}$. The directions of all four common lines are given by the unit vectors
\begin{equation}\label{Sec2:PhysCls}
q_{ij}^m=\frac{\rcol{i}{3}\times g_m\rcol{j}{3}}{\|\rcol{i}{3}\times g_m\rcol{j}{3}\|}\:,\quad m=1,2,3,4,
\end{equation}
where for convenience we denote $q_{ij}^1=q_{ij}$, since $g_1$ is the identity matrix.
We write the local coordinates of the common lines of the pairs of images $\hat{P}_{R_i}$ and $\hat{P}_{g_mR_j}$, on the respective projection plane of each image (see~\eqref{Sec2:LocalCoorDef}), as
\begin{equation}\label{Sec2:ClsCoor}
\begin{split}
C(R_i,g_mR_j)&=\begin{pmatrix}\cos{\alpha_{ij}^m} , \sin{\alpha_{ij}^m} \end{pmatrix}=
\begin{pmatrix}\langle \rcol{i}{1},q_{ij}^m\rangle , \langle \rcol{i}{2},q_{ij}^m\rangle \end{pmatrix}, \\
C(g_mR_j,R_i)&=\begin{pmatrix}\cos{\alpha_{ji}^m } , \sin{\alpha_{ji}^m} \end{pmatrix}=
\begin{pmatrix}\langle g_m\rcol{j}{1},q_{ij}^m\rangle , \langle g_m\rcol{j}{2},q_{ij}^m\rangle \end{pmatrix},
\end{split}
\end{equation}
for $m=1,2,3,4$. Thus, the common line property for $D_2$-symmetric molecules implies that
\begin{equation}\label{Sec2:ClsAmb}
\hat{P}_{R_i}(\xi\cos\alpha^m_{ij},\xi\sin\alpha^m_{ij})=\hat{P}_{R_j}(\xi\cos\alpha^m_{ji},\xi\sin\alpha^m_{ji}),\quad \xi\in\mathbb{R},
\end{equation}
for $m=1,2,3,4$.
Throughout the following sections, we refer to all four common lines as the common lines of the images $P_{R_i}$ and $P_{R_j}$.

An additional feature of symmetric molecules, and in particular of $D_2$-symmetric molecules, is self common lines. For any $i\in[N]$, the images $P_{R_i},P_{g_2R_i},P_{g_3R_i}$ and $P_{g_4R_i}$ of a $D_2$-symmetric molecule, are identical. However, each image corresponds to a different projection plane, and thus, the projection plane of $P_{R_i}$ has a unique common line with each of the planes of  $P_{g_2R_i},P_{g_3R_i}$ and $P_{g_4R_i}$. The directions of these common lines are given by
\begin{equation}\label{Sec2:PhysSelfCls}
q_{ii}^m=\frac{\rcol{i}{3}\times g_m\rcol{i}{3}}{\|\rcol{i}{3}\times g_m\rcol{i}{3}\|},\quad m=2,3,4.
\end{equation}
Thus, there are three different pairs of central lines in $\hat{P}_{R_i}$, corresponding to the common lines between the pairs $\{P_{R_i},P_{g_2R_i}\},\{P_{R_i},P_{g_3R_i}\}$ and $\{P_{R_i},P_{g_4R_i}\}$, which satisfy
\begin{equation}\label{Sec2:SelfClsAmb}
\hat{P}_{R_i}(\xi\cos\alpha_{ii}^{1m},\xi\sin\alpha_{ii}^{1m})=\hat{P}_{R_i}(\xi\cos\alpha_{ii}^{m1},\xi\sin\alpha_{ii}^{m1}),\:\; \xi\in\mathbb{R},
\end{equation}
for $m=2,3,4$, where
\begin{equation}\label{Sec2:SelfClsCoor}
\begin{split}
C(R_i,g_mR_i)&=\begin{pmatrix}\cos\alpha_{ii}^{1m}, \sin\alpha_{ii}^{1m} \end{pmatrix}=\begin{pmatrix}\langle \rcol{i}{1},q_{ii}^m\rangle , \langle \rcol{i}{2},q_{ii}^m\rangle \end{pmatrix},\\
C(g_mR_i,R_i)&=\begin{pmatrix}\cos\alpha_{ii}^{m1} , \sin\alpha_{ii}^{m1} \end{pmatrix}=\begin{pmatrix}\langle g_m\rcol{i}{1},q_{ii}^m\rangle , \langle g_m\rcol{i}{2},q_{ii}^m\rangle \end{pmatrix},
\end{split}
\end{equation}
for $m=2,3,4$. We refer to all three common lines in~\eqref{Sec2:SelfClsAmb} as the self common lines of the image $P_{R_i}$.

\section{Related work}\label{sec:previous}
Many of the current common line methods for orientation assignment are based on the method of angular reconstitution by Van Heel \cite{angRecon}.
The core idea of the angular reconstitution method is rooted at the observation that the intersection of any three non-coinciding central planes establishes the orientations of all three, relative to each other. In particular, Van Heel shows how given a triplet of projection images $\{P_{R_i},P_{R_j},R_{R_k}\}$, one can obtain either of the sets of relative rotation matrices
$\{R_i^TR_j, R_i^TR_k, R_j^TR_k\}$ or $\{JR_i^TR_jJ, JR_i^TR_kJ, JR_j^TR_kJ\}$, by using the common lines between $P_{R_i},P_{R_j}$ and $P_{R_k}$.
Both choices of relative rotation matrices are equally consistent with the images $P_{R_i},P_{R_j}$ and $P_{R_k}$, and are just the manifestation of the handedness ambiguity discussed in the previous section. The angular reconstitution method then makes the assumption that (without loss of generality) $R_i=I$, which immediately establishes $R_j$ and $R_k$ from $R_i^TR_j$ and $R_i^TR_k$.
The orientations of the rest of the images $P_{R_l}$  for $l\neq i,j,k$ are obtained by fixing the pair of images $P_{R_i}$ and $P_{R_j}$, and applying the same method sequentially to each triplet of images $\{P_{R_i},P_{R_j},P_{R_l}\}$ to retrieve $R_i^TR_l$, which immediately establishes $R_l$ by $R_l=R_i^TR_l$. Note that since all relative rotations and subsequently the rotations themselves were obtained by combining each of the images $R_l$ with the same pair of images $\{P_{R_i},P_{R_j}\}$, the angular reconstitution method ensures that we obtain a hand-consistent assignment, i.e.\ we either obtain $\{R_i\}_{i=1}^N$ or $\{JR_iJ\}_{i=1}^N$. Thus, we can recover either the original molecule or its mirror image.

The most commonly used procedure for estimating the common line of a pair of images $P_{R_i}$ and $P_{R_j}$, is to calculate their Fourier transforms and then find a maximally correlated pair of central lines between the transformed images, see e.g.~\cite{voting}.
As the images obtained by cryo-EM are contaminated with high levels of noise,
in practice the probability of correctly detecting their common lines, and subsequently their rotations using the angular reconstitution method, is low.
In~\cite{sync}, Shkolnisky and Singer describe an algorithm for estimating the rotations which achieves robustness to noise by employing an approach known as 'synchronization'. In this approach, the rotations $\{R_i\}_{i=1}^N$ are estimated using all the relative rotations $\{R_i^TR_j\}_{i<j\in[N]}$ together at once.
The authors use the set $\{R_i^TR_j\}_{i<j\in[N]}$ to construct a $3N\times3N$ block matrix $M$ known as a 'synchronization matrix', whose $(i,j)^{th}$ block $M_{ij}$ of size $3\times3$ is given by $R_i^TR_j$, that is
\begin{equation}
M_{ij}=R_i^TR_j \:,\quad  i,j\in[N].
\end{equation}
Defining the matrix $U=\left(R_1,\ldots,R_N\right)$, we see that
\begin{equation}\label{eq:M}
M=U^TU,
\end{equation}
and thus we can obtain $U=(R_1,\ldots,R_N)$ by factoring $M$ using SVD.

A method for estimating the set of relative rotations $\{R_i^TR_j\}_{i<j\in[N]}$ in a non-sequential manner is given in~\cite{voting}, and takes advantage of the following observation:
the relative rotation of $P_{R_i},P_{R_j}$ can be estimated from the common lines between these images and any of the $N-2$ images $P_{R_k}$, where $k\neq i,j$.
The authors of~\cite{voting} show how to obtain a robust estimate of the relative rotation of each pair $P_{R_i}$ and $P_{R_j}$, by taking a majority vote over all these $N-2$ estimates.
Note that in order to construct the synchronization matrix $M$, one has to obtain a hand-consistent set of relative rotations, i.e.\ either the set $\{R_i^TR_j\}_{i<j\in[N]}$ or the set
$\{JR_i^TR_jJ\}_{i<j\in[N]}$. However, if the estimates $R_i^TR_j$ or $JR_i^TR_jJ$ are obtained independently for each pair of images, this cannot be guaranteed.
A solution to this issue is given in~\cite{Jsync}.

A direct application of any of the common lines based methods described above to a symmetric molecule encounters substantial difficulties stemming from the ambiguity described by~\eqref{Sec2:ClsAmb}. Suppose that given a pair of images $P_{R_i}$ and $P_{R_j}$ of a $D_2$-symmetric molecule, we wish to estimate the relative rotation $R_i^TR_j$ by applying the angular reconstitution method. By~\eqref{Sec2:ClsAmb}, we can detect four different common lines between the images, corresponding to four different pairs of projection planes, and we have no way of knowing which common line corresponds to which pair of planes.
Thus, combining the common line of $P_{R_i}$, $P_{R_j}$ together with common lines with
$P_{R_k}$, gives rise to $4^3$ different possible combinations of common line triplets, many of which do not submit an intersection of three planes. For instance, we can erroneously consider a combination of the common lines between the pairs $\{P_{R_i},P_{R_j}\}$, $\{P_{R_i},P_{R_k}\}$, and a third pair $\{P_{R_j},P_{g_2R_k}\}$, from which we cannot establish the relative rotations of $P_{R_i},P_{R_j}$ and $P_{R_k}$, since we are not considering the correct common lines triplet between the projection planes of these images.

In~\cite{voting}, the authors derive a simple condition by which one can determine whether a triplet of common lines can be realized as the intersection of three central planes.
Still, even common line triplets which do satisfy this condition can generate any of the rotations $\{R_i^TR_j,R_i^Tg_2R_j,R_i^Tg_3R_j,R_i^Tg_4R_j\}$, between which we cannot distinguish. Thus, to construct the synchronization matrix $M$ in~\eqref{eq:M}, one would have to devise a way to obtain a set of estimates
$\{\widetilde{R}_i^T\widetilde{R}_j\}_{i<j\in[N]}$, in which for each $i\in[N]$ all the relative rotations $\widetilde{R}_i^T\widetilde{R}_j$ for $j\neq i$ collectively 'agree' on the identity of $\widetilde{R}_i\in \{R_i,g_2R_i,g_3R_i,g_4R_i\}$.

A further difficulty stems from the fact that pairs of central lines which are adjacent to each common line between a pair of images are also highly correlated. Thus, attempting to estimate 4 common lines between a pair of noisy images $\hat{P}_{R_i}$ and $\hat{P}_{R_j}$ by simply trying to detect the 4 best correlated central lines is most likely to fail (see~\cite{cn} for a detailed discussion).

In Section~\ref{sec:relative rotations} we present a different approach for estimating all four common lines and respective relative rotations of a pair of projection images of a $D_2$-symmetric molecule, inspired by maximum likelihood methods.
In Section~\ref{sec:rotation matrices} we outline an algorithm for extracting the rotations $R_i$ of~$\eqref{Intro:ProjEq}$ from the relative rotations estimated by the procedure described in Section~\ref{sec:relative rotations}. This approach encounters 3 major obstacles which are resolved in Sections~\ref{sec:handedness},~\ref{sec:rows synchronization} and~\ref{sec:signs synchronization}. In Section~\ref{sec:experiments} we demonstrate the applicability of our method to experimental cryo-EM data. Finally, in Section~\ref{sec:summary} we summarize and discuss future work.

\section{Relative rotations estimation}\label{sec:relative rotations}
In this section we present a method for estimating the set of relative rotations $\{R_i^Tg_mR_j\}_{m=1}^4$ for a pair of images $P_{R_i}$ and $P_{R_j}$ of a $D_2$-symmetric molecule. Similarly to the maximal correlations approach described in the previous section, we begin by computing the 2D Fourier transform of each image $P_{R_i}$. By~\eqref{Sec2:ClsAmb}, in the noiseless case, each pair of transformed images $\hat{P}_{R_i}$ and $\hat{P}_{R_j}$ has exactly four pairs of perfectly correlated central lines. Thus, in principle, we can detect these common lines by computing correlations between pairs of central lines in $\hat{P}_{R_i}$ and $\hat{P}_{R_j}$, and choosing the four maximally correlated pairs. However, as was explained in the previous section, this approach encounters several substantial difficulties. We now describe a different approach inspired by maximum likelihood methods.

Consider the set
\begin{equation}
\mathcal{D}_c=\{\{Q_l^Tg_mQ_r\}_{m=1}^4\:|\:Q_l,Q_r\in SO(3)\:,\:|\!<Q_l^{(3)}\!,Q_r^{(3)}\!>\!|\neq1\},
\end{equation}
of quadruplets of relative rotations generated from all pairs of rotations $Q_l,Q_r\in SO(3)$ with non-coinciding beaming directions, and let us denote the members of $\mathcal{D}_c$ by  $Q_{lr}=\{Q_l^Tg_mQ_r\}_{m=1}^4$. Given a pair of images $P_{R_i}$ and $P_{R_j}$, we now show how one can use the common lines of the images to assign a score $\pi_{ij}(Q_l,Q_r)$ to each element $Q_{lr}\in\mathcal{D}_c$, which indicates how well it approximates the quadruplet $\{R_i^Tg_mR_j\}_{m=1}^4$.
Since $\{R_i^Tg_mR_j\}_{m=1}^4 \in \mathcal{D}_c$, it can be detected by searching over $D_c$ for a candidate $Q_{lr}$ with the best score $\pi_{ij}(Q_l,Q_r)$. We henceforth refer to $D_c$ as the relative rotations search space.

First, to relate each candidate $Q_{lr}\in\mathcal{D}_c$ to the common lines of $P_{R_i}$ and $P_{R_j}$, let us compute the vectors
\begin{equation}\label{Sec4:ClsPhys}
\tilde{q}_{lr}^m=\frac{Q_l^{(3)}\times g_mQ_r^{(3)}}{\|Q_l^{(3)}\times g_mQ_r^{(3)}\|},\quad m\in\{1,2,3,4\},
\end{equation}
analogously to~\eqref{Sec2:PhysCls}. If $Q_{lr}=\{R_i^Tg_mR_j\}_{m=1}^4$, then the set $\{\tilde{q}_{lr}^m\}_{m=1}^4$ corresponds to the direction vectors of the common lines of $P_{R_i}$ and $P_{R_j}$. We subsequently refer to $\{\tilde{q}_{lr}^m\}_{m=1}^4$ as the set of common lines directions of the quadruplet $Q_{lr}$ (corresponding to the pair $Q_l,Q_r\in SO(3)$).
Next, for each candidate $Q_{lr}$, we use the set $\{\tilde{q}_{lr}^m\}_{m=1}^4$ to compute the coordinate vectors
\begin{equation}\label{Sec4:ClsCoor}
C(Q_l,g_mQ_r)=\begin{pmatrix}\cos{\tilde{\alpha}_{lr}^m } , \sin{\tilde{\alpha}_{lr}^m} \end{pmatrix},\quad C(g_mQ_r,Q_l)=\begin{pmatrix}\cos{\tilde{\alpha}_{rl}^m} , \sin{\tilde{\alpha}_{rl}^m} \end{pmatrix},
\end{equation}
for $m=1,2,3,4$, analogously to~\eqref{Sec2:ClsCoor}. If $Q_{lr}=\{R_i^Tg_mR_j\}_{m=1}^4$, the coordinates in~\eqref{Sec4:ClsCoor} correspond to the local coordinates of the common lines of $P_{R_i}$ and $P_{R_j}$ on the respective projection planes of the images.
Denote by
\begin{equation*}
\nu_{n,\theta}(\xi)=\hat{P}_{R_n}(\xi\cos\theta,\xi\sin\theta),\quad \xi\in(0,\infty),
\end{equation*}
the half line (known as a Fourier ray) in the direction which forms an angle $\theta$ with the $x$-axis of the transformed image $\hat{P}_{R_n}$.
We then compute the normalized cross correlations
\begin{equation}
\rho_{ij}(\tilde{\alpha}_{lr}^m,\tilde{\alpha}_{rl}^m)=\frac{\int_0^\infty (\nu_{i,\tilde{\alpha}_{lr}^m}(\xi))^*\nu_{j,\tilde{\alpha}_{rl}^m}(\xi)d\xi}
{\|\nu_{i,\tilde{\alpha}_{lr}^m}(\xi)\|_{L_2}\|\nu_{j,\tilde{\alpha}_{rl}^m}(\xi)\|_{L_2}}\:,\quad m=1,2,3,4,
\end{equation}
of each pair of rays given by the direction vectors in~\eqref{Sec4:ClsCoor}.
We use rays instead of lines, since the correlation $\rho_{ij}(\theta,\varphi)$  between each pair of rays $\nu_{i,\theta}(\xi)$ and $\nu_{j,\varphi}(\xi)$ is identical to the correlation value $\rho_{ij}(\theta+\pi,\varphi+\pi)$ of their anti-podal rays. We then assign to each quadruplet $Q_{lr}$ the score
\begin{equation}\label{Sec4:ScoreFuncCls}
\pi_{ij}(Q_l,Q_r)=\prod_{m=1}^4\rho_{ij}(\tilde{\alpha}_{lr}^m,\tilde{\alpha}_{rl}^m).
\end{equation}
By~\eqref{Sec2:ClsAmb}, if $Q_{lr}=\{R_i^Tg_mR_j\}_{m=1}^4$ for some $Q_l,Q_r\in SO(3)$, then $\pi_{ij}(Q_l,Q_r)=1$. Thus,
the quadruplet $Q_{lr}$ with $\pi_{ij}(Q_l,Q_r)=1$ is declared as $\{R_i^Tg_mR_j\}_{m=1}^4$.

We remark that since in practice we use a discretization of $\mathcal{D}_c$ to estimate the relative rotations of pairs of noisy images (see Section~\ref{sec:implementation} for details), $\pi_{ij}$ is never exactly 1, and so we simply choose a candidate
$Q_{lr}$ which maximizes $\pi_{ij}(Q_l,Q_r)$ as an approximation for $\{R_i^Tg_mR_j\}_{m=1}^4$.

However, we can obtain more robust estimates to $\{R_i^Tg_mR_j\}_{m=1}^4$ by also combining self common lines into the score~\eqref{Sec4:ScoreFuncCls}.
As was explained in Section~\ref{sec:D2 geometry}, each image $P_{R_i}$ of a $D_2$-symmetric molecule has three self common lines, given by~\eqref{Sec2:SelfClsAmb}, which are the intersections of the projection plane of $P_{R_i}$ with the projection planes of $P_{g_2R_i}$, $P_{g_3R_i}$ and $P_{g_3R_i}$. We now show how to adjust the score $\pi_{ij}(Q_l,Q_r)$ of each candidate $Q_{lr}\in\mathcal{D}_c$ to account for the self common lines of each image in the pair $P_{R_i}$ and $P_{R_j}$.

For each candidate $Q_{lr}\in\mathcal{D}_c$, we first compute the vectors
\begin{equation}\label{Sec4:SclsPhys}
\tilde{q}_{ll}^m=\frac{Q_l^{(3)}\times g_mQ_l^{(3)}}{\|Q_l^{(3)}\times g_mQ_l^{(3)}\|},\quad
\tilde{q}_{rr}^m=\frac{Q_r^{(3)}\times g_mQ_r^{(3)}}{\|Q_r^{(3)}\times g_mQ_r^{(3)}\|}, \quad m=2,3,4,
\end{equation}
analogously to~\eqref{Sec2:PhysSelfCls}. If $Q_{lr}=\{R_i^Tg_mR_j\}_{m=1}^4$, then the sets $\{\tilde{q}_{ll}^m\}_{m=2}^4$ and $\{\tilde{q}_{rr}^m\}_{m=2}^4$ correspond to the directions of the self common lines of the images $P_{R_i}$ and $P_{R_j}$, respectively (see~\eqref{Sec2:PhysSelfCls}). Next, we use $\{\tilde{q}_{ll}^m\}_{m=2}^4$ and $\{\tilde{q}_{rr}^m\}_{m=2}^4$ of~\eqref{Sec4:SclsPhys} to compute the coordinates
\begin{equation}\label{Sec4:SclsCoor}
\begin{aligned}
C(Q_l,g_mQ_l)&=\left ( \cos{\tilde{\alpha}_{ll}^{1m}} , \sin{\tilde{\alpha}_{ll}^{1m}} \right ), \quad C(g_mQ_l,Q_l)&=\left ( \cos{\tilde{\alpha}_{ll}^{m1}} ,  \sin{\tilde{\alpha}_{ll}^{m1}} \right ),\\
C(Q_r,g_mQ_r)&=\left ( \cos{\tilde{\alpha}_{rr}^{1m}} , \sin{\tilde{\alpha}_{rr}^{1m}} \right ), \quad C(g_mQ_r,Q_r)&=\left ( \cos{\tilde{\alpha}_{rr}^{m1}} , \sin{\tilde{\alpha}_{rr}^{m1}} \right ),
\end{aligned}
\end{equation}
for $m=2,3,4$, analogously to~\eqref{Sec2:SelfClsCoor}. If $Q_{lr}=\{R_i^Tg_mR_j\}_{m=1}^4$, then the coordinates in~\eqref{Sec4:SclsCoor} correspond to the local coordinates of the self common lines of $P_{R_i}$ and $P_{R_j}$, on their respective projection planes (see~\eqref{Sec2:SelfClsCoor}).
We then compute the set of normalized autocorrelations
\begin{equation}
\begin{aligned}
\rho_{ii}(\tilde{\alpha}_{ll}^{1m},\tilde{\alpha}_{ll}^{m1})&=\frac{\int_0^\infty (\nu_{i,\tilde{\alpha}_{ll}^{1m}}(\xi))^*\nu_{i,\tilde{\alpha}_{ll}^{m1}}(\xi)d\xi}
{\|\nu_{i,\tilde{\alpha}_{ll}^{1m}}(\xi)\|_{L_2}\|\nu_{i,\tilde{\alpha}_{ll}^{m1}}\|_{L_2}},\\
\rho_{jj}(\tilde{\alpha}_{rr}^{1m},\tilde{\alpha}_{rr}^{m1})&=\frac{\int_0^\infty (\nu_{j,\tilde{\alpha}_{rr}^{1m}}(\xi))^*\nu_{j,\tilde{\alpha}_{rr}^{m1}}(\xi)d\xi}
{\|\nu_{j,\tilde{\alpha}_{rr}^{1m}}(\xi)\|_{L_2}\|\nu_{j,\tilde{\alpha}_{rr}^{m1}}\|_{L_2}},
\end{aligned}
\end{equation}
for $m\in\{2,3,4\}$, and if $Q_{lr}=\{R_i^Tg_mR_j\}_{m=1}^4$ for some $Q_l,Q_r\in SO(3)$, then by~\eqref{Sec2:SelfClsAmb} we have $\prod_{m=2}^4\rho_{ii}(\tilde{\alpha}_{ll}^{1m},\tilde{\alpha}_{ll}^{m1})\rho_{jj}(\tilde{\alpha}_{rr}^{1m},\tilde{\alpha}_{rr}^{m1})=1$. We therefore redefine the score $\pi_{ij}(Q_l,Q_r)$ in~\eqref{Sec4:ScoreFuncCls} to be
\begin{equation}\label{Sec4:CombinedScorePi}
\pi_{ij}(Q_l,Q_r)=\prod_{m=1}^4\rho_{ij}(\tilde{\alpha}_{lr}^m,\tilde{\alpha}_{rl}^m)
\prod_{m=2}^4\rho_{ii}(\tilde{\alpha}_{ll}^{1m},\tilde{\alpha}_{ll}^{m1})\rho_{jj}(\tilde{\alpha}_{rr}^{1m},\tilde{\alpha}_{rr}^{m1}).
\end{equation}
For each $i<j\in[N]$, we then set
\begin{equation}\label{Sec4:MaxProb}
Q^{ij}=\argmax_{Q_{lr}\in \mathcal{D}_c}\pi_{ij}(Q_l,Q_r).
\end{equation}
to be $\{R_i^Tg_mR_j\}_{m=1}^4$.

We remark, that as was explained in Section~\ref{sec:introduction} and Section~\ref{sec:previous} (see~\eqref{Intro:Jamb}), the images $\hat{P}_{R_n}$ and $\hat{P}_{JR_nJ}$ are identical. Thus, the self common lines of each image $\hat{P}_{R_n}$ are identical to the self common lines of $\hat{P}_{JR_nJ}$, and the common lines between each pair of images $\hat{P}_{R_i}$ and $\hat{P}_{R_j}$ are identical to the common lines between $\hat{P}_{JR_iJ}$ and $\hat{P}_{JR_jJ}$. Hence, for each $i<j\in[N]$, the set $\{Q_l^Tg_mQ_r\}_{m=1}^4\in \mathcal{D}_c$ which maximizes the score $\pi_{ij}$  has the same score as $\{JQ_l^Tg_mQ_rJ\}_{m=1}^4\in \mathcal{D}_c$. Thus, in~\eqref{Sec4:MaxProb}, for each $i<j\in[N]$, we either estimate $\{R_i^Tg_mR_j\}_{m=1}^4$ or $\{JR_i^Tg_mR_jJ\}_{m=1}^4$, independently from other pairs of $i$ and $j$.

The procedure for estimating of the sets of relative rotations for each $i<j\in[N]$ is summarized in Algorithm~\ref{alg:relative rotations estimation}.
\begin{algorithm}
	\caption{$D_2$ relative rotations estimation}
	\begin{algorithmic}[1]	
		\Require{A set of images $\hat{P}_{R_1},\ldots,\hat{P}_{R_N}$}, and a discretization of $SO(3)$ $Q_1,\ldots,Q_L\in SO(3)$
		\For{$l<r\in[L]$} \Comment{Compute common lines induced by $D_c$}
			\If{$|<Q_l^{(3)},Q_r^{(3)}>|\neq1$}
				\For{$m=1 \text{ to } 4$}
					\State{$Q_{lr}^m=Q_l^Tg_mQ_r$}
					\State{$\tilde{q}_{lr}^m=\frac{Q_l^{(3)}\times  g_mQ_r^{(3)}}{\|Q_l^{(3)}\times g_mQ_r^{(3)}\|}$} \Comment{See~\eqref{Sec4:ClsPhys}}
					\State{$(\cos\tilde{\alpha}_{lr}^m,\sin\tilde{\alpha}_{lr}^m)=(<Q_l^{(1)},\tilde{q}_{lr}^m>,<Q_l^{(2)},\tilde{q}_{lr}^m>)$} \Comment{See~\eqref{Sec4:ClsCoor}}
					\State{$(\cos\tilde{\alpha}_{rl}^m,\sin\tilde{\alpha}_{rl}^m)=(<g_mQ_r^{(1)},\tilde{q}_{lr}^m>,<g_mQ_r^{(2)},\tilde{q}_{lr}^m>)$}
				\EndFor
				\EndIf
		\EndFor

		\For{$l=1 \text{ to } L$}
			\For{$m=2 \text{ to } 4$}
				\State{$\tilde{q}_{ll}^m=\frac{Q_l^{(3)}\times g_mQ_l^{(3)}}{\|Q_l^{(3)}\times g_mQ_l^{(3)}\|}$} \Comment{See~\eqref{Sec4:SclsPhys}}
				\State{$(\cos{\tilde{\alpha}_{ll}^{1m}},\sin{\tilde{\alpha}_{ll}^{1m}})=(<Q_l^{(1)},\tilde{q}_{ll}^m>,<Q_l^{(2)},\tilde{q}_{ll}^m>)$} \Comment{See~\eqref{Sec4:SclsCoor}}
				\State{$(\cos{\tilde{\alpha}_{ll}^{m1}},\sin{\tilde{\alpha}_{ll}^{m1}})=(<g_mQ_l^{(1)},\tilde{q}_{ll}^m>,<g_mQ_l^{(2)},\tilde{q}_{ll}^m>)$}
			\EndFor
		\EndFor
		
		\For{$i<j\in[N]$}
			\State{$Q^{ij}=\argmax_{l<r\in[L],|<Q_l^{(3)},Q_r^{(3)}>|\neq1}\pi_{ij}(Q_l,Q_r)$} \Comment{See~\eqref{Sec4:CombinedScorePi}}
		\EndFor	
		\Ensure{$Q^{ij}$, for all $i<j\in[N]$.}
	\end{algorithmic}
    \label{alg:relative rotations estimation}
\end{algorithm}

\section{Estimating the rotation matrices}\label{sec:rotation matrices}
In the previous section, we have shown how to estimate for each $i<j\in[N]$ either $\{R_i^Tg_mR_j\}_{m=1}^4$ or $\{JR_i^Tg_mR_jJ\}_{m=1}^4$. In Section~\ref{sec:handedness} below, we will show how to resolve the handedness ambiguity. Thus, in this section we will assume w.l.o.g that we have the sets $\{R_i^Tg_mR_j\}_{m=1}^4$ for all $i<j\in[N]$, and outline how to recover the rotations $R_i$ in~\eqref{Intro:ProjEq} from these sets.

To recover the matrices $R_i$  row by row, we use the following observation. For each $m\in\{1,2,3\}$, denote by $I_m$ the $3\!\times\!3$ diagonal matrix
\begin{equation}\label{Sec5:ImDef}
(I_m)_{ij}=\begin{cases}
1 & i=j=m,\\
0 & \text{otherwise},
\end{cases}
\end{equation}	
and note that
\begin{equation}\label{Sec5:Sumgs}
\frac	{1}{2}(g_1+g_{l})=I_{l-1},\quad l=2,3,4,
\end{equation}
where $g_l$ were defined in~\eqref{Sec1:gpMem}.
Thus, for any pair of matrices $R_i$ and $R_j$ we have that
\begin{equation}\label{Sec5:ViVj}
\frac{1}{2}(R_i^TR_j+R_i^Tg_{m+1}R_j)=R_i^T\frac{1}{2}(g_1+g_{m+1})R_j=R_i^TI_mR_j=(v_i^m)^Tv_j^m,
\end{equation}
for $m=1,2,3$, where $v_i^m$ and $v_j^m$ are the $m^{\text{th}}$ rows of the matrices $R_i$ and $R_j$, respectively.
We can also compute the matrices $(v_i^m)^Tv_i^m$ for $m\in\{1,2,3\}$ and $i\in[N]$, by noting that since $v_j^m$ are rows of orthogonal matrices we have
\[(v_i^m)^Tv_i^m=(v_i^m)^Tv_j^m(v_j^m)^Tv_i^m,\quad j\in[N]\setminus\{i\}.\]
Since in practice the matrices $(v_i^m)^Tv_j^m$ are estimated from noisy images, we get a more robust
estimate for $(v_i^m)^Tv_i^m$ by using all $j$, that is, by setting
\begin{equation*}
(v_i^m)^Tv_i^m=\frac{\sum_{j\in[N]\backslash\{i\}}(v_i^m)^Tv_j^m(v_j^m)^Tv_i}{N-1},
\end{equation*}
for $m\in\{1,2,3\}$ and $i\in[N]$.

Next, for each $m\in\{1,2,3\}$, we construct the $3N\times3N$ matrix $H_m$ whose $(i,j)^{th}$ $3\times3$ block is given by the rank~1 matrix $(v_i^m)^Tv_j^m$, and note that
\begin{equation}\label{Sec5:HmDef}
H_m=v_m^Tv_m\:,\quad v_m=(v_1^m,\ldots,v_N^m)\:,\quad m=1,2,3.
\end{equation}
That is, $H_m$, $m=1,2,3$, are rank 1 matrices. We can now factorize each matrix $H_m$ using SVD, to obtain either the vector $v_m$ or $-v_m$,
hence retrieving either the set of rows $\{v_i^m\}_{i=1}^N$ or $\{-v_i^m\}_{i=1}^N$, for each $m\in\{1,2,3\}$.
Then, we can use these sets of rows to assemble the matrices $\{OR_i\}_{i=1}^N$ row by row, where $O\in O(3)$ is a diagonal matrix with $\pm1$ on its diagonal. If $\det{O}=-1$, we simply multiply all $OR_i$ by $-1$, and thus, we can assume w.l.o.g that $O$ is a rotation. The matrix $O$ is an inherent degree of freedom, since we can always ``rotate the world'' by any orthogonal matrix.


Unfortunately, the approach just described is not directly applicable, as we now explain.
Recall from Section~\ref{sec:previous}, that though we can recover the set of relative rotation matrices $\{R_i^Tg_mR_j\}_{m=1}^4$ from the common lines of $P_{R_i}$ and $P_{R_j}$, we have no way of knowing for each $m\in\{1,2,3,4\}$ which of the recovered matrices in the latter set is $R_i^Tg_mR_j$.
This implies, that for each $i<j\in[N]$, we can only obtain a permutation $(R_i^Tg_{\tau_{ij}(m)}R_j)_{m=1}^4$ of the 4-tuple $(R_i^Tg_mR_j)_{m=1}^4$ where $\tau_{ij}\in S_4$ is some unknown permutation of $(1,2,3,4)$. In~\eqref{Sec5:ViVj}, we computed the 3-tuples $((v_i^m)^Tv_j^m)_{m=1}^3$ by summing the first element of the 4-tuple $(R_i^Tg_mR_j)_{m=1}^4$ with the rest of its elements. Suppose for example, that we have the 4-tuple $(R_i^Tg_2R_j,R_i^Tg_3R_j,R_i^Tg_1R_j,R_i^Tg_4R_j)$ for a given a pair of images $P_{R_i}$ and $P_{R_j}$. One can easily verify by direct calculation that
\begin{equation}\label{Sec5:SumgsUnord}
\frac{1}{2}(g_{m_1}+g_{m_2})=-I_{m_3-1},\quad  (m_1,m_2,m_3)=\sigma(2,3,4),\quad \sigma\in S_3,
\end{equation}
where $S_3$ is the group of all permutations of a 3-tuple.
Now, observe that by~\eqref{Sec5:Sumgs} and~\eqref{Sec5:SumgsUnord} we have
\begin{equation*}
\begin{aligned}
&(\frac{1}{2}(R_i^Tg_2R_j+R_i^Tg_3R_j),\frac{1}{2}(R_i^Tg_2R_j+R_i^Tg_1R_j),\frac{1}{2}(R_i^Tg_2R_j+R_i^Tg_4R_j))\\
&\qquad=(R_i^T\frac{1}{2}(g_2+g_3)R_j,R_i^T\frac{1}{2}(g_2+g_1)R_j,R_i^T\frac{1}{2}(g_2+g_4)R_j)\\
&\qquad=(-R_i^TI_3R_j,R_i^TI_1R_j,-R_i^TI_2R_j)=(-(v_i^3)^Tv_j^3,(v_i^1)^Tv_j^1,-(v_i^2)^Tv_j^2).
\end{aligned}
\end{equation*}
This implies that the summation in~\eqref{Sec5:ViVj} of a permutation of the 4-tuple $(R_i^Tg_mR_j)_{m=1}^4$ results in a permutation of the respective 3-tuple $((v_i^m)^Tv_j^m)_{m=1}^3$ of rank 1 matrices, where some of the matrices have a spurious $-1$ factor. The following proposition, the proof of which is given Appendix~\ref{proof:Sec5:MainProp}, summarizes the effect of the aforementioned summation on a general permutation $(R_i^Tg_{\tau(m)}R_j)_{m=1}^4$ of a 4-tuple
$(R_i^Tg_mR_j)_{m=1}^4$, that is, on the order and signs of the respective 3-tuple $((v_i^m)^Tv_j^m)_{m=1}^3$.
\begin{proposition}\label{Sec5:MainProp}
Let $(R_i^Tg_{\tau(m)}R_j)_{m=1}^4$ for some $\tau\in S_4$ be a permutation of the $4$-tuple $(R_i^Tg_mR_j)_{m=1}^4$.
\begin{enumerate}
\item If $\tau(1)=1$, then the corresponding 3-tuple of rank 1 matrices is given by $((v_i^{m_r})^Tv_j^{m_r})_{r=1}^3$, where $(m_1,m_2,m_3)=(\tau(2)-1,\tau(3)-1,\tau(4)-1)$.
\item If $\tau(m)=1$ for $m>1$, then the corresponding 3-tuple of rank 1 matrices is given by 
\begin{equation*}
\begin{cases}
((v_i^{\tau(1)-1})^Tv_j^{\tau(1)-1},-(v_i^{\tau(4)-1})^Tv_j^{\tau(4)-1},-(v_i^{\tau(3)-1})^Tv_j^{\tau(3)-1}) & m=2,\\
(-(v_i^{\tau(4)-1})^Tv_j^{\tau(4)-1},(v_i^{\tau(1)-1})^Tv_j^{\tau(1)-1},-(v_i^{\tau(2)-1})^Tv_j^{\tau(2)-1}) & m=3,\\
(-(v_i^{\tau(3)-1})^Tv_j^{\tau(3)-1},-(v_i^{\tau(2)-1})^Tv_j^{\tau(2)-1},(v_i^{\tau(1)-1})^Tv_j^{\tau(1)-1}) & m=4.
\end{cases}
\end{equation*}
\end{enumerate}
\end{proposition}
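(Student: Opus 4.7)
The plan is a direct case analysis on the position $m$ at which $\tau(m)=1$, using the two elementary identities already recorded in the excerpt: equation (Sec5:Sumgs), $\tfrac{1}{2}(g_1+g_l)=I_{l-1}$ for $l=2,3,4$, and equation (Sec5:SumgsUnord), $\tfrac{1}{2}(g_{m_1}+g_{m_2})=-I_{m_3-1}$ whenever $(m_1,m_2,m_3)$ is a permutation of $(2,3,4)$. The only additional observation needed is that for any $m\in\{1,2,3\}$, $R_i^T I_m R_j=(v_i^m)^T v_j^m$, as used in (Sec5:ViVj). Given these three facts, the proposition reduces to bookkeeping: we sum the first entry $R_i^T g_{\tau(1)} R_j$ of the permuted 4-tuple with each subsequent entry $R_i^T g_{\tau(r+1)} R_j$ for $r=1,2,3$, factor out $R_i^T$ and $R_j$, and identify the resulting inner factor as either $I_{l-1}$ or $-I_{l-1}$ for an appropriate $l\in\{2,3,4\}$.

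First, I would dispose of the case $\tau(1)=1$. Here $\tfrac12(g_{\tau(1)}+g_{\tau(r+1)})=\tfrac12(g_1+g_{\tau(r+1)})=I_{\tau(r+1)-1}$ by (Sec5:Sumgs), so the $r$-th entry of the resulting 3-tuple is $(v_i^{\tau(r+1)-1})^T v_j^{\tau(r+1)-1}$, which is exactly item (1). Next, suppose $\tau(m)=1$ for some $m\in\{2,3,4\}$. Then $\tau(1)\in\{2,3,4\}$. The summand with index $r=m-1$ involves $g_{\tau(1)}+g_1$, which by (Sec5:Sumgs) contributes $I_{\tau(1)-1}$, hence the unsigned matrix $(v_i^{\tau(1)-1})^T v_j^{\tau(1)-1}$. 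For every other $r$, both $\tau(1)$ and $\tau(r+1)$ lie in $\{2,3,4\}$ and are distinct, so by (Sec5:SumgsUnord) the corresponding sum equals $-I_{k-1}$ where $k$ is the unique element of $\{2,3,4\}\setminus\{\tau(1),\tau(r+1)\}$, producing the signed matrix $-(v_i^{k-1})^T v_j^{k-1}$.

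The only thing left is to verify that, when $\tau(m)=1$ with $m\in\{2,3,4\}$, the index $k$ extracted above matches the explicit formulas in item (2). Since $\tau$ is a bijection onto $\{1,2,3,4\}$ and $\tau(m)=1$, the three values $\tau(1),\tau(r+1),k$ are a permutation of $\{2,3,4\}$, so $k$ is precisely the image under $\tau$ of the unique index in $\{1,2,3,4\}\setminus\{1,m,r+1\}$. A quick tabulation for $m=2,3,4$ and $r=1,2,3$ then yields exactly the three lines displayed in the proposition. This tabulation is the only place where there is any risk of a slip, and it is the main obstacle in the sense that it demands care, not ingenuity; once the identities above are in hand, the rest is routine.
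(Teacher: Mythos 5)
Your proposal is correct and follows essentially the same route as the paper's proof: both reduce the claim to the identities $\tfrac12(g_1+g_l)=I_{l-1}$ and $\tfrac12(g_{m_1}+g_{m_2})=-I_{m_3-1}$, then track which index survives in each sum (the paper writes out the case $\tau(2)=1$ explicitly and declares $m=3,4$ similar, while you give the uniform rule that the negative term carries the index $k\in\{2,3,4\}\setminus\{\tau(1),\tau(r+1)\}$, which checks out against all three displayed lines). No gap.
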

Proposition~\ref{Sec5:MainProp} implies that we can only obtain the ordered triplets
\begin{equation}\label{Sec5.3Perms}
(\pm (v_i^{\sigma_{ij}(m)})^Tv_j^{\sigma_{ij}(m)})_{m=1}^3,\quad \sigma_{ij}\in S_3,\quad i<j\in [N],
\end{equation}
where the permutations $\sigma_{ij}\in S_3$ are unknown, and each of the matrices $(v_i^{\sigma_{ij}(m)})^Tv_j^{\sigma_{ij}(m)}$ is multiplied by $\pm 1$, which is also unknown and depends on $\tau_{ij}$.
Thus, we cannot construct the matrices $H_m$ in~\eqref{Sec5:HmDef} using the triplets in~\eqref{Sec5.3Perms} directly. We will show how to construct $H_m$ using the triplets~\eqref{Sec5.3Perms} in Sections~\ref{sec:rows synchronization} and~\ref{sec:signs synchronization}.

The following three sections are organized as follows. In Section~\ref{sec:handedness}, we show a method for handedness synchronization for $D_2$-symmetric molecules, which is adapted from a method proposed in~\cite{Jsync} for non-symmetric molecules. In Section~\ref{sec:rows synchronization}, we show how to partition the 3-tuples in~\eqref{Sec5.3Perms} into three sets of the form $\{s_{ij}^m(v_i^m)^Tv_j^m\}_{i<j\in[N]}$ for $m=1,2,3$ and some unknown signs $s_{ij}\in\{-1,1\}$.
Then, in Section~\ref{sec:signs synchronization} we show how to correct the signs $s_{ij}^m$ so that we can construct the $3N \times 3N$ matrices
\begin{equation}\label{Sec5:HmTildDef}
\widetilde{H}_m=(v_m)^Tv_m\:,\quad v_{m}=(s_1^mv_1^m,\ldots,s_N^mv^m_N)\:,\:m=1,2,3,
\end{equation}
where $s_i^m\in\{-1,1\}$ for $i\in[N]$ and $m\in\{1,2,3\}$. We can then factor the matrices $\widetilde{H}_m$ using SVD to obtain the vectors $v_m$ (of length $3N$), which give us the sets of rows $\{s_i^mv_i^m\}_{i=1}^N$ for $m\in\{1,2,3\}$, and assemble the matrices
\begin{equation*}
\hat{R}_i=\begin{pmatrix}
-s_i^1v_i^1-\\-s_i^2v_i^2-\\-s_i^3v_i^3-
\end{pmatrix}=
D_iR_i,\quad D_i=\textup{diag}(s_i^1,s_i^2,s_i^3),\quad i\in[N].
\end{equation*}
Since for each $i\in[N]$ either $D_i$ or $-D_i$ is in $\{g_m\}_{m=1}^4$, that is, $\hat{R}_i=\pm g_mR_i$ for some $m\in\{1,2,3,4\}$, we can compute the matrices
\begin{equation*}
\widetilde{R}_i=\begin{cases}
\:\:\;\hat{R}_i & \det(\hat{R}_i)=1,\\
-\hat{R}_i & \det(\hat{R}_i)=-1,
\end{cases}
\quad i\in[N],
\end{equation*}
by replacing all matrices $\hat{R}_i$ which have $\det(\hat{R}_i)=-1$ with $-\hat{R}_i$. The resulting set of matrices $\{\widetilde{R}_i\}_{i=1}^N$ satisfies $\widetilde{R}_i\in\{g_mR_i\}_{m=1}^4$ for all $i\in[N]$, and are therefore a solution for the orientation assignment problem which was stated at the end of Section~\ref{sec:introduction}.

\section{Handedness synchronization}\label{sec:handedness}
Following the discussion in the previous section, we now assume we have obtained a set of 4-tuples
\begin{equation}\label{Jsync:NonConSet}
\{(J^{\delta_{ij}}R_i^Tg_{\tau_{ij}(m)}R_jJ^{\delta_{ij}})_{m=1}^4\}_{i<j\in[N]},\quad \tau_{ij}\in S_4,\quad J=\text{diag}(1,1,-1),
\end{equation}
for some unknown $\delta_{ij}\in\{0,1\}$, by applying Algorithm~\ref{alg:relative rotations estimation}. We now explain how to extract one of the hand-consistent sets
\begin{equation}\label{Jsync:ConSet}
\{(R_i^Tg_{\tau_{ij}(m)}R_j)_{m=1}^4\}_{i<j\in[N]}\:\text{ or }\:\{(JR_i^Tg_{\tau_{ij}(m)}R_jJ)_{m=1}^4\}_{i<j\in[N]}
\end{equation}
from the set in~\eqref{Jsync:NonConSet}.

For all $i<j\in[N]$, we denote by
\begin{equation}\label{eq:Rij tuple}
R_{ij}\in\{(R_i^Tg_{\tau_{ij}(m)}R_j)_{m=1}^4,(JR_i^Tg_{\tau_{ij}(m)}R_jJ)_{m=1}^4\},\quad \tau_{ij}\in S_4,
\end{equation}
a 4-tuple of relative rotations consistent with a pair of images $P_{R_i}$ and $P_{R_j}$ of a $D_2$-symmetric molecule.
We also denote by $R_{ij}^m$, the $m^{th}$ relative rotation in the 4-tuple $R_{ij}$ for $m=1,2,3,4$, and by $JR_{ij}J$ the set $\{JR_{ij}^mJ\}_{m=1}^4$.
We now show how the set in~\eqref{Jsync:NonConSet} can be partitioned into two disjoint sets
\begin{equation}\label{Jsync:ClassDef}
\begin{split}
C_0&=\{R_{ij}|R_{ij}=(R_i^Tg_{\tau_{ij}(m)}R_j)_{m=1}^4\},\\
C_1&=\{R_{ij}|R_{ij}=(JR_i^Tg_{\tau_{ij}(m)}R_jJ)_{m=1}^4\}.
\end{split}
\end{equation}
Once we have the partition in~\eqref{Jsync:ClassDef}, we can compute the set of 4-tuples $\widetilde{C}_1=\{(JR_{ij}^mJ)_{m=1}^4|R_{ij}\in C_1\}$. Then, one of the hand-consistent sets in~\eqref{Jsync:ConSet} is given by $C_0\cup \widetilde{C}_1$.

The partition in~\eqref{Jsync:ClassDef} is derived by the following procedure. First, we construct a graph $\Sigma$ with vertices corresponding to the estimates $R_{ij}$ in~\eqref{eq:Rij tuple}, and with edges that encode which pairs of estimates $R_{ij}$ and $R_{kl}$ are in the same set in~\eqref{Jsync:ClassDef}, and which aren't (as will be explained shortly). Then, we derive the partition in~\eqref{Jsync:ClassDef} from the eigenvector of the leading eigenvalue of the adjacency matrix of $\Sigma$. The procedure we present is an adaptation of an algorithm that was derived in~\cite{Jsync} for non-symmetric molecules.

We next state a proposition, the proof of which is given in Appendix~\ref{proof:Jsync:MainProp}, which allows us to determine which estimates $R_{ij}$, $i<j\in[N]$, belong to the same set in~\eqref{Jsync:ClassDef}. Following the approach in~\cite{Jsync}, we look at triplets of estimates of the form $R_{ij},R_{jk},R_{ki}$ for all triplets $i<j<k\in[N]$, and determine which members of each such triplet are in the same set of~\eqref{Jsync:ClassDef} and which aren't.
\begin{proposition}\label{Jsync:MainProp}
For any $i<j<k\in[N]$ consider the triplet of estimates
\begin{equation}\label{Jsync:SameClassTrip}
R_{ij}=(R_i^Tg_{\tau_{ij}(m)}R_j)_{m=1}^4,\quad R_{jk}=(R_j^Tg_{\tau_{jk}(l)}R_k)_{l=1}^4,\quad R_{ki}=(R_k^Tg_{\tau_{ki}(r)}R_i)_{r=1}^4,
\end{equation}
that is, $R_{ij},R_{jk}$ and $R_{ki}$  are all in the same set of~\eqref{Jsync:ClassDef}. Then, exactly 16 of the $4^3$ matrix products in the set
\begin{equation}\label{Jsync:TripProds}
\{R_{ij}^mR_{jk}^rR_{ki}^l\:|\quad (m,l,r)\in\{1,2,3,4\}^3\}
\end{equation}
satisfy
\begin{equation}\label{Jsync:EstProds}
R_{ij}^mR_{jk}^lR_{ki}^r=I.
\end{equation}
\end{proposition}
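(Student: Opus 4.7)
The plan is to substitute the hand-consistent expressions for $R_{ij}$, $R_{jk}$, $R_{ki}$ directly into the product $R_{ij}^m R_{jk}^l R_{ki}^r$ and exploit cancellations of $R_j R_j^T$, $R_k R_k^T$, and $J^2$. In the case $R_{ab} = (R_a^T g_{\tau_{ab}(m)} R_b)_{m=1}^4$, the product telescopes to
\begin{equation*}
R_{ij}^m R_{jk}^l R_{ki}^r = R_i^T\, g_{\tau_{ij}(m)}\, g_{\tau_{jk}(l)}\, g_{\tau_{ki}(r)}\, R_i,
\end{equation*}
and in the $J$-conjugated case the same identity holds sandwiched between $J$ and $J$, which reduces to the same condition after using $J^2 = I$. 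Hence $R_{ij}^m R_{jk}^l R_{ki}^r = I$ if and only if
\begin{equation*}
g_{\tau_{ij}(m)}\, g_{\tau_{jk}(l)}\, g_{\tau_{ki}(r)} = I.
\end{equation*}

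The next step is to invoke the group structure of $\{g_1,g_2,g_3,g_4\}$, which is the Klein four-group: it is abelian, closed under multiplication, and every element is its own inverse. Consequently the condition above is equivalent to
\begin{equation*}
g_{\tau_{ki}(r)} = g_{\tau_{ij}(m)}\, g_{\tau_{jk}(l)},
\end{equation*}
where the right-hand side is some unique element $g_s \in \{g_1,g_2,g_3,g_4\}$ determined by $(m,l)$. Since $\tau_{ki}$ is a permutation of $(1,2,3,4)$, there is a unique $r \in \{1,2,3,4\}$ with $\tau_{ki}(r) = s$. This shows that for every one of the $16$ choices of $(m,l)$, exactly one value of $r$ produces the identity, giving exactly $16$ triples out of the $64$ in~\eqref{Jsync:TripProds}.

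The entire argument is essentially algebraic bookkeeping; there is no serious obstacle. The only thing that needs mild care is verifying that the Klein four-group structure indeed applies to the matrices $g_1,\ldots,g_4$ defined in~\eqref{Sec1:gpMem}, which follows by direct inspection (each $g_m$ is diagonal with entries $\pm 1$, $g_m^2 = I$, and $g_m g_n$ is again one of the four matrices), and that the two cases in~\eqref{eq:Rij tuple} reduce to the identical condition on the $g$'s, which uses only $J^2 = I$ and the cancellation $R_a R_a^T = I$. These two reductions form the main content of the proof, and once they are in place counting the solutions is immediate from the permutation property of $\tau_{ki}$.
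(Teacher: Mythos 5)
Your proof is correct and follows essentially the same route as the paper's: telescoping the product to $R_i^T g_{\tau_{ij}(m)} g_{\tau_{jk}(l)} g_{\tau_{ki}(r)} R_i$, reducing to the Klein four-group identity $g_{\tau_{ij}(m)} g_{\tau_{jk}(l)} = g_{\tau_{ki}(r)}$, and counting one solution $r$ per pair $(m,l)$ via closure and the permutation property. Your explicit uniqueness argument (one $r$ per $(m,l)$) and your remark on the $J$-conjugated case are slightly cleaner than the paper's ``at most 16, at least 16'' phrasing, but the content is the same.
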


Now, consider a triplet of estimates
\begin{equation}\label{Jsync:Trip}
\begin{split}
R_{ij}&\in\{(R_i^Tg_{\tau_{ij}(m)}R_j)_{m=1}^4,(JR_i^Tg_{\tau_{ij}(m)}R_jJ)_{m=1}^4\}, \\
R_{jk}&\in\{(R_j^Tg_{\tau_{jk}(l)}R_k)_{l=1}^4,(JR_j^Tg_{\tau_{jk}(l)}R_kJ)_{l=1}^4\},\\
R_{ki}&\in\{(R_k^Tg_{\tau_{ki}(r)}R_i)_{r=1}^4,(JR_k^Tg_{\tau_{ki}(r)}R_iJ)_{r=1}^4\},
\end{split}
\end{equation}
and note that since each estimate is either in the set $C_0$ or in the set $C_1$ of~\eqref{Jsync:ClassDef}, it must be that either all estimates are in the same set, or two estimates are in one set and the third estimate is in the other. We define the ``set configuration'' of a triplet $(R_{ij},R_{jk},R_{ki})$ by the row vector
\begin{equation}
d_{ijk}=\begin{cases}\label{Jsync:Config}
(0,0,0) & R_{ij},R_{jk},R_{ki}\text{ are in the same set of~\eqref{Jsync:ClassDef}},\\
(1,0,0) & R_{ij} \text{ is in a different set from } R_{jk}\text{ and }R_{ki},\\
(0,1,0) & R_{jk} \text{ is in a different set from } R_{ij}\text{ and }R_{ki},\\
(0,0,1) & R_{ki} \text{ is in a different set from } R_{ij}\text{ and }R_{jk},\\
\end{cases}
\end{equation}
and denote
\begin{equation}\label{Jsync:ConfigSet}
\mathcal{C}=\{c_0=(0,0,0),c_1=(1,0,0),c_3=(0,1,0),c_4=(0,0,1)\}.
\end{equation}
Note that if, for example, $R_{ij}$ is in one set of~\eqref{Jsync:ClassDef} and $R_{jk}$ and $R_{ki}$ are in another, then we have that $JR_{ij}J,R_{jk}$ and $R_{ki}$ are all in the same set.
We remark that we have found experimentally that whenever three estimates $R_{ij}$,$R_{jk}$ and $R_{ki}$ are not in the same set of~\eqref{Jsync:ClassDef}, then all the products in~\eqref{Jsync:EstProds} are far from $I$ in norm.
Thus, Proposition~\ref{Jsync:MainProp} suggests that we can find the set configuration of a triplet of estimates by the following procedure.
First, we compute the four sets of $4^3$ norms
\begin{equation}\label{Jsync:NormSets}
\begin{split}
&\mathcal{N}_{ijk}^0=\{\norm{R_{ij}^mR_{jk}^lR_{ki}^r-I}_F\: :\:(m,l,r)\in\{1,2,3,4\}^3\},  \\
&\mathcal{N}_{ijk}^1=\{\norm{JR_{ij}^mJR_{jk}^lR_{ki}^r-I}_F\: :\:(m,l,r)\in\{1,2,3,4\}^3\}, \\
&\mathcal{N}_{ijk}^2=\{\norm{R_{ij}^mJR_{jk}^lJR_{ki}^r-I}_F\: :\:(m,l,r)\in\{1,2,3,4\}^3\}, \\
&\mathcal{N}_{ijk}^3=\{\norm{R_{ij}^mR_{jk}^lJR_{ki}^rJ-I}_F\: :\:(m,l,r)\in\{1,2,3,4\}^3\}, \\
\end{split}
\end{equation}
where $\|\cdot\|_F$ is the Frobenius norm. Next, we sort the norms in each set $\mathcal{N}_{ijk}^0,\mathcal{N}_{ijk}^1,\mathcal{N}_{ijk}^2$ and $\mathcal{N}_{ijk}^3$ in~\eqref{Jsync:NormSets} in ascending order, and denote the resulting ascending sequences by $\mathcal{S}_{ijk}^0,\mathcal{S}_{ijk}^1,\mathcal{S}_{ijk}^2$ and $\mathcal{S}_{ijk}^3$, respectively. Finally, we compute the scores
\begin{equation}\label{Jsync:scores}
\hat{\mathcal{S}}_{ijk}^p=\sum_{n=1}^{16}(\mathcal{S}_{ijk}^p)_n,\quad  p=0,1,2,3.
\end{equation}
By Proposition~\ref{Jsync:MainProp}, there are exactly 16 norms with value 0 in the set $\mathcal{N}_{ijk}^p$ of~\eqref{Jsync:NormSets} which corresponds to the correct set configuration $d_{ijk}$ of a triplet $(R_{ij},R_{jk},R_{ki})$ (see~\eqref{Jsync:Config}). Thus,
we set $d_{ijk}=c_p$ for $p\in\{0,1,2,3\}$ such that $\hat{\mathcal{S}}_{ijk}^p$ is the minimal score in~\eqref{Jsync:scores}.

Once we have computed $d_{ijk}$ for all $i<j<k\in[N]$, we construct a graph $\Sigma$ whose vertices correspond to the estimates $R_{ij}$ in~\eqref{eq:Rij tuple}, and whose edges are defined by the $\binom{N}{2} \times \binom{N}{2}$ adjacency matrix (which we also denote by $\Sigma$)
\begin{equation}
\Sigma_{(i,j)(k,l)}=\begin{cases}
\quad 1 &\quad\text{if } |\{i,j\}\cap\{k,l\}|=1 \text{ and } R_{ij}\\
  &\quad \text{and } R_{kl} \text{ are in the same set of}~\eqref{Jsync:ClassDef},\\
\,-1 &\quad\text{if } |\{i,j\}\cap\{k,l\}|=1 \text{ and } R_{ij}\\
  &\quad\text{and } R_{kl} \text{ are in different sets of}~\eqref{Jsync:ClassDef},\\
\quad 0 &\quad\text{if } |\{i,j\}\cap\{k,l\}|\neq1.
\end{cases}
\end{equation}

Finally, we compute the eigenvector $u_s$ which corresponds to the leading eigenvalue of the matrix $\Sigma$. In~\cite{Jsync}, it is shown that the leading eigenvalue of $\Sigma$ is simple, and that $u_s$ is of the form $\{-1,1\}^{\binom{N}{2}}$ (up to normalization), where the sign of each entry encodes the set membership in~\eqref{Jsync:ClassDef} of each estimate $R_{ij}$. The procedure for handedness synchronization for $D_2$-symmetric molecules is summarized in Algorithm~\ref{alg:handedness synchronization}.
\begin{algorithm}
\caption{$D_2$ handedness synchronization}
\begin{algorithmic}[1]	
\Require{A set of $\binom{N}{2}$ 4-tuples $R_{ij}$ defined in~\eqref{eq:Rij tuple}}
\Initialize{$\binom{N}{2} \times \binom{N}{2}$ matrix $\Sigma$, with all entries set to zero}
\For{$i<j<k\in[N]$}
\For{$(m,l,r)\in\{1,2,3,4\}^3$}\Comment{See~\eqref{Jsync:NormSets}.}
\State{$\mathcal{N}_{ijk}^0(m,l,r)=\norm{R_{ij}^mR_{jk}^lR_{ki}^r-I}_F$}
\State{$\mathcal{N}_{ijk}^1(m,l,r)=\norm{JR_{ij}J^mR_{jk}^lR_{ki}^r-I}_F$}
\State{$\mathcal{N}_{ijk}^2(m,l,r)=\norm{R_{ij}^mJR_{jk}J^lR_{ki}^r-I}_F$}
\State{$\mathcal{N}_{ijk}^3(m,l,r)=\norm{R_{ij}^mR_{jk}^lJR_{ki}^rJ-I}_F$}
\EndFor
\EndFor
\For{$p=1$ to $4$}
\State{$\mathcal{S}_{ijk}^p=\textup{sort}(\mathcal{N}_{ijk}^p)$}\Comment{Sort in ascending order}
\State{$\hat{\mathcal{{S}}}_{ijk}^p=\sum_{n=1}^{16}(\mathcal{S}_{ijk}^p)_n$}
\EndFor
\For{$i<j<k\in[N]$}
\State{$m=\argmin\limits_{p\in\{0,1,2,3\}}\hat{\mathcal{S}}_{ijk}^p$}
\State{$d_{ijk}=c_m$} \Comment{See~\eqref{Jsync:Config},~\eqref{Jsync:ConfigSet}}
\State{$\Sigma_{(i,j),(j,k)}=(-1)^{\max((d_{ijk})_1,(d_{ijk})_2)}$}
\State{$\Sigma_{(j,k),(k,i)}=(-1)^{\max((d_{ijk})_1,(d_{ijk})_3)}$}
\State{$\Sigma_{(k,i),(i,j)}=(-1)^{\max((d_{ijk})_2,(d_{ijk})_3)}$}
\EndFor
\State{$\Sigma=\Sigma+\Sigma^T$}
\State{$u_s=\argmax\limits_{\|v\|=1}v^T\Sigma v$}\Comment{$u_s$ is the leading eigenvector of $\Sigma$}
\For{$i<j\in[N]$}
\If{$(u_s)_{ij}<0$}
\State{$R_{ij}=JR_{ij}J$}
\EndIf
\EndFor
\Ensure{$Rij$, for all $i<j\in[N]$.}
\end{algorithmic}
\label{alg:handedness synchronization}
\end{algorithm}

\section{Rotations' rows synchronization}\label{sec:rows synchronization}
At this point, in light of Sections~\ref{sec:rotation matrices} and~\ref{sec:handedness}, we assume that we have obtained one of the hand-consistent sets of 4-tuples in~\eqref{Jsync:ConSet}. Let us assume without loss of generality that we have the set $\{(R_i^Tg_{\tau_{ij}(m)}R_j)_{m=1}^4\}_{i<j\in[N]}$. As was explained in Section~\ref{sec:rotation matrices}, for each $i<j\in[N]$, we now form a 3-tuple of matrices by summing the first element of $(R_i^Tg_{\sigma_{ij}(m)}R_j)_{m=1}^4$ with each of the rest of its elements. By Proposition~\ref{Sec5:MainProp}, this results in a set of triplets
\begin{equation*}
\{(\pm(v_i^{\sigma_{ij}(m)})^Tv_j^{\sigma_{ij}(m)})_{m=1}^3\}_{i< j\in[N]},\quad \sigma_{ij}\in{S_3},
\end{equation*}
which was defined in~\eqref{Sec5.3Perms}, where $\sigma_{ij}$ are unknown and the $\pm1$ signs are also unknown. In this section, we will show how to partition this set of triplets into three disjoint sets
\begin{equation}\label{Clr.ClassPar}
C_m= \{s_{ij}^m(v_i^m)^Tv_j^m\}_{i<j\in[N]},\quad m\in\{1,2,3\},
\end{equation}
where $s_{ij}^m$ are the (unknown) signs of $(v_i^m)^Tv_j^m$. That is, for each $m\in\{1,2,3\}$, the set $C_m$ contains all outer products between the $m^{th}$ rows of the rotation matrices $R_i$ and $R_j$ for $i<j\in[N]$, up to sign. This partition will be obtained by casting it as a graph partitioning problem.

In Section~\ref{sec:graph partitioning}, we show how to encode the partition in~\eqref{Clr.ClassPar} as a graph in which each vertex corresponds to one of the matrices in~\eqref{Sec5.3Perms}. In Section~\ref{sec:constructing Omega}, we construct the adjacency matrix of the graph, and in Section~\ref{sec:unmixing}, we show how to extract the partition in~\eqref{Clr.ClassPar} from the leading eigenvectors of the graphs' adjacency matrix.

\subsection{Graph partitioning formulation}\label{sec:graph partitioning}
In what follows, we denote the $3 \times 3$ matrices in~\eqref{Clr.ClassPar} by
\begin{equation}\label{Clr.VertDef}
v_{ij}^m=s_{ij}^m(v_i^m)^Tv_j^m,\quad m\in\{1,2,3\}.
\end{equation}
We now construct a weighted graph $\Omega=(V,E)$ from which the partition in~\eqref{Clr.ClassPar} can be inferred. Each vertex in $V$ corresponds to one of the matrices in~\eqref{Clr.VertDef} (henceforth, we shall refer to both the matrix $s_{ij}^m(v_i^m)^Tv_j^m$ and its corresponding vertex in $V$ using the notation $v_{ij}^m$).
Thus, we have that (see~\eqref{Clr.VertDef})
\begin{equation}\label{Clr.Color.Class}
V=\cup_{m=1}^3C_m, \quad C_m=\{v_{ij}^m\}_{i<j\in[N]},\quad m=1,2,3.
\end{equation}
We define the set of weighted edges $E$ of the graph $\Omega$ by its $3\binom{N}{2} \times 3\binom{N}{2}$ adjacency matrix, which we also denote by $\Omega$, as follows
\begin{equation}\label{Clr.Omega.Def}
\Omega(v_{ij}^{m},v_{kl}^{r})=\begin{cases}
\quad 1 & |\{i,j\}\cap\{k,l\}|=1 \text{ and } m=r,\\
\, -1 & |\{i,j\}\cap\{k,l\}|=1 \text{ and } m\neq r,\\
\quad 0 & \text{otherwise}.
\end{cases}
\end{equation}
That is, we only connect by an edge vertices which have exactly one index in common. We give this edge a weight $+1$ if its incident vertices are in the same set $C_m$ of~\eqref{Clr.Color.Class}, and weight $-1$ otherwise (see Fig. \ref{fig:TriSameClass}).

Note that the weights on the edges $E$ of $\Omega$
induce a partition of the vertex set $V$ into the sets of~\eqref{Clr.Color.Class}, by grouping together vertices which are connected by edges with a weight of $+1$.
Recovering the partition in~\eqref{Clr.Color.Class} corresponds to coloring the vertices $V$ of $\Omega$ with 3 colors, say, red, green, and blue, where the vertices of $C_1$ are colored red, of $C_2$ green and of $C_3$ blue.

\begin{figure}[!tbp]
	\begin{subfigure}[b]{0.5\textwidth}
		\begin{center}
			\begin{tikzpicture}
			\path (0,0) node[circle,draw=red](x) {$v_{ij}^1$}
			(2,1)node[circle,draw=Green3](y) {$v_{jk}^2$}
			(2,-1) node[circle,draw=DodgerBlue2](z) {$v_{ki}^3$}
			(4,0) node[circle,draw=DodgerBlue2](w) {$v_{ij}^3$};
			\draw (x) -- (y) node[pos=0.5,above=0.07cm]{\textcolor{red}{-}1};
			\draw (x) -- (z) node[pos=0.5,below=0.1cm]{\textcolor{red}{-}1};
			\draw (y) -- (z) node[pos=0.5,left=0.05cm]{\textcolor{red}{-}1};
			\draw (y) -- (w) node[pos=0.5,above=0.1cm]{\textcolor{red}{-}1};
			\draw (z) -- (w) node[pos=0.5,below=0.1cm]{1};
			\end{tikzpicture}
			\caption{}
            \label{fig:nodes different classes}
		\end{center}	
	\end{subfigure}	
	\begin{subfigure}[b]{0.5\textwidth}
		\begin{center}
			\begin{tikzpicture}
			\path (0,1.5) node[circle,draw=Green3](x) {$v_{ij}^1$}
			(-1.5,0) node[circle,draw=Green3](y) {$v_{jk}^1$}
			(1.5,0) node[circle,draw=Green3](z) {$v_{ki}^1$};
			
			\draw (x) -- (y) node[pos=0.5,left=0.01cm]{1};
			\draw (y) -- (z) node[pos=0.5,below=0.01cm]{1};
			\draw (z) -- (x) node[pos=0.5,right=0.01cm]{1};
			\end{tikzpicture}
			\caption{}
            \label{fig:nodes same class}
		\end{center}
	\end{subfigure}		
	\caption{ (\subref{fig:nodes different classes}) Edges in $\Omega$. The vertices $v_{ik}^3$ and $v_{ij}^3$ are in the same set of~\eqref{Clr.Color.Class} and have the index $i$ in common. The vertices $v_{ij}^1,v_{ij}^3$ which have both indices $i$ and $j$ in common are disconnected. Vertices from different sets of~\eqref{Clr.Color.Class} with one index in common  are connected by edges with weight $-1$. (\subref{fig:nodes same class}) A triangle formed by vertices in the same set of~\eqref{Clr.Color.Class}.}
	\label{fig:TriSameClass}
\end{figure}
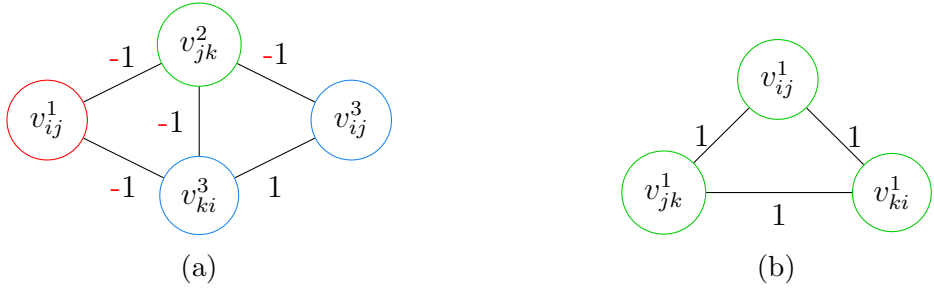

In what follows, we show that the partition of $V$ to the sets $C_m$ in~\eqref{Clr.Color.Class} can be derived from eigenvectors of the matrix $\Omega$.
\begin{definition}
	Let the eigenvalues of an $n\times n$ matrix A be $\lambda_1>\lambda_2>\ldots>\lambda_r$, with their respective multiplicities given by $n_1,\ldots,n_r$. We denote the spectrum of A by $\Lambda(A)$, and write
	\begin{equation*}
	\Lambda(A)=\begin{pmatrix}
	\lambda_1&\lambda_2&\cdots&\lambda_r\\n_1&n_2&\cdots& n_r
	\end{pmatrix}.
	\end{equation*}
\end{definition}
\begin{theorem}\label{Clr.MainThrm}
	The spectrum of the matrix $\Omega$ is given by
	\begin{equation}\label{Clr:OmegaSpec}
	\begin{pmatrix}4(N-2)& 2(N-4)& 2& -4& -(N-4)& -2(N-2)\\
	2& 2(N-1) & {\binom{N}{2}}-N& 2\left(\binom{N}{2}-N\right) & N-1&  1 \end{pmatrix}.
	\end{equation}
\end{theorem}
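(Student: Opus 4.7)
The plan is to recognize a Kronecker product structure in $\Omega$. Ordering the $3\binom{N}{2}$ vertices of $\Omega$ by pair first and color second, the definition \eqref{Clr.Omega.Def} shows that the $3\times 3$ block indexed by pairs $(i,j)$ and $(k,l)$ is a scalar multiple of a single fixed $3\times 3$ matrix: it equals $B := 2I_3 - J_3$ (where $J_3$ is the all-ones matrix) when $|\{i,j\}\cap\{k,l\}|=1$, and is the zero block otherwise (this includes the diagonal, since then $|\{i,j\}\cap\{k,l\}|=2$). Hence
\begin{equation*}
\Omega \;=\; A\otimes B,
\end{equation*}
where $A$ is the $\binom{N}{2}\times\binom{N}{2}$ adjacency matrix of the triangular graph (Johnson graph $J(N,2)$), with $A_{(ij),(kl)}=1$ iff $|\{i,j\}\cap\{k,l\}|=1$. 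Once this identification is in place, the spectrum of $\Omega$ is just the pairwise product of the spectra of $A$ and $B$ with multiplicities multiplied.

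The spectrum of $B = 2I_3-J_3$ is immediate: $J_3$ has eigenvalues $3$ (simple) and $0$ (double), so $B$ has eigenvalues $-1$ (simple) and $2$ (double).

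For the spectrum of $A$, the plan is to use the vertex-edge incidence matrix $M\in\{0,1\}^{N\times\binom{N}{2}}$ of the complete graph $K_N$, defined by $M_{i,\{k,l\}}=1$ iff $i\in\{k,l\}$. A direct computation gives $(M^TM)_{(ij),(kl)}=|\{i,j\}\cap\{k,l\}|$, so that $M^TM = A + 2I_{\binom{N}{2}}$. On the other hand, $MM^T=(N-2)I_N+J_N$, whose spectrum is $\{2(N-1)^{(1)},(N-2)^{(N-1)}\}$. Since $MM^T$ and $M^TM$ have the same nonzero spectrum with the same multiplicities, the spectrum of $M^TM$ is $\{2(N-1)^{(1)},(N-2)^{(N-1)},0^{(\binom{N}{2}-N)}\}$, and subtracting $2$ yields
\begin{equation*}
\Lambda(A)=\begin{pmatrix} 2(N-2) & N-4 & -2 \\ 1 & N-1 & \binom{N}{2}-N \end{pmatrix}.
\end{equation*}

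Finally, multiplying each eigenvalue of $A$ by each eigenvalue of $B$ and multiplying their multiplicities yields the six eigenvalues $4(N-2),\,-2(N-2),\,2(N-4),\,-(N-4),\,-4,\,2$ with multiplicities $2,\,1,\,2(N-1),\,N-1,\,2(\binom{N}{2}-N),\,\binom{N}{2}-N$, respectively, which is exactly \eqref{Clr:OmegaSpec}. The only step that requires genuine care is the initial Kronecker decomposition: one has to verify that every entry of $\Omega$—including entries inside diagonal blocks and between disjoint pairs—factors as the product of the pair-adjacency indicator ($A$) and the color-comparison value ($B$); the remaining computations are routine and use only standard facts about incidence matrices of $K_N$ and Kronecker products.
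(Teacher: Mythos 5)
Your proposal is correct, and the key structural observation is the same one the paper relies on: in a suitable ordering, $\Omega$ factors as a Kronecker product of the pair-adjacency matrix $\Sigma^{+}$ of~\eqref{Clr.MainThrm.SigPlus} (your $A$, the triangular graph $T(N)$) with the $3\times 3$ matrix $2I_{3}-J_{3}$. The paper uses the color-first ordering, writes $\Omega$ as the $3\times 3$ block matrix~\eqref{Clr.MainThrm.Om2Sig}, and then proceeds by hand: it exhibits the explicit eigenvectors $(u;u;u)$, $(u;0;-u)$, $(u;-2u;u)$ built from each eigenvector $u$ of $\Sigma^{+}$, checks they give eigenvalues $-\lambda$ and $2\lambda$, and pins down the multiplicities via the counting argument~\eqref{Clr.MainProof.SumMultLowBound}--\eqref{Clr:eigMultEquality}. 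You replace that entire construction with the standard theorem $\Lambda(A\otimes B)=\Lambda(A)\cdot\Lambda(B)$ (with multiplied multiplicities), which is cleaner and makes the $u_{1},u_{2},u_{3}$ of~\eqref{Clr.MainProof.uiDef} appear for what they are, namely $u\otimes(1,1,1)^{T}$, $u\otimes(1,0,-1)^{T}$, $u\otimes(1,-2,1)^{T}$ up to reordering. You also go one step further than the paper by deriving $\Lambda(\Sigma^{+})$ from scratch via the incidence matrix of $K_{N}$ rather than citing~\cite{Jsync}; that computation ($M^{T}M=A+2I$, $MM^{T}=(N-2)I_{N}+J_{N}$, equality of nonzero spectra) is correct and makes the proof self-contained. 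The only point worth making explicit in either argument is that reading off the multiplicities as products assumes the six values $4(N-2),\,2(N-4),\,2,\,-4,\,-(N-4),\,-2(N-2)$ are pairwise distinct, which holds for $N\geq 5$ but fails for, e.g., $N=4$; this caveat is implicit in the theorem statement itself and is shared by the paper's dimension-count, so it is not a gap specific to your approach.
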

The proof of Theorem~\ref{Clr.MainThrm} is given in Appendix~\ref{proof:Clr.MainThrm}.

\begin{definition}\label{Clr.uAlphDef}
	Define $\alpha=(2{\binom{N}{2}})^{-\frac{1}{2}}$, $\beta=(6{\binom{N}{2}})^{-\frac{1}{2}}$, and define the pair of vectors $u_{\alpha},u_{\beta}\in\mathbb{R}^{3{\binom{N}{2}}}$ by
	\begin{equation*}
	u_{\alpha}(v_{ij}^{m})=\begin{cases}
	\begin{aligned}
	\alpha& & m=1,\\
	0& & m=2,\\
	-\alpha& & m=3,
	\end{aligned}
	\end{cases}\quad
	u_{\beta}(v_{ij}^{m})=\begin{cases}
	\begin{aligned}
	\beta& & m=1,\\
	-2&\beta & m=2,\\
	\beta& & m=3,
	\end{aligned}
	\end{cases}
	\end{equation*}
where for any $w\in\mathbb{R}^{3{\binom{N}{2}}}$ we denote by $w(v_{ij}^m)$ the entry of $w$ with the same index as the row of $\Omega$ which corresponds to the vertex $v_{ij}^m$.
\end{definition}

Throughout Section~\ref{sec:rows synchronization}, for any column vector $w\in \mathbb{R}^{3{\binom{N}{2}}}$, we denote
\begin{equation}\label{Clr:w_ij}
(w)_{ij}=(w(v_{ij}^{\sigma_{ij}(1)}),w(v_{ij}^{\sigma_{ij}(2)}),w(v_{ij}^{\sigma_{ij}(3)}))^T,\quad i<j\in[N].
\end{equation}
That is, $(w)_{ij}\in \mathbb{R}^3$ is the column vector that corresponds to the entries of the triplet $(v_{ij}^{\sigma_{ij}(1)},v_{ij}^{\sigma_{ij}(2)},v_{ij}^{\sigma_{ij}(3)})$ in $w$.

\begin{proposition}\label{Clr.MainProp}
The vectors $u_{\alpha}$ and $u_{\beta}$ in Definition~\ref{Clr.uAlphDef} are orthogonal eigenvectors of $\Omega$, corresponding to the eigenvalue $\mu_c=4(N-2)$.
\end{proposition}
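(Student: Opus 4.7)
The plan is a direct calculation: evaluate $\Omega u_\alpha$ and $\Omega u_\beta$ entry by entry and check the inner product $\langle u_\alpha,u_\beta\rangle$. The key structural observation that makes the calculation short is that both candidate eigenvectors have the same crucial property: for every fixed pair $i<j\in[N]$,
\[
\sum_{m=1}^{3}u_{\alpha}(v_{ij}^{m})=\alpha+0-\alpha=0,\qquad \sum_{m=1}^{3}u_{\beta}(v_{ij}^{m})=\beta-2\beta+\beta=0.
\]
In other words, each of $u_\alpha,u_\beta$ sums to zero on every ``triangle'' $\{v_{ij}^1,v_{ij}^2,v_{ij}^3\}$.

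First I would fix a vertex $v_{ij}^m$ and describe its neighbors in $\Omega$. By~\eqref{Clr.Omega.Def}, $v_{ij}^m$ is connected to $v_{kl}^r$ exactly when $|\{i,j\}\cap\{k,l\}|=1$, and there are precisely $2(N-2)$ such pairs $\{k,l\}$ (namely, $\{i,p\}$ or $\{j,p\}$ for $p\in[N]\setminus\{i,j\}$). For each such pair, $v_{ij}^m$ is joined to all three vertices $v_{kl}^1,v_{kl}^2,v_{kl}^3$, with weight $+1$ to $v_{kl}^m$ and weight $-1$ to the other two. Thus, writing $u$ for either $u_\alpha$ or $u_\beta$,
\[
(\Omega u)(v_{ij}^{m})=\sum_{\{k,l\}:\,|\{i,j\}\cap\{k,l\}|=1}\Bigl[u(v_{kl}^{m})-\sum_{r\neq m}u(v_{kl}^{r})\Bigr].
\]

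Second, I would apply the zero-sum observation above: $\sum_{r\neq m}u(v_{kl}^{r})=-u(v_{kl}^{m})$, so the bracketed quantity equals $2u(v_{kl}^{m})$. Moreover, by Definition~\ref{Clr.uAlphDef}, $u(v_{kl}^{m})$ depends only on $m$, not on the pair $\{k,l\}$; in particular $u(v_{kl}^{m})=u(v_{ij}^{m})$. Therefore
\[
(\Omega u)(v_{ij}^{m})=2\cdot 2(N-2)\cdot u(v_{ij}^{m})=4(N-2)\,u(v_{ij}^{m}),
\]
which shows that both $u_\alpha$ and $u_\beta$ are eigenvectors of $\Omega$ with eigenvalue $\mu_c=4(N-2)$.

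Third, I would verify orthogonality by a direct entrywise sum:
\[
\langle u_{\alpha},u_{\beta}\rangle=\sum_{i<j\in[N]}\sum_{m=1}^{3}u_{\alpha}(v_{ij}^{m})u_{\beta}(v_{ij}^{m})=\sum_{i<j\in[N]}\bigl(\alpha\beta+0+(-\alpha)\beta\bigr)=0.
\]
(The normalizations $\alpha=(2\binom{N}{2})^{-1/2}$, $\beta=(6\binom{N}{2})^{-1/2}$ are chosen so that $\|u_\alpha\|=\|u_\beta\|=1$, which also follows from a one-line sum, although it is not needed for the statement of the proposition.) There is really no substantial obstacle here; the only thing to get right is the neighbor count $2(N-2)$ and the collapse identity $u(v_{kl}^{m})+\sum_{r\neq m}u(v_{kl}^{r})=0$, which together turn the action of $\Omega$ on the ``color-balanced'' vectors $u_\alpha,u_\beta$ into scalar multiplication.
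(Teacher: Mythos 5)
Your proposal is correct, but it takes a genuinely different route from the paper. The paper does not verify the eigenvector equation directly: it computes the Rayleigh quotient $u_{\alpha}^T\Omega u_{\alpha}$ by summing over triplets $i<j<k$, using the block identity $\Omega_{(i,j)(k,l)}=P_{\sigma_{ij}}\,\bigl(\begin{smallmatrix}1&-1&-1\\-1&1&-1\\-1&-1&1\end{smallmatrix}\bigr)\,P_{\sigma_{kl}}^T$ and the normalization $\alpha=(2\binom{N}{2})^{-1/2}$ to show the quotient equals $4(N-2)$, and then invokes Theorem~\ref{Clr.MainThrm} (the full spectrum of $\Omega$) to conclude that a unit vector attaining the leading eigenvalue in the Rayleigh quotient must lie in the corresponding eigenspace; the same is then asserted for $u_{\beta}$, and orthogonality is checked as you do. Your argument instead computes $(\Omega u)(v_{ij}^m)$ locally: the $2(N-2)$ neighboring pairs, the $+1/-1$ weight pattern of~\eqref{Clr.Omega.Def}, and the zero-sum property of $u_{\alpha},u_{\beta}$ on each triple $\{v_{ij}^1,v_{ij}^2,v_{ij}^3\}$ collapse the action of $\Omega$ to multiplication by $4(N-2)$. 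This is more elementary and self-contained: it needs neither the normalization of the vectors nor Theorem~\ref{Clr.MainThrm}, and it proves the exact eigenvalue equation rather than inferring eigenspace membership from extremality. What the paper's route buys is economy within its own development (the permutation-block identity and the spectrum are needed anyway for Corollary~\ref{Clr.MainCorl} and the unmixing step), whereas your route would let Proposition~\ref{Clr.MainProp} stand independently of the spectral theorem. Your neighbor count $2(N-2)$ and the observation that $u(v_{kl}^m)$ depends only on $m$ are both correct, so the calculation goes through as stated.
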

We prove Proposition~\ref{Clr.MainProp} in Appendix~\ref{proof:Clr.MainProp}. The
immediate consequence of Theorem~\ref{Clr.MainThrm} and Proposition~\ref{Clr.MainProp} is the following corollary.
\begin{corollary}\label{Clr.MainCorl}
	The eigenspace of $\mu_c$ is spanned by $u_{\alpha}$ and $u_{\beta}$.
\end{corollary}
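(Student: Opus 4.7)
The plan is to derive the corollary as an immediate consequence of combining Theorem~\ref{Clr.MainThrm} with Proposition~\ref{Clr.MainProp}, since essentially all the real work has already been done in those two results. The only substantive observation I need to add is that $\Omega$ is symmetric, so its algebraic and geometric multiplicities coincide; this is immediate from~\eqref{Clr.Omega.Def}, because the defining conditions $|\{i,j\}\cap\{k,l\}|=1$ and either $m=r$ or $m\neq r$ are all symmetric in the pair of vertices $(v_{ij}^m,v_{kl}^r)$.

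First, I would read off from~\eqref{Clr:OmegaSpec} in Theorem~\ref{Clr.MainThrm} that the eigenvalue $\mu_c=4(N-2)$ has algebraic multiplicity exactly $2$. Since $\Omega$ is real symmetric, its geometric multiplicity at $\mu_c$ is also exactly $2$, so the eigenspace $E_{\mu_c}\subseteq \mathbb{R}^{3\binom{N}{2}}$ satisfies $\dim E_{\mu_c}=2$.

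Next, by Proposition~\ref{Clr.MainProp}, both $u_\alpha$ and $u_\beta$ lie in $E_{\mu_c}$. They are orthogonal by the same proposition, and in particular neither is the zero vector (this is clear from Definition~\ref{Clr.uAlphDef}, where $\alpha,\beta>0$). Hence $\{u_\alpha,u_\beta\}$ is a linearly independent subset of a two-dimensional subspace, and therefore a basis for $E_{\mu_c}$. This gives $E_{\mu_c}=\textup{span}\{u_\alpha,u_\beta\}$, which is exactly the claim.

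There is no real obstacle here, as the proof is a one-line combination of the previously established spectral count and the explicit eigenvector construction. If there is anything subtle to guard against, it is simply the implicit appeal to symmetry of $\Omega$ (to equate algebraic and geometric multiplicities); I would state this explicitly in the first line of the proof to make the argument airtight.
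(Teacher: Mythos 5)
Your proposal is correct and follows essentially the same route as the paper, which presents the corollary as an immediate consequence of Theorem~\ref{Clr.MainThrm} (eigenvalue $\mu_c=4(N-2)$ has multiplicity $2$) together with Proposition~\ref{Clr.MainProp} (the orthogonal eigenvectors $u_\alpha,u_\beta$ lie in that eigenspace). Your explicit remark that the symmetry of $\Omega$ equates algebraic and geometric multiplicities is a sensible bit of added care, but it does not change the argument.
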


Note that $u_{\alpha}$ is a unit vector which exactly encodes the partition in~\eqref{Clr.Color.Class}, where the entries $+\alpha,0$ and $-\alpha$ encode the color of each vertex $v_{ij}^m$. The unit vector $u_\beta$ is 'color blind' in the sense that it can only distinguish between 2 colors. Obviously, in any 3-coloring of a graph we can always permute the colors, e.g., switch the color of all red vertices to green, green vertices to blue, and blue vertices to red. This is manifested in the following proposition.
\begin{proposition}
	Define the '3-color' and '2-color' vectors by
	\begin{equation}\label{Clr.Unmix.CVecs}
	u_{3c}=(\alpha,0,-\alpha)^T\;\textup{and}\; u_{2c}=(\beta,-2\beta,\beta)^T,
	\end{equation}
	respectively. For any $\sigma\in S_3$, we define the vectors $u_\alpha^\sigma,u_\beta^\sigma\in \mathbb{R}^{3{\binom{N}{2}}}$ by
	\begin{equation}\label{Clr.Unmix.uSig}
	u_\alpha^\sigma(v_{ij}^m)=\begin{cases}
	u_{3c}(\sigma(1)) & m=1,\\
	u_{3c}(\sigma(2)) & m=2,\\
	u_{3c}(\sigma(3)) & m=3,
	\end{cases}\quad
	u_\beta^\sigma(v_{ij}^m)=\begin{cases}
	u_{2c}(\sigma(1)) & m=1,\\
	u_{2c}(\sigma(2)) & m=2,\\
	u_{2c}(\sigma(3)) & m=3,
	\end{cases}
	\end{equation}
	where $u_\alpha^\sigma(v_{ij}^m)$ and $u_\beta^\sigma(v_{ij}^m)$ are the entries of $u_\alpha^\sigma$ and $u_\beta^\sigma$ with the same index as the row of $\Omega$ which corresponds to the vertex $v_{ij}^m$. Then, $u_\alpha^\sigma$ and $u_{\beta}^\sigma$ are orthogonal eigenvectors of $\Omega$ of~\eqref{Clr.Omega.Def} in the eigenspace of $\mu_c=4(N-2)$.
\end{proposition}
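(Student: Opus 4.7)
The plan is to leverage Corollary~\ref{Clr.MainCorl}: since the eigenspace of $\mu_c$ is already known to be two-dimensional and spanned by $u_\alpha$ and $u_\beta$, it suffices to show that every $u_\alpha^\sigma$ and $u_\beta^\sigma$ lies in that same subspace, after which orthogonality can be checked directly. To make this transparent, I would introduce, for $m=1,2,3$, the indicator vector $e_m\in\mathbb{R}^{3\binom{N}{2}}$ defined by $e_m(v_{ij}^{m'})=\delta_{m,m'}$. Definition~\ref{Clr.uAlphDef} then rewrites as $u_\alpha=\alpha(e_1-e_3)$ and $u_\beta=\beta(e_1-2e_2+e_3)$, while~\eqref{Clr.Unmix.uSig} gives
\[
u_\alpha^\sigma=\alpha\bigl(e_{\sigma^{-1}(1)}-e_{\sigma^{-1}(3)}\bigr),\qquad
u_\beta^\sigma=\beta\bigl(e_{\sigma^{-1}(1)}-2e_{\sigma^{-1}(2)}+e_{\sigma^{-1}(3)}\bigr).
\]

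Consider now the two-dimensional subspace $W=\{a_1 e_1+a_2 e_2+a_3 e_3:a_1+a_2+a_3=0\}$. Clearly $u_\alpha,u_\beta\in W$, and they are linearly independent (the $e_2$-coefficient of $u_\alpha$ vanishes, that of $u_\beta$ does not). Combined with Corollary~\ref{Clr.MainCorl}, this forces the eigenspace of $\mu_c$ to coincide with $W$. For any $\sigma\in S_3$, the coefficients of $u_\alpha^\sigma$ and $u_\beta^\sigma$ on $(e_1,e_2,e_3)$ are just reorderings of those of $u_\alpha$ and $u_\beta$, so they still sum to zero. Hence $u_\alpha^\sigma,u_\beta^\sigma\in W$, and they are therefore eigenvectors of $\Omega$ with eigenvalue $\mu_c=4(N-2)$.

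For orthogonality, I would use that the indicators $e_1,e_2,e_3$ have pairwise disjoint supports with $\|e_m\|^2=|C_m|=\binom{N}{2}$ for each $m$. Expanding
\[
\langle u_\alpha^\sigma,u_\beta^\sigma\rangle=\alpha\beta\,\bigl\langle e_{\sigma^{-1}(1)}-e_{\sigma^{-1}(3)},\,e_{\sigma^{-1}(1)}-2e_{\sigma^{-1}(2)}+e_{\sigma^{-1}(3)}\bigr\rangle,
\]
all cross-terms between distinct indicators vanish and the surviving diagonal contributions yield $\alpha\beta\bigl(\|e_{\sigma^{-1}(1)}\|^2-\|e_{\sigma^{-1}(3)}\|^2\bigr)=0$. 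There is no substantive obstacle here; the only care required is to choose the right coordinates so that Corollary~\ref{Clr.MainCorl} can be invoked without re-verifying $\Omega u=\mu_c u$ from scratch for each $\sigma$.
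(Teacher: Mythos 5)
Your proof is correct, but it proceeds differently from the paper. The paper simply reruns the Rayleigh--quotient computation of Proposition~\ref{Clr.MainProp} on the permuted vectors: writing $(u_\alpha^\sigma)_{ij}=P_{\sigma_{ij}}P_\sigma u_{3c}$ and using the block form~\eqref{Clr:OmegaBlock}, one again gets $(u_\alpha^\sigma)^T\Omega u_\alpha^\sigma=4(N-2)$ (and likewise for $u_\beta^\sigma$), which places both in the eigenspace of the leading eigenvalue; orthogonality is then the same permutation-invariance computation as before. You instead avoid repeating any computation with $\Omega$: you pass to the class-indicator coordinates $e_1,e_2,e_3$, observe $u_\alpha=\alpha(e_1-e_3)$ and $u_\beta=\beta(e_1-2e_2+e_3)$ span the zero-sum plane $W$, invoke Corollary~\ref{Clr.MainCorl} (eigenspace of $\mu_c$ is exactly $\operatorname{span}(u_\alpha,u_\beta)$, which by dimension count equals $W$), and note that permuting the coefficient triple preserves the zero-sum condition, so $u_\alpha^\sigma,u_\beta^\sigma\in W$; orthogonality follows from disjoint supports and $\|e_m\|^2=\binom{N}{2}$ for all $m$. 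This is a legitimate shortcut with no circularity, since the corollary is established before the proposition and does not rely on it; what it buys is conceptual economy (no re-verification of $\Omega u=\mu_c u$), at the price of leaning on the exact multiplicity statement of Theorem~\ref{Clr.MainThrm} through the corollary, whereas the paper's argument only needs $4(N-2)$ to be the largest eigenvalue. Your independence remark for $u_\alpha,u_\beta$ (via the $e_2$-coefficient) is fine, though it could equally be read off from their orthogonality in Proposition~\ref{Clr.MainProp}.
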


\begin{proof}
		Observe that $u_{\alpha}^\sigma$ is obtained from $u_{\alpha}$ of Definition \ref{Clr.uAlphDef} by replacing all $+\alpha$ entries with $u_{3c}(\sigma(1))$, all 0 entries with $u_{3c}(\sigma(2))$, and all $-\alpha$ entries with $u_{3c}(\sigma(3))$. The vector $u_{\beta}^\sigma$ is obtained from $u_{\beta}$ in a similar manner.
	The proposition then follows by repeating the method of proof applied in Proposition~\ref{Clr.MainProp} with $u_{\alpha}^\sigma$ and $u_{\beta}^\sigma$.

\end{proof}

Following Corollary~\ref{Clr.MainCorl}, we recover $u_\alpha$ (up to a color permutation, i.e, one of the vectors $u_\alpha^\sigma$ of \eqref{Clr.Unmix.uSig}) in the following manner. We begin by constructing the matrix~$\Omega$. Then, we compute a pair of orthogonal eigenvectors $v_a$ and $v_b$ spanning the eigenspace of $\mu_c$ (the leading eigenvalue of $\Omega$). In general, each of these eigenvectors is an orthogonal linear combination of $u_{\alpha}$ and $u_{\beta}$, and thus, we cannot read the partition in~\eqref{Clr.Color.Class} directly from any one of them. In Section~\ref{sec:unmixing}, we show how to 'unmix' $v_a$ and $v_b$ and retrieve $u_{\alpha}$. In practice, due to noise, we can only compute an approximation of $u_{\alpha}$, and thus, we never get the exact values $\alpha$, $0$ and $-\alpha$. We explain how to deal with this issue in Section~\ref{sec:unmixing}. In the following section, we show how to construct $\Omega$ of~\eqref{Clr.Omega.Def} using the set of matrices in~\eqref{Sec5.3Perms}.

\subsection{Constructing $\Omega$}\label{sec:constructing Omega}
We now derive a procedure for constructing the matrix $\Omega$ of~\eqref{Clr.Omega.Def}.
For any two pairs of indices $i<j\in[N]$ and $k<l\in[N]$, we denote by
\begin{equation}\label{Clr.Omega.BlockDef}
\Omega_{(i,j)(k,l)}=\begin{pmatrix}
\Omega(v_{ij}^{\sigma_{ij}(1)},v_{kl}^{\sigma_{kl}(1)}) &
\Omega(v_{ij}^{\sigma_{ij}(1)},v_{kl}^{\sigma_{kl}(2)}) &
\Omega(v_{ij}^{\sigma_{ij}(1)},v_{kl}^{\sigma_{kl}(3)})\\
\Omega(v_{ij}^{\sigma_{ij}(2)},v_{kl}^{\sigma_{kl}(1)}) &
\Omega(v_{ij}^{\sigma_{ij}(2)},v_{kl}^{\sigma_{kl}(2)}) &
\Omega(v_{ij}^{\sigma_{ij}(2)},v_{kl}^{\sigma_{kl}(3)})\\
\Omega(v_{ij}^{\sigma_{ij}(3)},v_{kl}^{\sigma_{kl}(1)}) &
\Omega(v_{ij}^{\sigma_{ij}(3)},v_{kl}^{\sigma_{kl}(2)}) &
\Omega(v_{ij}^{\sigma_{ij}(3)},v_{kl}^{\sigma_{kl}(3)})
\end{pmatrix}
\end{equation}
the $3\times3$ matrix given by the rows of $\Omega$ corresponding to the vertices $v_{ij}^{\sigma_{ij}(1)},v_{ij}^{\sigma_{ij}(2)}$ and $v_{ij}^{\sigma_{ij}(3)}$, and columns of $\Omega$ corresponding to the vertices $v_{kl}^{\sigma_{kl}(1)},v_{kl}^{\sigma_{kl}(2)}$ and $v_{kl}^{\sigma_{kl}(3)}$. By \eqref{Clr.Omega.Def} and \eqref{Clr.Omega.BlockDef}, we have
\begin{equation}\label{Clr.Omega.ZeroBlocksId}
\Omega_{(i,j),(k,l)}=0_{3\times3},\quad |\{i,j\}\cap\{k,l\}|\neq 1,
\end{equation}
where $0_{3\times3}$ is the $3 \times 3$ zero matrix. We will now show how to construct $\Omega$ block by block, by computing the blocks $\Omega_{(i,j),(k,l)}$ for which $|\{i,j\}\cap\{k,l\}|=1$.

The following lemma, the proof of which is given in Appendix~\ref{proof:Clr.Omega.IdxLemma}, characterizes the indices of the non-zero entries in $\Omega$ of~\eqref{Clr.Omega.Def}.

\begin{lemma}\label{Clr.Omega.IdxLemma}
	Define
    \begin{equation}\label{eq:def_of_A}
    	A=\{(i,j)(k,l)\;|\;|\{i,j\}\cap\{k,l\}|=1,\;i<j\in[N],\;k<l\in[N]\},\\	
    \end{equation}
    and for $i<j\in[N]$ define
    \begin{equation}\label{eq:def_of_A1_to_A4}
	\begin{alignedat}{4}
	A_{ij}^1&=&&\{(i,j)(k,j)\;|\;k<j,k\neq i\}, &\quad&  A_{ij}^2=\{(i,j)(j,k)\;|\;k>j\},\\
	A_{ij}^3&=&&\{(i,j)(k,i)\;|\;k<i\}, &\quad&	A_{ij}^4=\{(i,j)(i,k)\;|\;k>i,k\neq j\}.
	\end{alignedat}
    \end{equation}
	Moreover, for $i<j<k\in[N]$, define
	\begin{equation}\label{eq:A and Af}
	\begin{aligned}
	A_{ijk}&=\{(i,j)(j,k)\;,\;(i,j)(i,k)\;,\;(j,k)(i,k)\},\\
	A_{ijk}^f&=\{(j,k)(i,j)\;,\;(i,k)(i,j)\;,\;(i,k)(j,k)\}.
	\end{aligned}	
	\end{equation}
	Then, we have that
	\begin{equation}\label{Clr.Omega.IdxIdentity}
	A=\bigcup_{i<j\in[N]}A_{ij}^1\cup A_{ij}^2 \cup A_{ij}^3 \cup A_{ij}^4=\bigcup_{i<j<k\in[N]}A_{ijk}\cup A_{ijk}^f.
	\end{equation}
\end{lemma}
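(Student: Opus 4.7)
The plan is to prove both set equalities in~\eqref{Clr.Omega.IdxIdentity} by double inclusion, with the main work being a straightforward case analysis based on which coordinate of the first ordered index-pair coincides with which coordinate of the second.

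For the first equality, the inclusion $\bigcup_{i<j\in[N]}(A_{ij}^1\cup A_{ij}^2\cup A_{ij}^3\cup A_{ij}^4)\subseteq A$ will follow directly from the definitions in~\eqref{eq:def_of_A1_to_A4}: every element listed in any $A_{ij}^m$ is, by inspection, an ordered pair of index-pairs satisfying the ordering conventions $i<j$ and the second pair ordered increasingly, and sharing exactly one index with $(i,j)$ (namely $j$ for $A_{ij}^1,A_{ij}^2$ and $i$ for $A_{ij}^3,A_{ij}^4$); the side conditions $k\neq i$ in $A_{ij}^1$ and $k\neq j$ in $A_{ij}^4$ prevent the intersection from becoming $\{i,j\}$. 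For the reverse inclusion, I will take any $(i,j)(k,l)\in A$ and observe that, since $|\{i,j\}\cap\{k,l\}|=1$, exactly one of $j=k$, $j=l$, $i=k$, $i=l$ holds. Walking through the four cases and using the orderings $i<j$, $k<l$, I will verify that the element lands in $A_{ij}^2$, $A_{ij}^1$, $A_{ij}^4$, or $A_{ij}^3$ respectively.

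For the second equality, the inclusion $\bigcup_{i<j<k\in[N]}(A_{ijk}\cup A_{ijk}^f)\subseteq A$ is immediate: each of the six listed elements in~\eqref{eq:A and Af} manifestly has exactly one index in common and obeys the ordering convention of~\eqref{eq:def_of_A}. For the reverse inclusion, I will take $(p,q)(r,s)\in A$, form the multiset $\{p,q,r,s\}$, and note that since exactly one index is shared there are precisely three distinct indices, which I sort to obtain $i<j<k$. The shared index is one of $i$, $j$, or $k$, and for each such choice the two remaining distinct indices must appear, one in each pair, with each pair internally ordered. A quick enumeration shows there are exactly two admissible assignments per shared index (which of the two index-pairs takes the smaller non-shared index), giving six ordered pairs in total: the three members of $A_{ijk}$ together with their reversals in $A_{ijk}^f$.

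The only real subtlety — and the part that requires the most care — is keeping the side conditions $k\neq i$ and $k\neq j$ in the definitions of $A_{ij}^1$ and $A_{ij}^4$ synchronized with the ``exactly one common index'' requirement in $A$; without them those subsets would spuriously include the diagonal pair $(i,j)(i,j)$ and thereby violate $|\{i,j\}\cap\{k,l\}|=1$. Beyond this bookkeeping the lemma is a purely combinatorial re-indexing identity, and it is what will allow the block matrix $\Omega$ to be built either pair-by-pair (using the $A_{ij}^m$ decomposition) or triple-by-triple (using the $A_{ijk},A_{ijk}^f$ decomposition) in the next section.
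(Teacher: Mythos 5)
Your proposal is correct and follows essentially the same double-inclusion case analysis as the paper's proof: the first equality is handled identically, by splitting on which of $j=l$, $j=k$, $i=l$, $i=k$ holds. The only (immaterial) difference is in the reverse inclusion of the second equality, where the paper routes each element of $A$ through the sets $A_{ij}^m$ established in the first part before re-sorting indices, whereas you sort the three distinct indices directly and enumerate the six admissible ordered pairs; both arguments are equally valid.
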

By \eqref{Clr.Omega.ZeroBlocksId} and the second equality in \eqref{Clr.Omega.IdxIdentity} of Lemma \ref{Clr.Omega.IdxLemma}, to construct $\Omega$, we only need to determine the blocks
\begin{equation}\label{Clr.Omega.SixBlocks}
\Omega_{(i,j)(j,k)},\; \Omega_{(i,j)(i,k)},\;\Omega_{(j,k)(i,k)},\; \Omega_{(j,k)(i,j)},\; \Omega_{(i,k)(i,j)},\;\Omega_{(i,k)(j,k)},\quad i<j<k\in[N],
\end{equation}
which we now show how to do.

For each triplet of indices $i<j<k\in[N]$, we consider the vertices of $\Omega$ corresponding to the pairs of indices $(i,j)$, $(j,k)$, and $(i,k)$, written in the rows of the table
\begin{equation}\label{Clr.TriTable}
\begin{tabular}{|l|c|r|}
\hline
$v_{ij}^{\sigma_{ij}(1)}$ & $v_{ij}^{\sigma_{ij}(2)}$ & $v_{ij}^{\sigma_{ij}(3)}$\\ \hline
$v_{jk}^{\sigma_{jk}(1)}$ & $v_{jk}^{\sigma_{jk}(2)}$ & $v_{jk}^{\sigma_{jk}(3)}$\\ \hline $v_{ik}^{\sigma_{ik}(1)}$ & $v_{ik}^{\sigma_{ik}(2)}$ & $v_{ik}^{\sigma_{ik}(3)}$\\
\hline
\end{tabular}
\end{equation}
For each pair of vertices from different rows in \eqref{Clr.TriTable}, we need to determine whether this pair belongs to the same set $C_m$ of \eqref{Clr.Color.Class} or to different sets. This corresponds to choosing between an edge with a weight of $+1$ or $-1$ for each of these pairs in~$\Omega$. We therefore show how to determine all edge weights between the vertices in~\eqref{Clr.TriTable} simultaneously. This procedure is then repeated for each triplet of indices $i<j<k\in [N]$.

First, observe that for any pair of matrices $v_{ij}^m$ and $v_{jk}^r$, given by \eqref{Clr.VertDef}, we have
\begin{equation}\label{Clr.VijVjk}
\begin{split}
v_{ij}^mv_{jk}^r&=\pm (v_i^m)^Tv_j^m(v_j^r)^Tv_k^r\\&=\pm<v_j^m,v_j^r>(v_i^m)^Tv_k^r=
\begin{cases}
\pm (v_i^m)^Tv_k^m & m=r,\\
0 & m\neq r,
\end{cases}
\end{split}
\end{equation}
since the row vectors $v_j^m$ and $v_j^r$ are rows of the orthogonal matrix $R_j$. This suggests that for each pair of vertices $v_{ij}^{\sigma_{jk}(m)}$ and $v_{jk}^{\sigma_{ij}(r)}$ with unknown $\sigma_{ij},\sigma_{jk}\in S_3$, we can determine whether they belong to the same set of \eqref{Clr.Color.Class}, by simply computing the norm of the product of the matrices which they represent.

Since in practice we work with noisy data, we next show how to get more robust estimates for the edge weights of $\Omega$, by leveraging the graph structure of $\Omega$ in conjunction with \eqref{Clr.VijVjk}. Denote by
\begin{equation}\label{Clr.Omega.TrpId}
v_{ji}^{m}=(v_{ij}^{m})^T, \quad m=1,2,3, \quad i<j\in[N],
\end{equation}
the transposed matrices of \eqref{Clr.VertDef}.
By \eqref{Clr.VijVjk} and \eqref{Clr.Omega.TrpId}, for each triplet of matrices $v_{ij}^m,v_{jk}^r$ and $v_{ki}^p=(v_{ik}^p)^T$, $i<j<k\in[N]$, we have
\begin{equation}\label{Clr.VijVjkVki}
v_{ij}^mv_{jk}^rv_{ki}^p=\begin{cases}
\pm (v_i^m)^Tv_i^m  & m=r=p,\\
0 & \text{otherwise},
\end{cases}
\end{equation}
and by \eqref{Clr.Omega.TrpId}, we also have that
\begin{equation}\label{Clr.VijVji}
v_{ij}^mv_{ji}^m=(v_i^m)^Tv_j^m(v_j^m)^Tv_i^m=(v_i^m)^Tv_i^m,\quad m=1,2,3.
\end{equation}
Note that a non-zero product of matrices in~\eqref{Clr.VijVjkVki} corresponds to a triplet of vertices in $\Omega$ in the same set of \eqref{Clr.Color.Class} (see Fig.~\ref{fig:nodes same class}).
We now infer which vertices in~\eqref{Clr.TriTable} belong to the same set of~\eqref{Clr.Color.Class}, by constructing a function that vanishes for all vertex triplets for which~\eqref{Clr.VijVjkVki} and~\eqref{Clr.VijVji} hold, namely, for all vertex triplets that belong to the same set of~\eqref{Clr.Color.Class}. Specifically, for each triplet of indices $i<j<k\in[N]$, we minimize the function (which will be explained shortly) $f_{ijk}:S_{3} \times S_{3} \to \mathbb{R}$ given by
\begin{equation}\label{Clr.OmegaConMax}
\begin{split}
f_{ijk}(\gamma,\delta)=\sum_{m=1}^3&\|v_{ij}^{\sigma_{ij}(m)}v_{jk}^{\sigma_{jk}(\gamma(m))}v_{ki}^{\sigma_{ik}(\delta(m))}\pm v_{ij}^{\sigma_{ij}(m)}v_{ji}^{\sigma_{ij}(m)}\|,
\end{split}
\end{equation}
over all $\gamma=(\gamma_1,\gamma_2,\gamma_3)$ and $\delta=(\delta_1,\delta_2,\delta_3)$ in $S_3$, and all choices of sign $\pm 1$ between the 2 terms in each norm (since by \eqref{Clr.VijVjkVki}, the sign of the right term in each norm is unknown), independently between the norms.


The rationale of minimizing~\eqref{Clr.OmegaConMax} can be demonstrated in the following manner. Writing down the vertices as in \eqref{Clr.TriTable}, we seek to rearrange the vertices in the second and third rows so that after rearrangement, the vertices in each column are in the same set of~\eqref{Clr.Color.Class} (see example in Fig.~\ref{fig:verTables}).
Whenever we choose a pair of permutations $\gamma^*,\delta^*\in S_3$ such that the vertices in each column in
\begin{equation}\label{Clr.TriTablePerm}
\begin{tabular}{|c|c|c|}
\hline
$v_{ij}^{\sigma_{ij}(1)}$ & $v_{ij}^{\sigma_{ij}(2)}$ & $v_{ij}^{\sigma_{ij}(3)}$\\ \hline
$v_{jk}^{\sigma_{jk}(\gamma^*(1))}$ & $v_{jk}^{\sigma_{jk}(\gamma^*(2))}$ & $v_{jk}^{\sigma_{jk}(\gamma^*(3))}$\\ \hline $v_{ik}^{\sigma_{ik}(\delta^*(1))}$ & $v_{ik}^{\sigma_{ik}(\delta^*(2))}$ & $v_{ik}^{\sigma_{ik}(\delta^*(3))}$\\
\hline
\end{tabular}
\end{equation}
are in the same set of~\eqref{Clr.Color.Class}, by~\eqref{Clr.VijVjkVki} and~\eqref{Clr.VijVji}, we have that all the terms in the sum~\eqref{Clr.OmegaConMax} equal zero. Otherwise, if there exists a column in~\eqref{Clr.TriTablePerm} in which there is a pair of vertices in different sets of~\eqref{Clr.Color.Class}, then we have that~\eqref{Clr.OmegaConMax} is strictly~$>0$. For example, if in each column of~\eqref{Clr.TriTablePerm}, there is a pair vertices in different classes, then by~\eqref{Clr.VijVjkVki}, the left term inside each norm in~\eqref{Clr.OmegaConMax} equals zero, while the right term in each norm equals $(v_{ij}^{\sigma_{ij}(m)})^Tv_{ji}^{\sigma_{ij}(m)}$, and we get that $f_{ijk}=\sum_{m=1}^3\|(v_{ij}^{\sigma_{ij}(m)})^Tv_{ji}^{\sigma_{ij}(m)}\|$.

\begin{figure}[!tbp]
	\begin{subfigure}[b]{0.5\textwidth}
		\begin{center}
			\begin{tabular}{|l|c|r|}
				\hline
				$v_{ij}^{2}$ & $v_{ij}^{3}$ & $v_{ij}^{1}$\\ \hline
				$v_{jk}^{3}$ & $v_{jk}^{1}$ & $v_{jk}^{2}$\\ \hline $v_{ik}^{2}$ & $v_{ik}^{1}$ & $v_{ik}^{3}$\\
				\hline
			\end{tabular}
		\end{center}
		\caption{}
        \label{tbl:unsync rows}
	\end{subfigure}
	\begin{subfigure}[b]{0.5\textwidth}
		\begin{center}
			\begin{tabular}{|l|c|r|}
				\hline
				$v_{ij}^{2}$ & $v_{ij}^{3}$ & $v_{ij}^{1}$\\ \hline
				$v_{jk}^{2}$&
				$v_{jk}^{3}$ & $v_{jk}^{1}$  \\ \hline
				$v_{ik}^{2}$ &
				$v_{ik}^{3}$ & $v_{ik}^{1}$  \\
				\hline
			\end{tabular}
		\end{center}
		\caption{}
        \label{tbl:sync rows}
	\end{subfigure}
	\caption{(\subref{tbl:unsync rows})~Example of unsynchronized rows, where $\sigma_{ij}=(2,3,1)$, $\sigma_{jk}=(3,1,2)$, and $\sigma_{ik}=(2,1,3)$. (\subref{tbl:sync rows})~The triplets in~(\subref{tbl:unsync rows}) after rearrangement of rows 2 and 3 in~\eqref{Clr.TriTablePerm}, with $\gamma^*=(3,1,2)$ and $\delta^*=(1,3,2)$. }
	\label{fig:verTables}
\end{figure}

Once we compute for each $i<j<k\in[N]$ a pair of permutations~$\gamma^*$ and~$\delta^*$ minimizing $f_{ijk}$ in \eqref{Clr.OmegaConMax}, the matrix $\Omega$ is set block by block by computing all the blocks of \eqref{Clr.Omega.SixBlocks} in the following manner. 
We first set the first three blocks of \eqref{Clr.Omega.SixBlocks}, that is,  $\Omega_{(i,j)(j,k)}$, $\Omega_{(i,j)(i,k)}$ and $\Omega_{(j,k)(i,k)}$. Consider \eqref{Clr.TriTablePerm}, which consists of all the vertices that are incident to the edges that constitute the aforementioned blocks (see \eqref{Clr.Omega.BlockDef}). The triplet of vertices $v_{ij}^{\sigma_{ij}(m)}, v_{jk}^{\sigma_{jk}(\gamma^*(m))}$ and $v_{jk}^{\sigma_{jk}(\delta^*(m))}$, which are all in the same column of \eqref{Clr.TriTablePerm}, are in the same set $C_{\sigma_{ij}(m)}$ of \eqref{Clr.Color.Class}, $m=1,2,3$. Thus, we assign a weight +1 to the edges of $\Omega_{(i,j)(j,k)}$, $\Omega_{(i,j)(i,k)}$, and $\Omega_{(j,k)(i,k)}$, which correspond to the pairs of vertices $(v_{ij}^{\sigma_{ij}(m)},v_{jk}^{\sigma_{jk}(\gamma^*(m))})$, $(v_{ij}^{\sigma_{ij}(m)},v_{ik}^{\sigma_{ik}(\delta^*(m))})$, and
$(v_{jk}^{\sigma_{jk}(\gamma^*(m))},v_{ik}^{\sigma_{ik}(\delta^*(m))})$, for $m=1,2,3$. All the edges of $\Omega_{(i,j)(j,k)}$, $\Omega_{(i,j)(i,k)}$, and $\Omega_{(j,k)(i,k)}$ which correspond to pairs of the form  $(v_{ij}^{\sigma_{ij}(m)},v_{jk}^{\sigma_{jk}(\gamma^*(r))})$, $(v_{ij}^{\sigma_{ij}(m)},v_{ik}^{\sigma_{ik}(\delta^*(r))})$, and
$(v_{jk}^{\sigma_{jk}(\gamma^*(m))},v_{ik}^{\sigma_{ik}(\delta^*(r))})$ where $m\neq r$, are assigned a weight $-1$ since they are in a different sets of \eqref{Clr.Color.Class}. As for the last three blocks of \eqref{Clr.Omega.SixBlocks}, that is,  $\Omega_{(j,k)(i,j)}$, $\Omega_{(i,k)(i,j)}$, and $\Omega_{(i,k)(j,k)}$, by \eqref{Clr.Omega.Def} we have that $\Omega(v_{ij}^m,v_{kl}^r)=\Omega(v_{kl}^r,v_{ij}^m)$ for all $i<j\in[N]$, $k<l\in[N]$, and $m,r\in\{1,2,3\}$. Thus, by \eqref{Clr.Omega.BlockDef} we have
\begin{equation}
\begin{split}
\Omega_{(k,l)(i,j)}&=\begin{pmatrix}
\Omega(v_{kl}^{\sigma_{kl}(1)},v_{ij}^{\sigma_{ij}(1)}) &
\Omega(v_{kl}^{\sigma_{kl}(1)},v_{ij}^{\sigma_{ij}(2)}) &
\Omega(v_{kl}^{\sigma_{kl}(1)},v_{ij}^{\sigma_{ij}(3)})\\
\Omega(v_{kl}^{\sigma_{kl}(2)},v_{ij}^{\sigma_{ij}(1)}) &
\Omega(v_{kl}^{\sigma_{kl}(2)},v_{ij}^{\sigma_{ij}(2)}) &
\Omega(v_{kl}^{\sigma_{kl}(2)},v_{ij}^{\sigma_{ij}(3)})\\
\Omega(v_{kl}^{\sigma_{kl}(3)},v_{ij}^{\sigma_{ij}(1)}) &
\Omega(v_{kl}^{\sigma_{kl}(3)},v_{ij}^{\sigma_{ij}(2)}) &
\Omega(v_{kl}^{\sigma_{kl}(3)},v_{ij}^{\sigma_{ij}(3)})
\end{pmatrix}\\
&=\begin{pmatrix}
\Omega(v_{ij}^{\sigma_{ij}(1)},v_{kl}^{\sigma_{kl}(1)}) &
\Omega(v_{ij}^{\sigma_{ij}(2)},v_{kl}^{\sigma_{kl}(1)}) &
\Omega(v_{ij}^{\sigma_{ij}(3)},v_{kl}^{\sigma_{kl}(1)})\\
\Omega(v_{ij}^{\sigma_{ij}(1)},v_{kl}^{\sigma_{kl}(2)}) &
\Omega(v_{ij}^{\sigma_{ij}(2)},v_{kl}^{\sigma_{kl}(2)}) &
\Omega(v_{ij}^{\sigma_{ij}(3)},v_{kl}^{\sigma_{kl}(2)})\\
\Omega(v_{ij}^{\sigma_{ij}(1)},v_{kl}^{\sigma_{kl}(3)}) &
\Omega(v_{ij}^{\sigma_{ij}(2)},v_{kl}^{\sigma_{kl}(3)}) &
\Omega(v_{ij}^{\sigma_{ij}(3)},v_{kl}^{\sigma_{kl}(3)})
\end{pmatrix}=(\Omega_{(i,j)(k,l)})^T.
\end{split}
\end{equation}
Thus, for every $i<j<k\in[N]$ it holds that
\begin{equation}\label{Clr.Omega.BlockTrpIdentity}
\Omega_{(j,k)(i,j)}=(\Omega_{(i,j)(j,k)})^T, \quad \Omega_{(i,k)(i,j)}=(\Omega_{(i,j)(i,k)})^T, \quad \Omega_{(i,k)(j,k)}=(\Omega_{(j,k)(i,k)})^T,
\end{equation}
and thus, $\Omega_{(j,k)(i,j)}$, $\Omega_{(i,k)(i,j)}$ and $\Omega_{(i,k)(j,k)}$ can be set according to \eqref{Clr.Omega.BlockTrpIdentity} after we compute $\Omega_{(i,j)(j,k)},\Omega_{(i,j)(i,k)}$ and $\Omega_{(j,k)(i,k)}$.

The procedure for constructing $\Omega$ of \eqref{Clr.Omega.Def} is summarized in Algorithm~\ref{alg:constructing Omega}. In the next section, we turn to the task of unmixing the eigenvectors corresponding to the maximal eigenvalue of $\Omega$, in order to extract $u_{\alpha}$ of Definition~\ref{Clr.uAlphDef}.

\begin{algorithm}
\caption{Constructing $\Omega$}
\begin{algorithmic}[1]	
\Require{The set of $\binom{N}{2}$ 3-tuples $\{(v_{ij}^{\sigma_{ij}(m)}=\pm(v_i^{\sigma_{ij}(m)})^Tv_j^{\sigma_{ij}(m)})_{m=1}^3\}_{i<j\in[N]}$}
\Initialize{$3{\binom{N}{2}}\times 3{\binom{N}{2}}$ matrix $\Omega$, with all entries set to zero}
\For{$i<j<k\in[N]$}
\State{$(\gamma^*,\delta^*)=\argmin\limits_{\gamma,\delta\in S_3}f_{ijk}(\gamma,\delta)$} \Comment{See \eqref{Clr.OmegaConMax}}
\For{$m=1 \text{ to } 3$} \Comment{Set $\Omega_{(i,j)(j,k)},\Omega_{(i,j)(i,k)}$ and $\Omega_{(j,k)(i,k)}$}
\State{$\Omega(v_{ij}^{\sigma_{ij}(m)},v_{jk}^{\sigma_{jk}(\gamma^*(m))})=1$
}
\State{$\Omega(v_{ij}^{\sigma_{ij}(m)},v_{ik}^{\sigma_{ik}(\delta^*(m))})=1$}
\State{$\Omega(v_{jk}^{\sigma_{jk}(\gamma^*(m))},v_{ik}^{\sigma_{ik}(\delta^*(m))})=1$}
\EndFor
\For{$m\neq r \in \{1,2,3\}$}
\State{$\Omega(v_{ij}^{\sigma_{ij}(m)},v_{jk}^{\sigma_{jk}(\gamma^*(r))})=-1$
}
\State{$\Omega(v_{ij}^{\sigma_{ij}(r)},v_{jk}^{\sigma_{jk}(\gamma^*(m))})=-1$
}
\State{$\Omega(v_{ij}^{\sigma_{ij}(m)},v_{ik}^{\sigma_{ik}(\delta^*(r))})=-1$}
\State{$\Omega(v_{ij}^{\sigma_{ij}(r)},v_{ik}^{\sigma_{ik}(\delta^*(m))})=-1$}
\State{$\Omega(v_{jk}^{\sigma_{jk}(\gamma^*(r))},v_{ik}^{\sigma_{ik}(\delta^*(m))})=-1$}
\State{$\Omega(v_{jk}^{\sigma_{jk}(\gamma^*(m))},v_{ik}^{\sigma_{ik}(\delta^*(r))})=-1$}

\EndFor

\State{$\Omega_{(j,k)(i,j)}=(\Omega_{(i,j)(j,k)})^T$} \Comment{See \eqref{Clr.Omega.BlockTrpIdentity}}
\State{$\Omega_{(i,k)(i,j)}=(\Omega_{(i,j)(i,k)})^T$}
\State{$\Omega_{(i,k)(j,k)}=(\Omega_{(j,k)(i,k)})^T$}
\EndFor
\end{algorithmic}
\label{alg:constructing Omega}
\end{algorithm}

\subsection{Unmixing the eigenvectors of $\mu_c$}\label{sec:unmixing}
By Corollary~\ref{Clr.MainCorl}, the eigenspace of $\Omega$ of~\eqref{Clr.Omega.Def} corresponding the eigenvalue $\mu_c=4(N-2)$ is spanned by $u_{\alpha}$ and $u_{\beta}$ of Definition~\ref{Clr.uAlphDef}. However, any orthogonal linear combination of $u_{\alpha}$ and $u_{\beta}$ is also an eigenvector, and so we can only compute two orthogonal eigenvectors which are linear combinations of $u_{\alpha}$ and $u_{\beta}$. In this section, we show how to 'unmix' these linear combinations to retrieve~$u_{\alpha}$.

Suppose that we have computed a pair of orthogonal unit eigenvectors
\begin{equation}\label{Clr.Unmix.VaVb}
v_a=a_1u_{\alpha}+a_2u_{\beta},\quad v_b=b_1u_{\alpha}+b_2u_{\beta},
\end{equation}
spanning the eigenspace of $\mu_c$.
Since $v_a$ and $v_b$ are unit vectors, by Proposition~\ref{Clr.MainProp} we have
\begin{equation}\label{Clr.Unmix.VaUnit}
1=\|v_a\|^2=<a_1u_{\alpha}+a_2u_{\beta},a_1u_{\alpha}+a_2u_{\beta}>=a_1^2+a_2^2,
\end{equation}
that is, $(a_1,a_2)^T$ is also a unit vector. Similarly, we have that $\|(b_1,b_2)^T\|=1$.
Furthermore, again by Proposition~\ref{Clr.MainProp}, we have
\begin{equation}\label{Clr.Unmix.VaVbOrth}
\begin{split}
0&=<v_a,v_b>=<a_1u_{\alpha}+a_2u_{\beta},b_1u_{\alpha}+b_2u_{\beta}>\\
&=a_1b_1+a_2b_2=<(a_1,a_2)^T,(b_1,b_2)^T>,
\end{split}
\end{equation}
i.e., the coefficients vectors $(a_1,a_2)^T$ and $(b_1,b_2)^T$ are unit orthogonal vectors.
Denoting
\begin{equation}\label{Clr.Unmix.RotMat}
R(\theta)=\begin{pmatrix}
\cos\theta&-\sin\theta\\ \sin\theta&\cos\theta
\end{pmatrix},\quad R^{ref}(\theta)=\begin{pmatrix}
-\cos\theta&\sin\theta\\ -\sin\theta&\cos\theta
\end{pmatrix},\quad \theta\in[0,2\pi),
\end{equation}
by~\eqref{Clr.Unmix.VaUnit} and~\eqref{Clr.Unmix.VaVbOrth} there exists an angle $\varphi\in[0,2\pi)$ such that either
\begin{equation}\label{Clr.Unmix.ab2theta}
\begin{pmatrix}
a_1 & b_1\\ a_2 & b_2
\end{pmatrix}=R(\varphi) \text{ or }
\begin{pmatrix}
a_1 & b_1\\ a_2 & b_2
\end{pmatrix}=R^{ref}(\varphi),
\end{equation}
and by~\eqref{Clr.Unmix.VaVb} and~\eqref{Clr.Unmix.ab2theta}, we have that
\begin{equation}\label{Clr.Unmix.Rot1}
(v_a\;\;v_b)=(u_\alpha \;\; u_\beta)R(\varphi) \;\text{ or }\; (v_a\;\;v_b)=(u_\alpha \;\; u_\beta)R^{ref}(\varphi).
\end{equation}
However, it can be easily verified that
\begin{equation}
(u_\alpha \;\; u_\beta)R^{ref}(\varphi)=(-u_\alpha \;\; u_\beta)R(\varphi),
\end{equation}
and thus,~\eqref{Clr.Unmix.Rot1} can be written as
\begin{equation}\label{Clr.Unmix.Rot}
(v_a\;\;v_b)=(u_\alpha \;\; u_\beta)R(\varphi) \;\text{ or }\; (v_a\;\;v_b)=(-u_\alpha \;\; u_\beta)R(\varphi).
\end{equation}
Equation~\eqref{Clr.Unmix.Rot} suggests that we can if we can recover $\varphi$, we can unmix $v_a$ and $v_b$ and recover either $u_\alpha$ or $-u_{\alpha}$. By Definition~\ref{Clr.uAlphDef}, $-u_{\alpha}$ is obtained from $u_{\alpha}$ by switching places between all $+\alpha$ and $-\alpha$ values in $u_{\alpha}$, and so both $u_\alpha$ and $-u_{\alpha}$ encode the same partition in~\eqref{Clr.Color.Class} of the vertices of $\Omega$. We now show how to find $\varphi$.

For any angle $\theta\in[0,2\pi)$, we write
\begin{equation}\label{eq:vtheta}
(v_a^{\theta}\; v_b^{\theta})=(v_a\;v_b)R(\theta).
\end{equation}
In addition, using the notation introduced in~\eqref{Clr:w_ij},
for any vector $w\in \mathbb{R}^{3{\binom{N}{2}}}$ and for all $i<j\in[N]$, we define
\begin{equation}\label{Clr.Unmix.SortVec}
M_{ij}(w)=\max\{(w)_{ij}\}, \quad  m_{ij}(w)=\min\{(w)_{ij}\},
\end{equation}
and we define by $d_{ij}(w)$ the value of $(w)_{ij}$ whose magnitude is between $M_{ij}$ and $m_{ij}$.
Then, we define the function $f_{c}:[0,2\pi) \to \mathbb{R}$ by
\begin{multline}\label{Clr.Unmix.Max}
f_c(\theta)=\sum_{i<j\in[N]} [(M_{ij}(v_a^{\theta})+m_{ij}(v_a^{\theta}))^2+d_{ij}(v_a^{\theta})^2]+\\
[(m_{ij}(v_b^{\theta})+2M_{ij}(v_b^{\theta}))^2+(m_{ij}(v_b^{\theta})+2d_{ij}(v_b^{\theta}))^2\\+(M_{ij}(v_b^{\theta})-d_{ij}(v_b^{\theta}))^2].
\end{multline}

The following proposition, the proof of which is given in Appendix~\ref{proof:Col.Unmix.MinUniq}, states that the minimum of~\eqref{Clr.Unmix.Max} over $\theta\in [0,2\pi)$ is obtained at $\theta$ for which $v_a^{\theta}=\pm u_{\alpha}$ and $v_b^{\theta}=u_{\beta}$ (up to a permutation of the vectors $u_{\alpha}$ and $u_{\beta}$ as in \eqref{Clr.Unmix.uSig}).
\begin{proposition}\label{Col.Unmix.MinUniq}
	Out of all orthogonal pairs of unit vectors in the eigenspace of $\mu_c$,
	the 12 pairs of vectors $\{(\pm u_{\alpha}^{\sigma}, u_{\beta}^{\sigma})\;|\;\sigma\in S_3\}$ are the unique minimizers of $f_c$ in~\eqref{Clr.Unmix.Max}, up to normalization.
\end{proposition}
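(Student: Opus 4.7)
The plan is to reduce the optimization of $f_c$ to a per-triplet analysis, and then leverage the parameterization of orthonormal eigenvector pairs given in \eqref{Clr.Unmix.Rot}.

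First, observe that $f_c(\theta)$ is manifestly a sum over $i<j$ of non-negative squared terms. Since $M_{ij}\geq d_{ij}\geq m_{ij}$ by definition, the $v_a$-contribution at $(i,j)$, $(M_{ij}(v_a^\theta)+m_{ij}(v_a^\theta))^2+d_{ij}(v_a^\theta)^2$, vanishes iff the entries of $(v_a^\theta)_{ij}$ form a permutation of $(A,0,-A)$ for some $A\geq 0$. Similarly, the $v_b$-contribution vanishes iff $M_{ij}(v_b^\theta)=d_{ij}(v_b^\theta)$ and $m_{ij}(v_b^\theta)=-2M_{ij}(v_b^\theta)$, equivalently, $(v_b^\theta)_{ij}$ is a permutation of $(B,B,-2B)$ for some $B\geq 0$. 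Hence $f_c(\theta)=0$ iff these two conditions both hold at every $i<j$.

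Next, combining \eqref{Clr.Unmix.Rot} and \eqref{eq:vtheta}, any orthonormal pair $(v_a^\theta,v_b^\theta)$ in the eigenspace of $\mu_c$ can be written as $(v_a^\theta,v_b^\theta)=(\epsilon u_\alpha,u_\beta)R(\psi)$ for some sign $\epsilon\in\{\pm 1\}$ and some angle $\psi=\varphi+\theta$. Fix any $(i,j)$ and reorder its three components so that $\sigma_{ij}$ is the identity; this is harmless because both the conditions above and $f_c$ depend on each triple only through its multiset of entries, and $(u_\alpha)_{ij}$ and $(u_\beta)_{ij}$ are permuted by the \emph{same} $\sigma_{ij}$. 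Then $(u_\alpha)_{ij}=(\alpha,0,-\alpha)$, $(u_\beta)_{ij}=(\beta,-2\beta,\beta)$, and the identity $\alpha=\sqrt{3}\beta$ (from Definition~\ref{Clr.uAlphDef}) lets us expand the triples $(v_a^\theta)_{ij}$ and $(v_b^\theta)_{ij}$ explicitly in $\cos\psi$ and $\sin\psi$.

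The heart of the argument is a finite case analysis on these triples. The $v_a$-condition (one zero entry with the other two negatives of each other) confines $\psi$ to the six angles determined by $\sin\psi=0$, $\tan\psi=\sqrt{3}$, or $\tan\psi=-\sqrt{3}$, namely $\psi\in\{0,\pi/3,2\pi/3,\pi,4\pi/3,5\pi/3\}$. The $v_b$-condition ($M=d\geq 0$ and $m=-2M$) eliminates three of these: for instance, at $\psi=\pi$ the triple $(v_b^\theta)_{ij}$ is $(-\beta,2\beta,-\beta)$, giving $M=2\beta$, $d=-\beta$, so $M\neq d$ and the $v_b$-summand is strictly positive. The surviving admissible angles are $\psi\in\{0,2\pi/3,4\pi/3\}$, and since the constraints depend only on multisets, the same three angles work globally for all $(i,j)$ simultaneously.

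Finally, combining the three admissible $\psi$ with the two signs $\epsilon$ yields six minimizing orthonormal pairs. Direct substitution identifies each with a pair $(\pm u_\alpha^\sigma,u_\beta^\sigma)$ in the statement: for example, at $\psi=2\pi/3$, $\epsilon=+1$, one computes $(v_a^\theta)_{ij}=(0,-\alpha,\alpha)$ and $(v_b^\theta)_{ij}=(-2\beta,\beta,\beta)$, which is $(u_\alpha^{(2,3,1)},u_\beta^{(2,3,1)})$. The factor-of-two redundancy that brings the count from six to the stated twelve reflects the identity $(-u_\alpha^\sigma,u_\beta^\sigma)=(u_\alpha^{\tau\sigma},u_\beta^{\tau\sigma})$, where $\tau$ transposes the two $\pm\alpha$ positions (and leaves $u_\beta^\sigma$ unchanged, since it also swaps the two $\beta$ positions). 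The converse direction — that each of the 12 listed pairs zeros $f_c$ — is immediate by inspection since their triples are permutations of $(\alpha,0,-\alpha)$ and $(\beta,-2\beta,\beta)$. The main obstacle in executing this plan is organizing the $v_b$-case analysis carefully enough that the sign/ordering constraints cleanly rule out the three spurious angles $\psi\in\{\pi/3,\pi,5\pi/3\}$ which satisfy the $v_a$-condition but not the $v_b$-condition.
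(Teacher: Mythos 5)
Your proposal is correct, and it shares the paper's skeleton --- characterize $f_c(\theta)=0$ block by block through the multiset conditions on $(v_a^\theta)_{ij}$ and $(v_b^\theta)_{ij}$, then feed in the parameterization \eqref{Clr.Unmix.Rot} of orthonormal pairs in the eigenspace --- but you execute the decisive uniqueness step differently. Where the paper invokes Lemma~\ref{Clr.Unmix.MainPropLemma}, an abstract orthogonality argument in the two-dimensional zero-sum subspace $W=\{(x,y,z)^T\,:\,x+y+z=0\}$ showing that any rotation carrying $(P_\sigma u_{3c},P_\sigma u_{2c})$ into $\Gamma_{3c}\times\Gamma_{2c}$ must act as a signed permutation, and then propagates this from the $(1,2)$ block to all blocks via the permutation-matrix identities \eqref{Clr:Prop771}--\eqref{Clr.Unmix.Prop2.MainIdentity2}, you instead exploit $\alpha=\sqrt{3}\,\beta$ and enumerate explicitly: the $v_a$-condition forces $\psi\in\{0,\pi/3,2\pi/3,\pi,4\pi/3,5\pi/3\}$, the $v_b$-condition with its sign constraint eliminates $\{\pi/3,\pi,5\pi/3\}$, and the three surviving angles for each sign $\epsilon$ are matched by direct substitution to the pairs $(\pm u_\alpha^\sigma,u_\beta^\sigma)$ (your sample computations at $\psi=\pi$ and $\psi=2\pi/3$ check out). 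Your globalization step is sound because $\psi$ is a single parameter and each block equals $P_{\sigma_{ij}}$ applied to the same block-independent $3$-vector, so both the admissible-angle set and the resulting color permutation $\sigma$ are common to all $(i,j)$; it would be worth stating that sentence explicitly rather than leaving it to the ``depends only on multisets'' remark. As for what each route buys: the paper's lemma avoids trigonometry and any use of the specific ratio $\alpha/\beta$, while your enumeration is elementary and self-verifying, exhibits explicitly which angles realize which permutations, and clarifies, via $(-u_\alpha^\sigma,u_\beta^\sigma)=(u_\alpha^{\tau\sigma},u_\beta^{\tau\sigma})$, that the nominal twelve pairs are really six distinct minimizers --- a counting point the paper leaves implicit.
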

Thus, if we denote by $(v_a^{\theta_*},v_b^{\theta_*})$ a minimizer of~\eqref{Clr.Unmix.Max}, then we declare $u_\alpha$ to be $v_a^{\theta^*}$. The partition in~\eqref{Clr.Color.Class} is then read from $u_\alpha$. In practice, due to noise in the input data, $v_a^{\theta^*}$ never exactly equals $u_\alpha$, and so after we compute $v_a^{\theta^*}$, we threshold its entries according to
\begin{equation}\label{Clr.Unmix.EntryThresh}
M_{ij}(v_a^{\theta^*})=\alpha,\quad d_{ij}(v_a^{\theta^*})=0,\quad m_{ij}(v_a^{\theta^*})=-\alpha,\quad i<j\in[N].
\end{equation}
The estimation of $u_\alpha$ is summarized in Algorithm~\ref{alg:ualpha}.

\begin{algorithm}
\caption{Estimation of the eigenvector $u_\alpha$ of $\Omega$}
\begin{algorithmic}[1]	
\Require{The matrix $\Omega$ of~\eqref{Clr.Omega.Def}}
\State{$v_a=\argmax\limits_{\|v\|=1}v^T\Omega v$} \Comment{$v_a$ and $v_b$, are orthogonal eigenvectors}
\State{$v_b=\argmax\limits_{\|v\|=1,v\bot u}v^T\Omega v$} \Comment{of the largest eigenvalue of $\Omega$}
\State{$(v_a^{\theta_*}\;v_b^{\theta_*})=\argmin_{\theta \in [0,2\pi)} f_c(\theta)$} \Comment{See~\eqref{Clr.Unmix.Max}}
\For{$i<j\in[N]$}
\State{$(M_{ij}(v_a^{\theta_*}),d_{ij}(v_a^{\theta_*}),m_{ij}(v_a^{\theta_*}))=(\alpha,0,-\alpha)$}\Comment{see \eqref{Clr.Unmix.EntryThresh}}
\EndFor
\Ensure{$v_a^{\theta^*}$}
\end{algorithmic}
\label{alg:ualpha}
\end{algorithm}

\section{Signs synchronization}\label{sec:signs synchronization}
Assuming we have obtained the partition in~\eqref{Clr.ClassPar} (see also~\eqref{Clr.Color.Class}), our final task is to adjust the signs $s_{ij}^m$ in each set $C_m$ of~\eqref{Clr.ClassPar}, so that we can construct the rank~1 matrices
$\widetilde{H}_m$ of~\eqref{Sec5:HmTildDef}. As was explained in Section 5, the matrices $\widetilde{H}_m$ can then be decomposed to retrieve all the rows of the matrices $R_i$ in~\eqref{Intro:Rots}, which can then be assembled from their constituent rows. Since all assertions we derive in this section apply identically and independently to each set $C_m$ in~\eqref{Clr.ClassPar}, throughout this section we refer to a single set of rank 1 matrices
\begin{equation}\label{SignsSync:Set}
\{s_{ij}v_i^Tv_j\}_{i<j\in[N]},\quad s_{ij}\in\{-1,1\},
\end{equation}
by dropping the superscript $m$ that indicates the set $C_m$ in~\eqref{Clr.ClassPar} to which the matrices belong. Our goal is therefore to estimate $v_{1},\ldots,v_{N}$ up to an arbitrary sign. To that end, we will adjust the signs $s_{ij}$ in~\eqref{SignsSync:Set} so that the matrix $\widetilde{H}$ of size $3N \times 3N$ whose $(i,j)^{th}$ block of size $3 \times 3$ is given by~\eqref{SignsSync:Set} has rank 1. The leading eigenvector of $\widetilde{H}$ will then give the vectors $v_{1},\ldots,v_{N}$ as required. We next describe the ``signs adjustment'' procedure and the construction of $\widetilde{H}$.

The task of adjusting the signs $s_{ij}$ in~\eqref{SignsSync:Set} consists of three steps. The first step of the signs adjustment procedure is computing the matrices $v_i^Tv_i$ for all $i\in[N]$ by observing that
\begin{equation}\label{SignsSync:v_ii}
(s_{ij}v_i^Tv_j) (s_{ij}v_i^Tv_j)^{T}= (s_{ij}v_i^Tv_j)(s_{ij}v_j^Tv_i)=v_i^Tv_i, \quad j\in[N]\backslash\{i\},
\end{equation}
since $s_{ij}\in \left \{-1,1 \right \}$ and so $s_{ij}^{2}=1$. For notational convenience, for all $i\in[N]$ we write $s_{ii}v_i^Tv_i$ instead $v_i^Tv_i$, since the `sign' $s_{ii}$ of $v_i^Tv_i$ equals 1.
In principle,~\eqref{SignsSync:v_ii} allows to compute $s_{ii}v_i^Tv_i$ for each $i\in[N]$ by using a single matrix $s_{ij}v_i^Tv_j$ for some arbitrarily chosen $j\in[N]\backslash\{i\}$. However, since in practice the input data is noisy, we obtain more robust estimates for $v_i^Tv_i$ by computing the averages
\begin{equation}\label{SignsSync:vii}
s_{ii}v_i^Tv_i=\frac{\sum_{j\in[N]\backslash\{i\}}(s_{ij}v_i^Tv_j)(s_{ij}v_i^Tv_j)^{T}}{N-1}, \quad i\in[N],
\end{equation}
followed by computing the best rank 1 approximation of~\eqref{SignsSync:vii} using SVD.

While $s_{ij}$ in~\eqref{SignsSync:Set} are defined only for $i<j$, for notational convenience we define $s_{ij}=s_{ji}$ whenever $i>j$, and as explained above $s_{ii}=1$. Thus, $s_{ij}$ are defined for all $i,j \in [N]$. We next outline steps 2 and 3 of the signs adjustment procedure, before giving a detailed description of these steps.
In step 2 of the procedure, we construct $N$ rank 1 matrices $H_1^s,\ldots,H_N^s$, which admit the decompositions
\begin{equation}\label{SignsSync:Hsn}
H_n^s=(v_n^s)^Tv_n^s\:,\quad v_n^s=(s_{n1}v_1,\ldots,s_{nN}v_N)\:,\quad n\in[N],
\end{equation}
for unknown $s_{nj}\in\{-1,1\}$, $j\in[N]$. For each $n\in[N]$, the matrix $H_n^s$ in~\eqref{SignsSync:Hsn} is constructed block by block from the matrices in~\eqref{SignsSync:Set} and~\eqref{SignsSync:vii}, in such a way that each $H_n^s$ is a rank 1 matrix which admits the decomposition~\eqref{SignsSync:Hsn}. This construction relies on Proposition~\ref{SignsSync:MainProp} stated below.
Each of the matrices $H_n^s$ in~\eqref{SignsSync:Hsn} can then be decomposed to recover all the rows $v_1,\ldots,v_N$ up to signs $s_{nj}$. Thus, in theory, we could construct only one of these matrices, say $H_1^s$, and recover $\{s_{11}v_1,\ldots,s_{1N}v_N\}$, which is our goal in this section. However, in practice, the estimates $s_{ij}v_i^Tv_j$ of~\eqref{SignsSync:Set} contain errors since they were estimated from noisy images. Moreover, for each particular $n\in[N]$, the set of estimates $\{s_{nj}v_n^Tv_j\}_{j\in[N]\backslash\{n\}}$ used to
construct the matrix $H_n^s$ critically depends on the common lines of the single noisy image $P_{R_n}$ in~\eqref{Intro:ProjEq} with each of the images $P_{R_j}$ for $j\in[N]\backslash\{n\}$, which due to the noise in the input images, may be highly inaccurate, leading to large errors in $(s_{11}v_1,\ldots,s_{1N}v_N)$ of~\eqref{SignsSync:Hsn}.

Thus, to use all available data in estimating $v_{1},\ldots,v_{N}$ of~\eqref{SignsSync:Hsn}, we first decompose all matrices $H_n^s$ in~\eqref{SignsSync:Hsn}, which results in $N$ independent estimates $\{s_{nj}v_n\}_{j=1}^N$ for each row $v_n$.
We then execute the third step of our signs adjustment procedure in which we
use all estimates $v_n^{s}$ together (see~\eqref{SignsSync:Hsn}) to obtain a set of signs $\widetilde{s}_{ij}\in\{-1,1\}$, which allows us to adjust the signs $s_{ij}$ of \eqref{SignsSync:Set} by multiplying each matrix $s_{ij}v_i^Tv_j$ by $\widetilde{s}_{ij}$, such that the matrix $\widetilde{H}$ of size $3N\times 3N$ whose $(i,j)^{th}$ $3\times3$ block is $(\widetilde{s}_{ij}s_{ij})v_i^Tv_j$ has rank 1. This latter procedure exploits all available data at once, improving the robustness of the estimation of the matrices $R_{i}$ of~\eqref{Intro:ProjEq} to noisy input data. This matrix $\widetilde{H}$ admits the decomposition
\begin{equation}\label{SignsSync:HTilde}
\widetilde{H}=(v^s)^Tv^s,\quad v^s=(s_1v_1,\ldots,s_nv_n)
\end{equation}
for some unknown signs $s_n\in\{-1,+1\}$. Recalling that we dropped the index $m\in\{1,2,3\}$ from $\widetilde{H}_m$ of \eqref{Sec5:HmTildDef}, and constructed $\widetilde{H}$ from the set in \eqref{SignsSync:Set}, we see that in fact we can construct $\widetilde{H}_m$ of~\eqref{Sec5:HmTildDef} using $\{s_{ij}^mv_i^Tv_j\}_{i<j\in[N]}$ for each $m\in\{1,2,3\}$, as required.
We can then decompose each $\widetilde{H}_m$, and recover the rotation matrices $R_{i}$ of~\eqref{Intro:ProjEq}, as was explained in Section 5, which is our task in this paper.
We now complete the details of the signs adjustment procedure described above, i.e., the construction of the matrices of~\eqref{SignsSync:Hsn} and~\eqref{SignsSync:HTilde}.

The following proposition, the proof of which is given in Appendix~\ref{proof:SignsSync:MainProp}, is the basis for the construction of the matrices of~\eqref{SignsSync:Hsn}.
\begin{proposition}\label{SignsSync:MainProp}
Let $H$ be a $3N \times 3N$ matrix whose $(i,j)^{th}$ block of size $3\times3$ is given by $s_{ij}v_i^Tv_j$ of~\eqref{SignsSync:Set} if $i<j$, by its transpose if $i>j$, and by $v_i^Tv_i$ of~\eqref{SignsSync:v_ii} if $i=j$. Then, H is rank 1 iff for each $n\in[N]$
\begin{equation}\label{SignsSync:CycleId}
s_{in}s_{nj}=s_{ij},
\end{equation}
where as noted above, for $i>j$ we define $s_{ij}=s_{ji}$. Furthermore, whenever~\eqref{SignsSync:CycleId} holds, we have
\begin{equation}\label{SignsSync:HnDecomp}
	H =(v_n^s)^Tv_n^s\:,\quad v_n^s=(s_{n1}v_1,\ldots,s_{nN}v_N), \quad n\in[N].
\end{equation}
\end{proposition}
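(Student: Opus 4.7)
The proposition has two directions, and the ``if'' direction also delivers the decomposition formula as a free by-product, so I would attack it first. Assume the cycle identity $s_{in}s_{nj}=s_{ij}$ holds for all $i,j$ (with the convention $s_{ii}=1$ and $s_{ji}=s_{ij}$). Fix any $n\in[N]$ and compute the $(i,j)^{\text{th}}$ $3\times 3$ block of $(v_n^s)^T v_n^s$ directly:
\begin{equation*}
\bigl((v_n^s)^T v_n^s\bigr)_{ij} = (s_{ni}v_i)^T(s_{nj}v_j) = s_{ni}s_{nj}\,v_i^T v_j = s_{ij}\,v_i^T v_j,
\end{equation*}
which matches the prescribed $(i,j)$-block of $H$ (taking the $i=j$ case into account via $s_{ni}^2=1$ and $s_{ii}=1$). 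Since $(v_n^s)^T v_n^s$ is the outer product of a single row vector with itself, it has rank at most one, and it is nonzero as soon as some $v_i\neq 0$. Hence $H=(v_n^s)^T v_n^s$ is rank~1 and the decomposition in \eqref{SignsSync:HnDecomp} holds.

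For the ``only if'' direction, assume $H$ has rank $1$. Since $H$ is symmetric (the off-diagonal convention enforces $H_{ji\text{-block}} = (H_{ij\text{-block}})^T$) and each diagonal block $v_i^T v_i$ is a nonzero PSD rank-1 matrix, $H$ has at least one strictly positive diagonal entry. A symmetric rank-1 matrix with a positive diagonal entry is of the form $w^T w$ for a row vector $w\in\mathbb{R}^{3N}$, so write $w=(w_1,\ldots,w_N)$ with $w_i\in\mathbb{R}^{1\times 3}$. Matching diagonal blocks gives $w_i^T w_i=v_i^T v_i$; since each $v_i^T v_i$ is a rank-1 matrix uniquely determining the underlying row up to sign, this forces $w_i=\varepsilon_i v_i$ for some $\varepsilon_i\in\{-1,+1\}$. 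Matching off-diagonal blocks then yields $\varepsilon_i\varepsilon_j v_i^T v_j=s_{ij}v_i^T v_j$, and whenever $v_i^T v_j\neq 0$ this forces $s_{ij}=\varepsilon_i\varepsilon_j$. Consequently, for any $n\in[N]$,
\begin{equation*}
s_{in}s_{nj} = \varepsilon_i\varepsilon_n\cdot \varepsilon_n\varepsilon_j = \varepsilon_i\varepsilon_j = s_{ij},
\end{equation*}
which is \eqref{SignsSync:CycleId}.

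The one subtlety worth flagging is the possibility that some inner product $v_i^T v_j$ vanishes, in which case the off-diagonal block equation $\varepsilon_i\varepsilon_j v_i^T v_j = s_{ij}v_i^T v_j$ does not pin down $s_{ij}$. This is the main obstacle. I would handle it by observing that whenever $v_i^T v_j=0$ the sign $s_{ij}$ is operationally moot, because the entire block $s_{ij}v_i^T v_j$ is the zero matrix regardless of $s_{ij}$; one may therefore redefine (or simply choose) $s_{ij}:=\varepsilon_i\varepsilon_j$ in that case without altering $H$. With this convention the identity $s_{ij}=\varepsilon_i\varepsilon_j$ holds for all $i,j$, and the cycle relation follows as above. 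The remainder of the argument is routine block-matrix bookkeeping, and the decomposition \eqref{SignsSync:HnDecomp} is already produced by the ``if'' direction applied to any $n\in[N]$.
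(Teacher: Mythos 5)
Your proof is correct, and the forward (``if'') direction coincides with the paper's: a direct block computation showing $H=(v_n^s)^Tv_n^s$. For the ``only if'' direction, however, you take a genuinely different route. The paper argues by contraposition: assuming, say, $s_{12}s_{23}=-s_{13}$, it exhibits two explicit rows of $H$ (row $r$ in the first block-row and row $3+l$ in the second, where $v_1(r),v_2(l)\neq 0$) which, after rescaling, differ in the sign of the $v_3$-component and are therefore linearly independent, giving $\operatorname{rank}(H)\geq 2$. You instead work structurally: from $\operatorname{rank}(H)=1$, symmetry, and the positive diagonal you factor $H=w^Tw$, match diagonal blocks to get $w_i=\varepsilon_i v_i$, match off-diagonal blocks to get $s_{ij}=\varepsilon_i\varepsilon_j$, and read off the cycle identity $s_{in}s_{nj}=s_{ij}$ as a coboundary relation. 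Your argument is arguably more conceptual (it explains \emph{why} rank~1 forces consistency, by producing the global signs $\varepsilon_i$ directly), whereas the paper's is more elementary and needs only one violated triple; both are complete.

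One remark on the ``subtlety'' you flag: it is vacuous here. The blocks $v_i^Tv_j$ are \emph{outer} products of the rows $v_i,v_j$, which are unit rows of rotation matrices, so $\|v_i^Tv_j\|_F=\|v_i\|\,\|v_j\|=1\neq 0$ for every pair $i,j$; the off-diagonal equation therefore always pins down $s_{ij}=\varepsilon_i\varepsilon_j$. This is also implicitly used by the paper (``the rank~1 matrix $v_1^Tv_2$ is non-zero''). Had the degenerate case been possible, your proposed fix of redefining $s_{ij}$ would not literally establish the stated identity for the given signs, so it is fortunate (and worth stating explicitly) that the case cannot occur.
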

Now, suppose we wish to construct $H_1^s$ of~\eqref{SignsSync:Hsn}. Proposition~\ref{SignsSync:MainProp}, and in particular~\eqref{SignsSync:CycleId}, suggest that we can construct $H_1^s$ using the set in~\eqref{SignsSync:Set}, by applying the following sign correction procedure. Recall that $v_i$ are rows of $3 \times 3$ orthogonal matrices, and thus $v_iv_i^T=1$ for all $i\in[N]$. For each pair $1<i<j\in[N]$, we compute the norm
\begin{equation}\label{SignsSync:SignIndNorm}
\|(s_{i1}v_i^Tv_1)(s_{1j}v_1^Tv_j)-s_{ij}v_i^Tv_j\|_F=\sqrt{|s_{i1}s_{1j}-s_{ij}|}\cdot\|v_i^Tv_j\|_F,
\end{equation}
and replace the matrix $s_{ij}v_i^Tv_j$ with $-s_{ij}v_i^Tv_j$ if~\eqref{SignsSync:SignIndNorm} is greater than zero. Let $\{\hat{s}_{ij}v_i^Tv_j\}_{i<j\in[N]}$ be the resulting set of rank 1 matrices after this signs correction procedure. By construction, it holds that
\begin{equation}\label{SignsSync:H1sCycle}
\hat{s}_{ij}=s_{i1}s_{1j}\:,\quad1<i<j\in[N].
\end{equation}
Now, let $H_1^s$ be the $3N\times3N$ matrix whose $(i,j)^{th}$ block of size $3\times3$ is given by $\hat{s}_{ij}v_i^Tv_j$. By~\eqref{SignsSync:H1sCycle} and Proposition~\ref{SignsSync:MainProp}, $H_1^s$ admits the decomposition
\begin{equation*}
	H_1^s=(v_1^s)^Tv_1^s,\quad v_1^s=(s_{11}v_1,\ldots,s_{1N}v_N).
 \end{equation*}
The matrices $H_2^s,\ldots,H_N^s$ are obtained in a similar manner.
In practice,~\eqref{SignsSync:SignIndNorm} is never exactly zero due to errors stemming from noise, and thus we also compute $\|(s_{i1}v_i^Tv_1)(s_{1j}v_1^Tv_j)+s_{ij}v_i^Tv_j\|_F$, and replace
$s_{ij}v_i^Tv_j$ with $-s_{ij}v_j^Tv_j$,  if
\begin{equation*}
\|(s_{i1}v_i^Tv_1)(s_{1j}v_1^Tv_j)-s_{ij}v_i^Tv_j\|_F>\|(s_{i1}v_i^Tv_1)(s_{1j}v_1^Tv_j)+s_{ij}v_i^Tv_j\|_F.
\end{equation*}

At this point, we can factor each of the matrices $\{H_n^s\}_{n=1}^N$ (e.g., using SVD), and obtain the set of vectors
\begin{equation}\label{SignsSync:SnVsn}
\hat{v}_n^s=s_n(s_{n1}v_1,\ldots,s_{nN}v_N)^T,\quad n\in[N],
\end{equation}
of~\eqref{SignsSync:Hsn}, where $s_n\in\{-1,+1\}$ are unknown. For each $i \in [N]$ we have
\begin{equation}\label{SignsSync:rowNotation}
(\hat{v}_{n}^s)_i=s_ns_{ni}v_i,\quad n\in[N],
\end{equation}
that is, we have $N$ estimates $\{s_ns_{ni}v_i\}_{n\in[N]}$ for the row $v_i$, where each estimate has an unknown sign $s_ns_{ni}$.
This concludes step 2 of the signs adjustment procedure outlined above.

For the third and final step of the signs adjustment procedure,
we define
\begin{equation}\label{SignsSync:s_ijTildeDef}
\widetilde{s}_{ij}=s_is_js_{ij},\quad i,j\in[N].
\end{equation}
Since $v_1,\ldots,v_N$ are unit row vectors, by~\eqref{SignsSync:rowNotation} and~\eqref{SignsSync:s_ijTildeDef}, we have
\begin{equation}\label{SignsSync:SijSjkTild}
\widetilde{s}_{ij}\widetilde{s}_{jk}=s_is_js_{ij}s_js_ks_{jk}=s_is_ks_{ij}s_{jk}=s_is_{ij}v_j(s_ks_{kj}v_j)^T=(\hat{v}_{i}^s)_j(\hat{v}_{k}^s)_j^T,
\end{equation}
for all $i\neq j\neq k\in[N]$. Thus, we can obtain all the products $\widetilde{s}_{ij}\widetilde{s}_{jk}$ in~\eqref{SignsSync:SijSjkTild} by taking dot products of the vectors in~\eqref{SignsSync:rowNotation}.

Now, suppose we computed the set $\{\widetilde{s}_{ij}\}_{i<j\in[N]}$ from the values $\widetilde{s}_{ij}\widetilde{s}_{jk}$ in~\eqref{SignsSync:SijSjkTild} (as will be explained shortly). Since $\widetilde{s}_{ij}=s_is_js_{ij}$, we have that
\begin{equation}\label{SignsSync:s_ijTildBys_ij}
\widetilde{s}_{ij}s_{ij}=s_is_j, \quad i<j\in[N].
\end{equation}
Thus, we can multiply each matrix $s_{ij}(v_i)^Tv_j$ in~\eqref{SignsSync:Set} by $\widetilde{s}_{ij}$, and obtain the set of matrices
$\{s_is_jv_i^Tv_j\}_{i<j\in[N]}$, and together with \eqref{SignsSync:vii}, we can construct  the $3N\times3N$ matrix $\widetilde{H}$, whose $(i,j)^{th}$ block of size $3\times3$ is given by
\begin{equation}\label{SignsSync:HTildeBlock}
(\widetilde{H})_{ij}=s_is_jv_i^Tv_j, \quad i,j\in[N].
\end{equation}
Then, $\widetilde{H}$ admits the decomposition in~\eqref{SignsSync:HTilde}, as required.
Thus, it only remains to show how to extract the set
$\{\widetilde{s}_{ij}\}_{i<j\in[N]}$ from the values $\{\widetilde{s}_{ij}\widetilde{s}_{jk}\}_{i\neq j\neq k\in[N]}$ in~\eqref{SignsSync:SijSjkTild}.

Let us define the ${\binom{N}{2}}\times{\binom{N}{2}}$ matrix
\begin{equation}\label{SignsSync:SignsSyncMat}
(S)_{(i,j)(k,l)}=
\begin{cases}
\widetilde{s}_{ij}\widetilde{s}_{kl} & |\{i,j\}\cap\{k,l\}|=1,\\
0 & \text{otherwise},
\end{cases}
\end{equation}
where $i<j\in[N]$, $k<l\in[N]$, and the products $\{\widetilde{s}_{ij}\widetilde{s}_{jk}\}_{i\neq j\neq k\in[N]}$ are computed using~\eqref{SignsSync:SijSjkTild}. The following proposition, the proof of which is given in Appendix~\ref{proof:SignsSync:SignsProp}, shows that the signs $\widetilde{s}_{ij}$ can be extracted from $S$ in~\eqref{SignsSync:SignsSyncMat}.
\begin{proposition}\label{SignsSync:SignsProp}
The leading eigenvalue of $S$ in~\eqref{SignsSync:SignsSyncMat} is $2(N-2)$ and it is simple. Moreover, define $u_s=(\widetilde{s}_{ij})_{i<j\in[N]}$ to be the vector of length $\binom{N}{2}$ with entries $\widetilde{s}_{ij}$. Then, $u_s$ is an eigenvector of $S$ corresponding to the eigenvalue $2(N-2)$.
\end{proposition}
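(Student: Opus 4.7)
The plan is to recognize $S$ as a signed version of the adjacency matrix of the Johnson (triangular) graph $J(N,2)$ and then to read off both the spectrum and the desired eigenvector from this factorization. Let $G$ be the $\binom{N}{2}\times\binom{N}{2}$ matrix with $G_{(i,j)(k,l)}=1$ when $|\{i,j\}\cap\{k,l\}|=1$ and $G_{(i,j)(k,l)}=0$ otherwise, and let $D$ be the diagonal matrix with $D_{(i,j)(i,j)}=\widetilde{s}_{ij}$. Comparing with \eqref{SignsSync:SignsSyncMat} gives immediately $S=DGD$. Since $\widetilde{s}_{ij}\in\{-1,+1\}$, we have $D^2=I$, so $D$ is an orthogonal involution and $S$ is orthogonally similar to $G$; in particular, $S$ and $G$ have the same spectrum.

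Next I would analyze $G$ directly. For each fixed pair $\{i,j\}$, the number of pairs $\{k,l\}$ with $|\{i,j\}\cap\{k,l\}|=1$ equals $2(N-2)$, so $G$ is the adjacency matrix of a $2(N-2)$-regular graph; this graph is also clearly connected, since any two pairs sharing no index can be joined by a length-two path through a pair formed from one element of each. By the Perron--Frobenius theorem applied to a connected regular graph, the largest eigenvalue of $G$ equals its degree $2(N-2)$, it is simple, and the associated eigenvector is the all-ones vector $\mathbf{1}\in\mathbb{R}^{\binom{N}{2}}$. Consequently, $2(N-2)$ is also the simple leading eigenvalue of $S$.

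Finally, I would identify $u_s$ as the corresponding eigenvector of $S$ by transporting $\mathbf{1}$ through the similarity: since $S(D\mathbf{1})=DGD(D\mathbf{1})=DG\mathbf{1}=2(N-2)\,D\mathbf{1}$ and $(D\mathbf{1})_{(i,j)}=\widetilde{s}_{ij}$, we obtain $u_s=D\mathbf{1}$. Equivalently, a one-line direct check gives
\[
(Su_s)_{(i,j)}=\!\!\!\sum_{\substack{k<l \\ |\{i,j\}\cap\{k,l\}|=1}}\!\!\!\widetilde{s}_{ij}\widetilde{s}_{kl}\cdot\widetilde{s}_{kl}=\widetilde{s}_{ij}\cdot 2(N-2),
\]
using $\widetilde{s}_{kl}^2=1$ and the vertex-degree count above. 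The only subtle step in the plan is the simplicity claim, but it reduces cleanly to the standard fact that a connected regular graph has a simple Perron eigenvalue equal to its degree; alternatively, one can cite the known spectrum of the Johnson graph $J(N,2)$, namely $\{2(N-2),\,N-4,\,-2\}$ with multiplicities $\{1,\,N-1,\,\binom{N}{2}-N\}$, to obtain the same conclusion.
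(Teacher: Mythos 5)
Your proposal is correct, and for the simplicity claim it takes a genuinely different (and more self-contained) route than the paper. The paper proves the eigenvector claim exactly as in your ``one-line direct check'': it expands $(Su_s)_{ij}$ using the index decomposition of Lemma~\ref{Clr.Omega.IdxLemma}, uses $\widetilde{s}_{kl}^2=1$, and counts $2(N-2)$ neighbours. For simplicity of the leading eigenvalue, however, the paper does not argue from first principles: it rewrites $S$ as a $\pm1$ matrix encoding whether $\widetilde{s}_{ij}$ and $\widetilde{s}_{kl}$ lie in the same sign class and then cites~\cite{Jsync}, where that signed matrix was analyzed. Your factorization $S=DGD$ with $D=\operatorname{diag}(\widetilde{s}_{ij})$, $D^2=I$, makes the content of that citation transparent: $S$ is conjugate to the adjacency matrix $G$ of the triangular graph $J(N,2)$, which is connected and $2(N-2)$-regular, so Perron--Frobenius gives a simple top eigenvalue $2(N-2)$ with eigenvector $\mathbf{1}$, and $u_s=D\mathbf{1}$ follows by transport. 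What your approach buys is a complete, elementary proof that does not lean on the external reference and that exposes why the spectrum of $S$ coincides with the known spectrum of $\Sigma^+$ in~\eqref{Clr.MainThrm.SigPlusSpec}; what the paper's approach buys is brevity and consistency with the machinery it has already imported from~\cite{Jsync}. The only point to state carefully in your version is that ``leading'' means the largest eigenvalue (not merely the spectral radius); since $J(N,2)$ contains triangles for $N\geq3$ it is non-bipartite, so $-2(N-2)$ is not an eigenvalue and no ambiguity arises.
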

By Proposition~\ref{SignsSync:SignsProp}, the eigenvector $u_s$ of $S$ in~\eqref{SignsSync:SignsSyncMat} gives the set $\{\widetilde{s}_{ij}\}_{i<j\in[N]}$.
The procedure for the signs adjustment (and the construction of $\widetilde{H}$ of~\eqref{SignsSync:HTilde}), is summarized in Algorithm~\ref{alg:signs}.
\begin{algorithm}
\caption{Signs adjustment procedure}
\begin{algorithmic}[1]	
\Require{A set of $\binom{N}{2}$ rank 1 matrices $\{v_{ij}=s_{ij}v_i^Tv_j\}_{i<j\in[N]}$}
\Initialize{${\binom{N}{2}}\times{\binom{N}{2}}$ matrix $S$, with all entries set to zero, and $N+1$ matrices \\\qquad $\{H_1^s,\ldots,H_N^s\}$ and $\widetilde{H}$ of size $3N\times3N$, with all entries set to zero}
\State Estimate $s_{ii}v_{i}^Tv_i$, $i=1,\ldots,N$  \Comment{See \eqref{SignsSync:vii}}
\For{$n=1$ to $N$}
\For{$i<j\in[N]$}
\If{$\|v_{in}v_{nj}-v_{ij}\|_F>\|v_{in}v_{nj}+v_{ij}\|_F$}
\State{$(H_n^s)_{ij}=-v_{ij}$}
\Else\Comment{$(H^s_n)_{ij}$ denotes the $(i,j)^{th}$ $3\times3$ block of $H_n^s$}
\State{$(H_n^s)_{ij}=v_{ij}$}
\EndIf
\EndFor
\State{$H_n^s=H_n^s+(H_n^s)^T$}
\For{$i=1\text{ to } N$}
\State{$(H_n^s)_{ii}=s_{ii}v_{i}^Tv_i$}
\EndFor
\State{$\hat{v}_n^s=\argmax\limits_{\|v\|=1}v^TH_n^sv$} \Comment{See~\eqref{SignsSync:SnVsn}}
\EndFor
\For{$n=1 \text{ to } N$}
\For{$i=1 \text{ to } N$}
\State{$(\hat{v}_{n}^s)_i=\hat{v}_n^s(3i-2,3i-1,3i)$} \Comment{See~\eqref{SignsSync:rowNotation}}
\EndFor
\EndFor
\For{$(i,j)(k,l) \in A$} \Comment{See~\eqref{eq:def_of_A}}
\State{$S_{(i,j)(k,l)}= \widetilde{s}_{ij}\widetilde{s}_{kl}$} \Comment{Using $\hat{v}_{n}^s$, see~\eqref{SignsSync:SijSjkTild} and~\eqref{SignsSync:SignsSyncMat}} \EndFor
\State{$u_s=\argmax\limits_{\|v\|=1}v^TSv$}
\For{$i<j\in[N]$}
\State{$\widetilde{s}_{ij}=u_s(v_{ij})$} \Comment{See Proposition~\ref{SignsSync:SignsProp}}
\State{$(\widetilde{H})_{ij}=\widetilde{s}_{ij}\cdot v_{ij}$} \Comment{See~\eqref{SignsSync:s_ijTildBys_ij} and~\eqref{SignsSync:HTildeBlock}}
\EndFor
\State{$\widetilde{H}=\widetilde{H}+\widetilde{H}^T$}
\For{$i=1\text{ to } N$}
\State{$(\widetilde{H})_{ii}=s_{ii}v_{i}^Tv_i$}
\EndFor
\State{$v^s=\argmax\limits_{\|v\|=1}v^T\widetilde{H}v$}\Comment{See~\eqref{SignsSync:HTilde}}
\Ensure{$v^s$}
\end{algorithmic}
\label{alg:signs}
\end{algorithm}

\section{Numerical experiments}\label{sec:experiments}
We implemented Algorithms~\ref{alg:relative rotations estimation}--\ref{alg:signs} in Matlab and tested them on a dataset of raw projection images of the beta-galactosidase enzyme~\cite{betaGala}, which has a $D_2$ symmetry. All tests were executed on a dual Intel Xeon E5-2683 CPU (32 cores in total), with
768GB of RAM running Linux, and four nVidia GTX TITAN XP GPU's. Section~\ref{sec:implementation} provides some of the implementation details for Algorithms~\ref{alg:relative rotations estimation}--\ref{alg:signs}, and Section~\ref{sec:betagal} presents the results on the experimental dataset.

\subsection{Implementation details}\label{sec:implementation}
To execute Algorithm~\ref{alg:relative rotations estimation}, we need to discretize the space of rotations $SO(3)$. To that end, we generated a pseudo-uniform spherical grid of $K$=1200 points $z_{k}$ on $S^2$, using the Saaf-Kuijlaars algorithm~\cite{saafKuij}. Then, for each $z_k=(a_k,b_k,c_k)^T\in S^2$ on the spherical grid, we computed the set of rotations
\begin{equation}
Q_{kl}=\begin{pmatrix} |&|&|\\ \cos(\theta_l)u_k+\sin(\theta_l)w_k&-\sin(\theta_l)u_k+\cos(\theta_l)w_k & z_k\\|&|&|\end{pmatrix},
\end{equation}
where $\theta_l=2\pi l/L$ for $l=0,1,\ldots,L-1$, and the vectors $u_k$ and $w_k$ are given by
\begin{equation}
u_k=\frac{(-b_k,a_k,0)^T}{\|(-b_k,a_k,0)^T\|},\quad w_k=\frac{u_k\times z_k}{\|u_k\times u_k\|}.
\end{equation}
It is easily verified that the vectors $w_k$, $u_k$ and $z_k$ form an orthonormal set, and that $Q_{kl}\in SO(3)$. The third column $z_k$ of each rotation $Q_{kl}$ is the beaming direction corresponding to the rotation $Q_{kl}$, and the vectors
\begin{equation}
\cos(2\pi l/L)u_k+\sin(2\pi l/L)w_k, \quad -\sin(2\pi l/L)u_k+\cos(2\pi l/L)w_k,
\end{equation}
are the coordinate systems for the plane perpendicular to $z_k$. Thus, each set of matrices $\{Q_{kl}\}_{l=0}^{L-1}$ where $k\in\{1,\ldots,K\}$ is a discretization of the set of rotations in $SO(3)$ with beaming direction $z_k$. We found experimentally that choosing $L=72$ (together with $K=1200$) is sufficient to obtain accurate results.

As for runtime, for a set of 500 projection images, it took 1512 seconds to compute all sets of relative rotations $\{R_i^Tg_mR_j\}_{m=1}^4$ (Algorithm~\ref{alg:relative rotations estimation}), 720 second to synchronize handedness (Algorithm~\ref{alg:handedness synchronization}), 5784 seconds to compute the partition in~\eqref{Clr.ClassPar}  (Algorithms~\ref{alg:constructing Omega} and~\ref{alg:ualpha}), and 1378 seconds to adjust the signs $s_{ij}$ of~\eqref{Clr.ClassPar} (Algorithm~\ref{alg:signs}).

\subsection{Beta-galactosidase experimental results}\label{sec:betagal}
We applied Algorithms~\ref{alg:relative rotations estimation}--\ref{alg:signs} to the EMPIAR-$10061$ dataset~\cite{EMPIAR-10061} from the EMPIAR archive~\cite{empiar}. The dataset consists of 41,123 raw particles images, each of size $768 \times 768$ pixels, with pixel size of $0.3185$ \AA.
To generate class averages from this dataset, we used the ASPIRE software package~\cite{aspire} as follows. First, all images were phase-flipped (in order to remove the phase-reversals in the CTF), down-sampled to size $129 \times 129$ pixels (hence with pixel size of $1.9$~\AA), and normalized so that the noise in each image has zero mean and unit variance. We then split the images into two independent sets, each consisting of 20,560 particle images, and all subsequent processing was applied to each set independently.

We next used the class-averaging procedure in ASPIRE~\cite{aspire} to generate 2000 class averages from each of the two sets of particle images (using the EM-based class averaging algorithm in ASPIRE). A sample of these class averages is shown in Fig.~\ref{fig:10061_classavg}. The input to our algorithm was 500 out of the 2000 class averages (by selecting every 4th image).

\begin{figure}
	\centering
	\includegraphics[width=0.15\textwidth]{./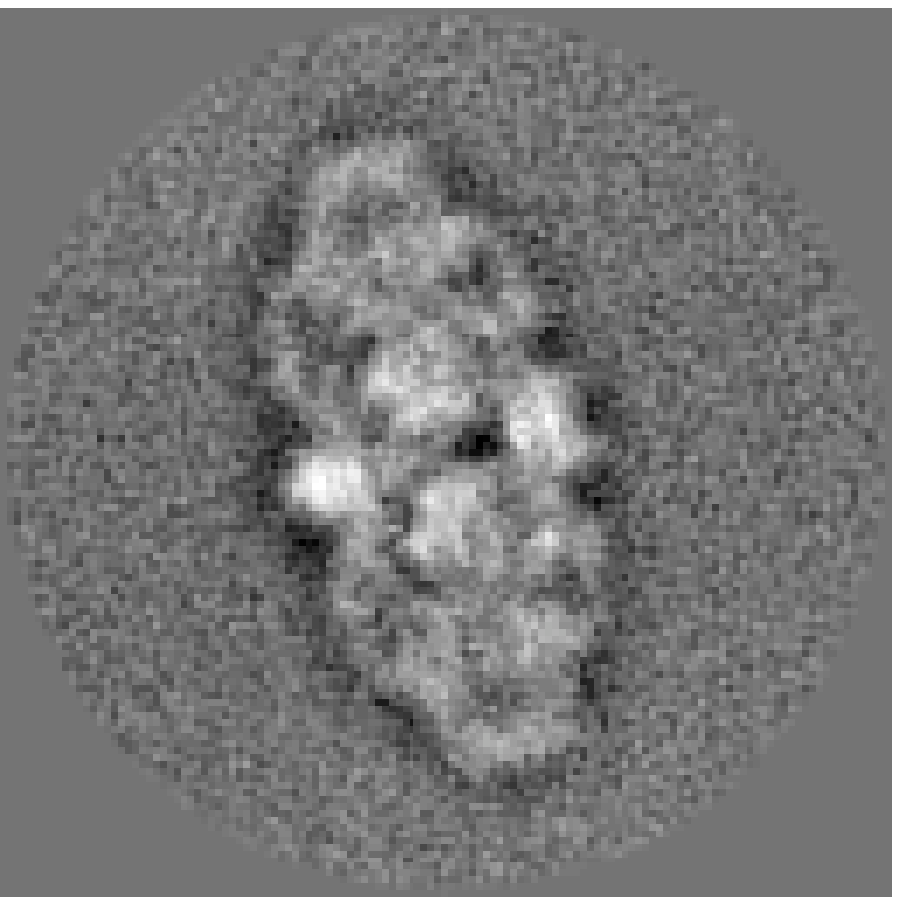}
    \includegraphics[width=0.15\textwidth]{./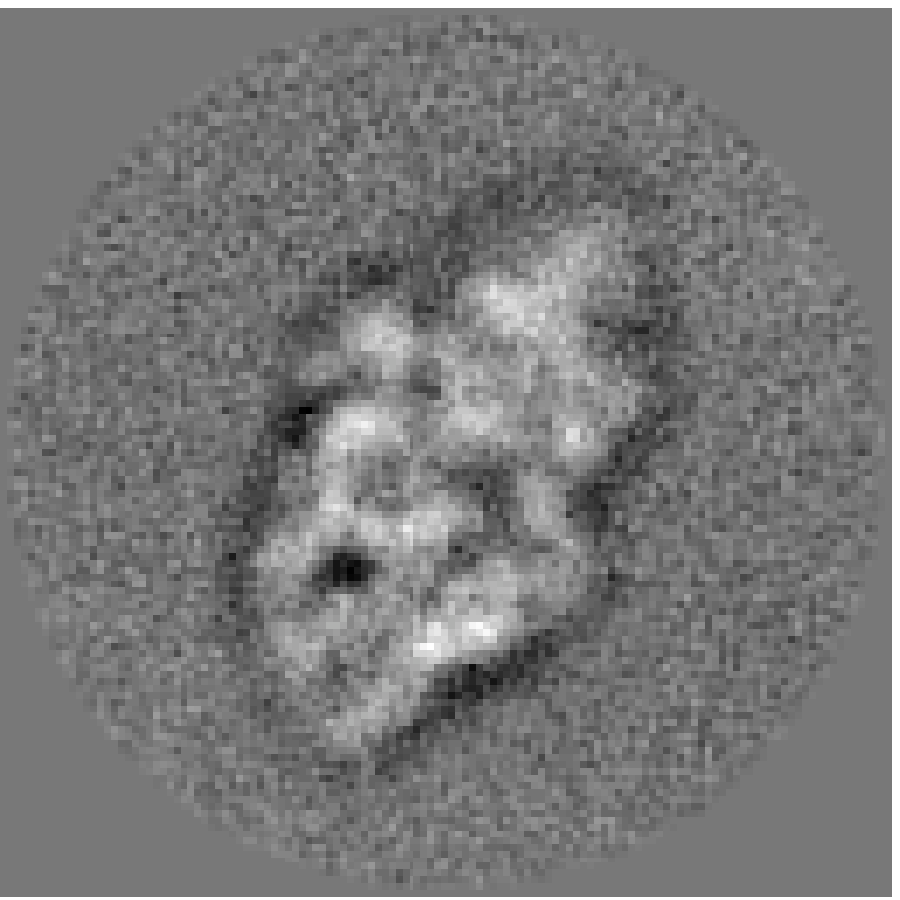}
    \includegraphics[width=0.15\textwidth]{./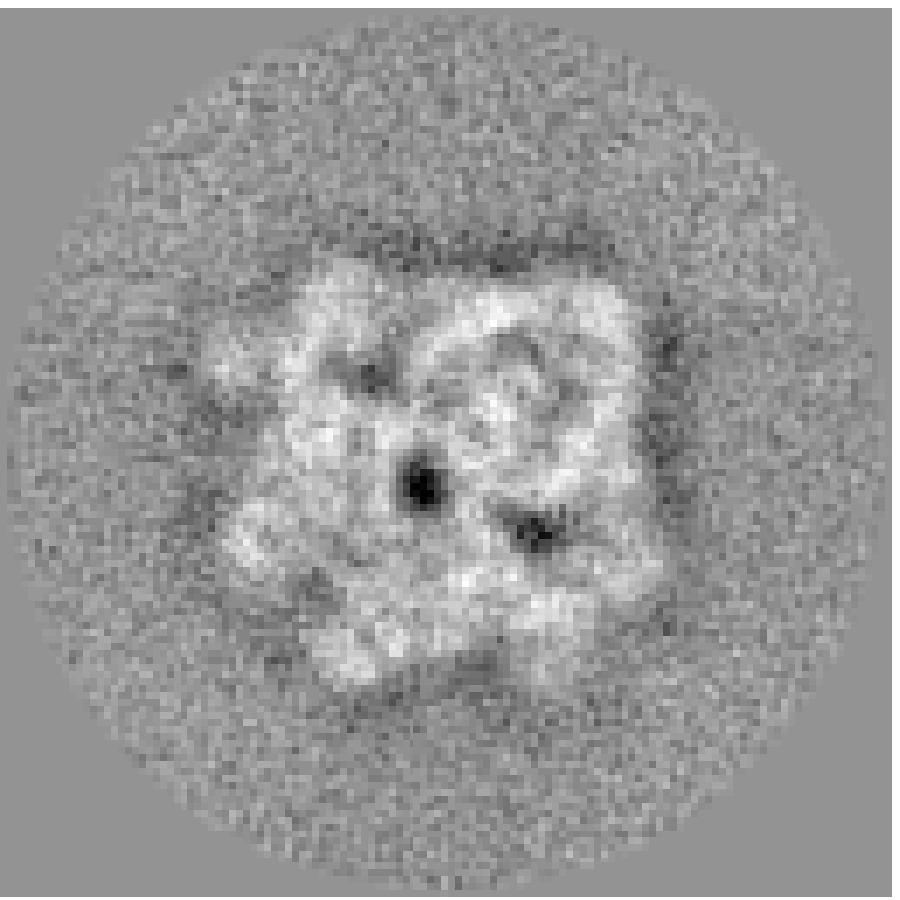}
    \includegraphics[width=0.15\textwidth]{./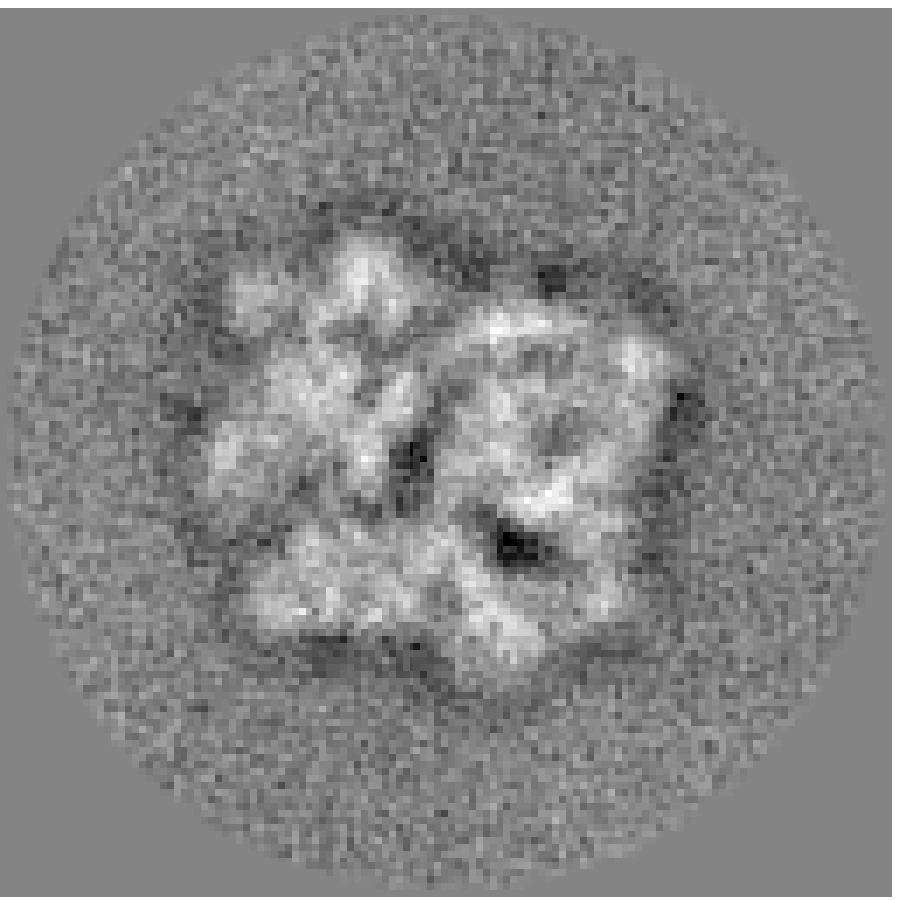}
    \includegraphics[width=0.15\textwidth]{./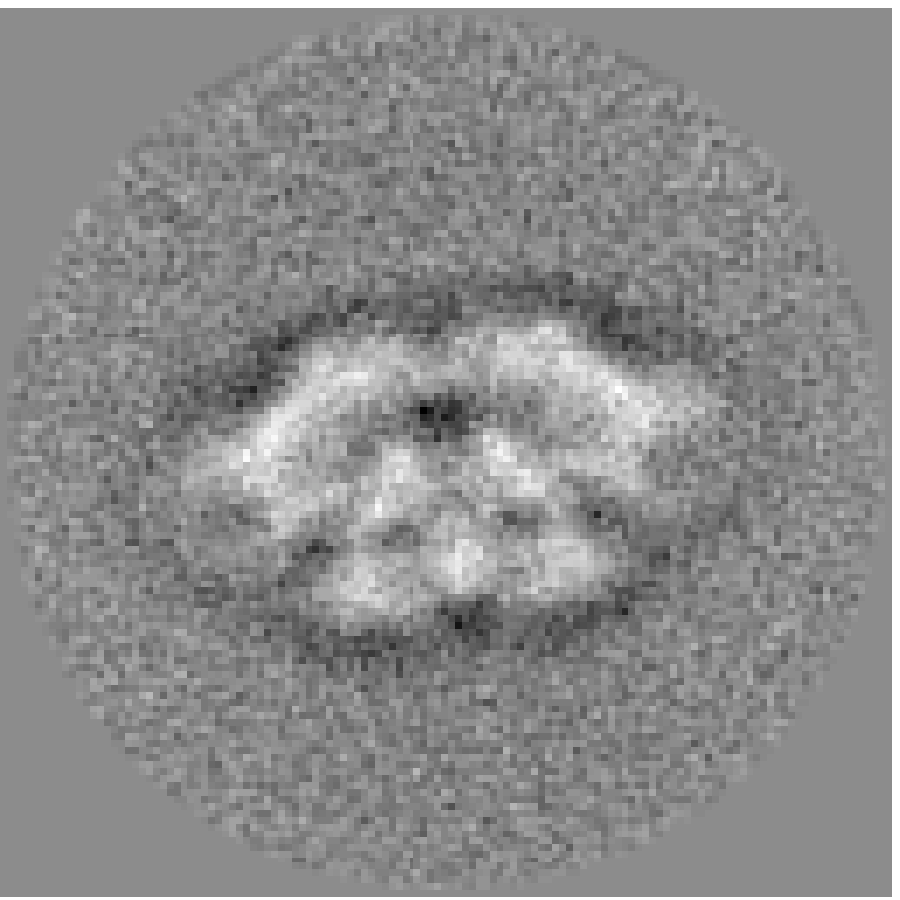}
    \includegraphics[width=0.15\textwidth]{./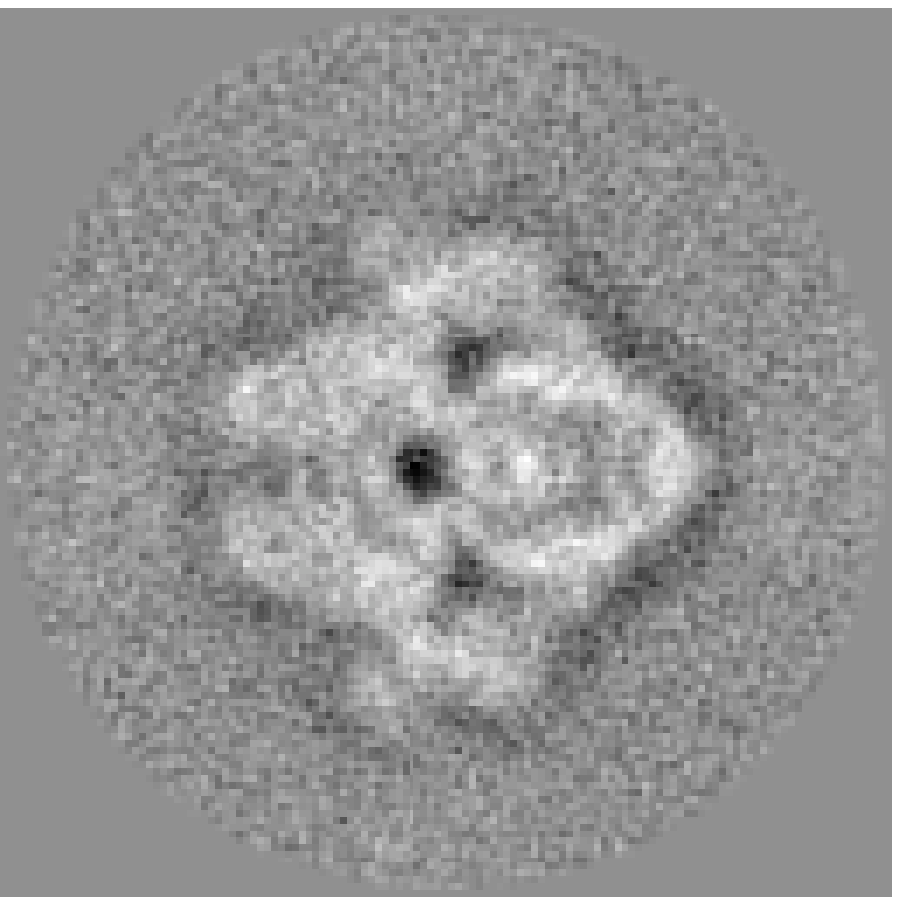}
    \caption{A sample of $129 \times 129$ class averages of the EMPIAR-$10061$ dataset~\cite{EMPIAR-10061}.}
	\label{fig:10061_classavg}
\end{figure}

Next, we used the algorithms described in this paper to estimate the rotation matrices that correspond to the 500 class averages, and reconstructed the three-dimensional density map using the class averages and their estimated rotation matrices. The resolution of the reconstructed volume, assessed by comparing the reconstructions from the two independent sets of class averages is 8.23~\AA,~using the Fourier shell correlation (FSC) $0.143$-criterion~\cite{vanHeel_Schatz} (Fig.~\ref{fig:fsc_halfmaps}). When comparing our reconstructions to a high resolution reconstruction of the molecule (EMD-$7770$~\cite{EMDB-7770}), the resolution estimated using the 0.5-criterion of the FSC is 9.88~\AA~(Fig.~\ref{fig:fsc_ref}).

\begin{figure}
	\centering
	\begin{subfigure}{0.45\textwidth}
        \centering
        \includegraphics[width=\textwidth]{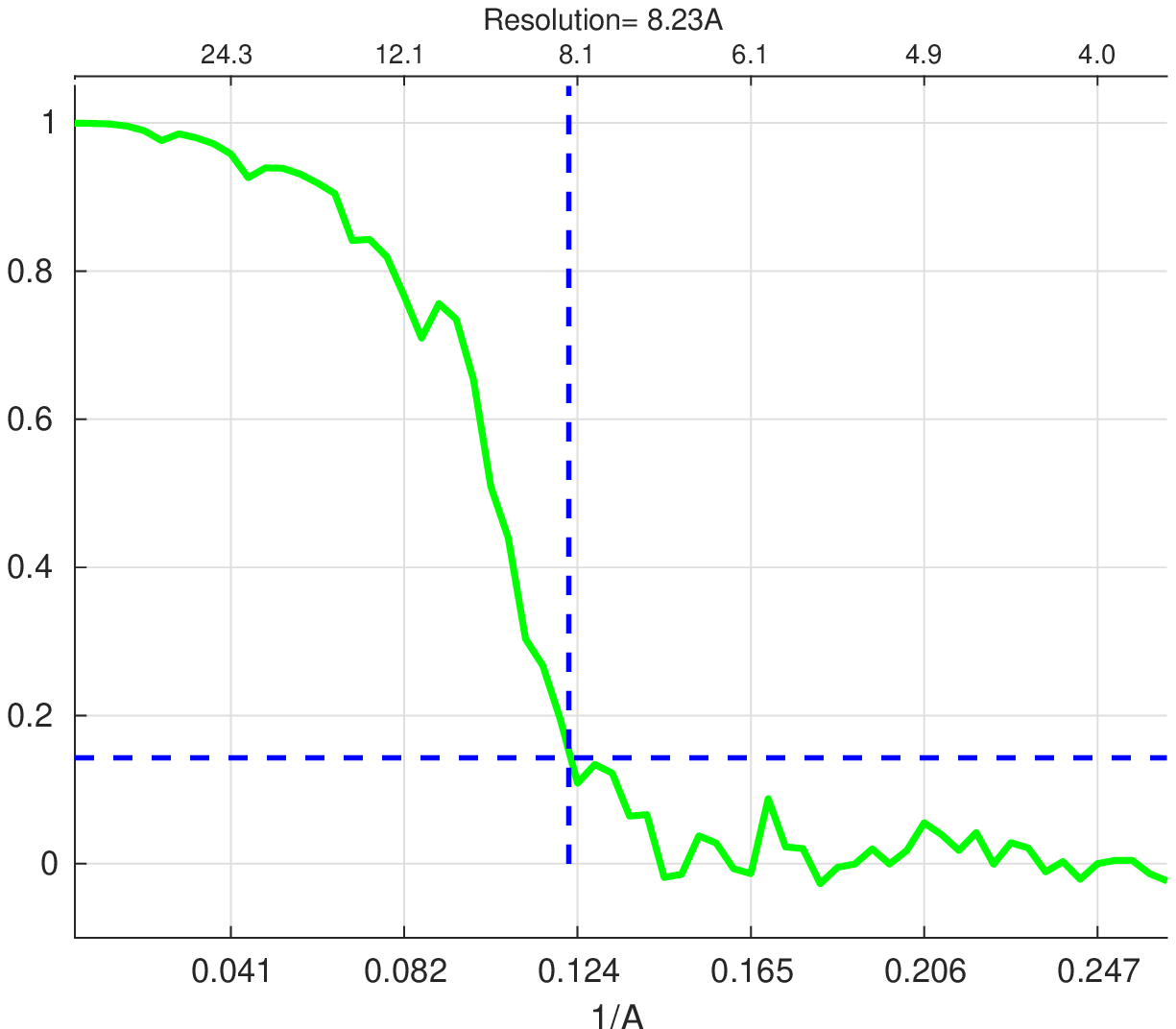}
        \caption{Two half maps}
        \label{fig:fsc_halfmaps}
    \end{subfigure}
	\begin{subfigure}{0.45\textwidth}
        \centering
        \includegraphics[width=\textwidth]{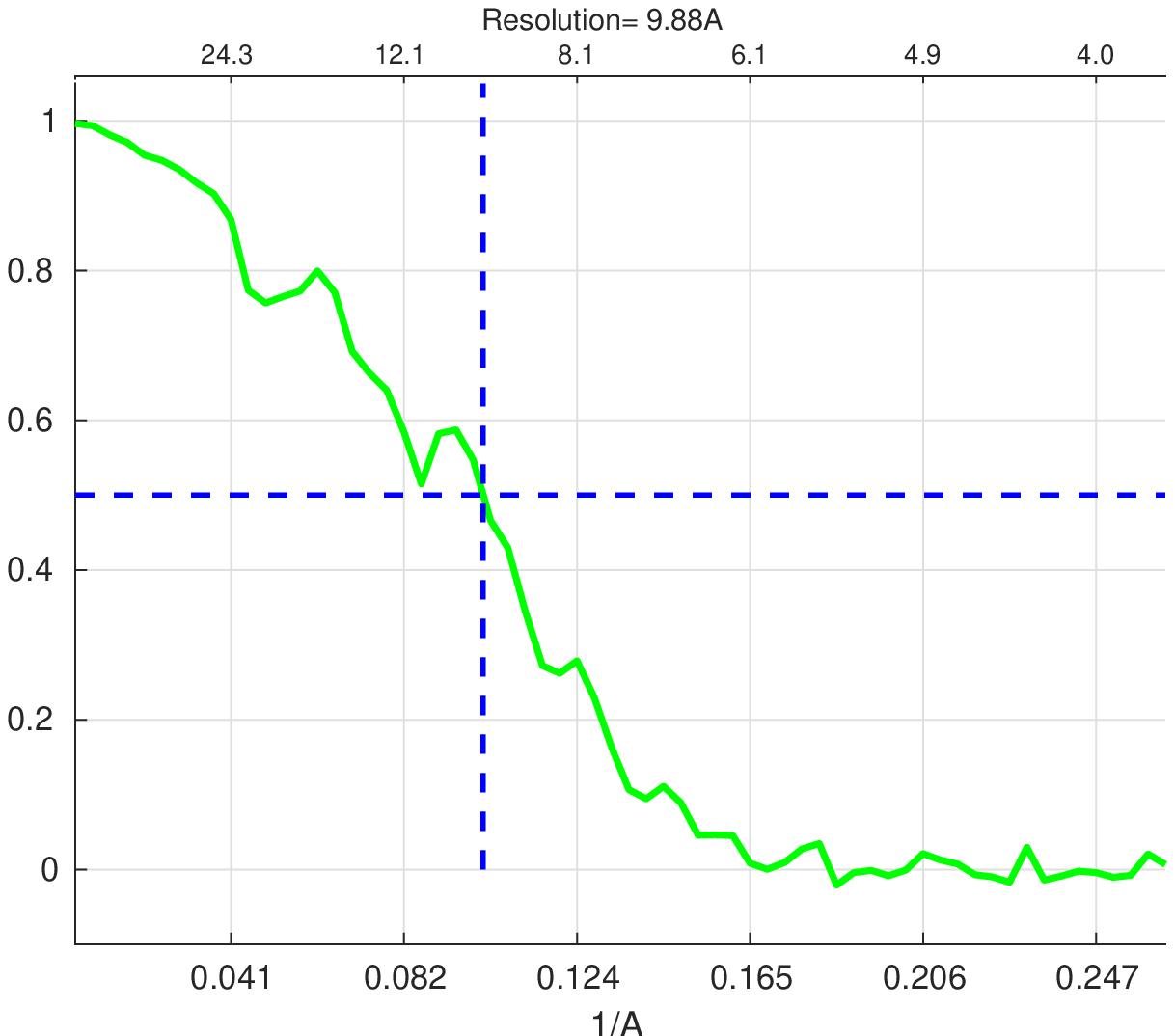}
        \caption{Half map vs reference}
        \label{fig:fsc_ref}
    \end{subfigure}
    \caption{Fourier shell correlation curves.}
\end{figure}

\section{Summary and future work}\label{sec:summary}
In this paper, we presented a procedure for estimating the orientations corresponding to a given set of projection images of a $D_2$-symmetric molecule. We have shown that the set of relative rotations between all pairs of images admits a special graph structure, and demonstrated that this structure can be exploited to recover the rotations. We then demonstrated our method by reconstructing an ab-initio model from an experimental set of cryo-EM images.

An obvious future work is to extend the proposed method to $D_n$ for $n\geq 3$. Preliminary theoretical analysis suggests that this can be achieved by combining the method of the current paper with the algorithms derived in~\cite{cn}.

\section*{Acknowledgments}
This research was supported by the European Research Council
(ERC) under the European Union’s Horizon 2020 research and innovation programme (grant
agreement 723991 - CRYOMATH).

\appendix
\section{Appendix}
\subsection{Proof of Proposition~\ref{Sec5:MainProp}}\label{proof:Sec5:MainProp}
	Let $(R_i^Tg_{\tau(m)}R_j)_{m=1}^4$ be a permutation of the 4-tuple $(R_i^Tg_mR_j)_{m=1}^4$.
	\begin{enumerate}
		\item This follows immediately from~\eqref{Sec5:Sumgs} and~\eqref{Sec5:ViVj}, by noting that the tuple $(\tau(2),\tau(3),\tau(4))$ is a permutation of $(2,3,4)$ .
		\item Suppose that $\tau(2)=1$. Then, $(\tau(1),\tau(3),\tau(4))$ is a permutation of $(2,3,4)$.
		By~\eqref{Sec5:SumgsUnord} and~\eqref{Sec5:ViVj}, we have
		\begin{equation*}
		\begin{split}
		\frac{1}{2}(R_i^Tg_{\tau(1)}R_j+R_i^Tg_{\tau(3)}R_j)&=R_i^T\frac{1}{2}(g_{\tau(1)}+g_{\tau(3)})R_j\\
		&=-R_i^TI_{\tau(4)-1}R_j=-(v_i^{\tau(4)-1})^Tv_j^{\tau(4)-1},\\
		\frac{1}{2}(R_i^Tg_{\tau(1)}R_j+R_i^Tg_{\tau(4)}R_j)&=R_i^T\frac{1}{2}(g_{\tau(1)}+g_{\tau(4)})R_j\\
		&=-R_i^TI_{\tau(3)-1}R_j=-(v_i^{\tau(3)-1})^Tv_j^{\tau(3)-1}.
		\end{split}
		\end{equation*}
		By~\eqref{Sec5:Sumgs} and~\eqref{Sec5:ViVj}, we have
		\begin{equation*}
		\begin{split}
		\frac{1}{2}(R_i^Tg_{\tau(1)}R_j+R_i^Tg_{\tau(2)}R_j)&=R_i^T\frac{1}{2}(g_{\tau(1)}+g_{\tau(2)})R_j\\
		&=R_i^T\frac{1}{2}(g_{\tau(1)}+g_1)R_j\\
		&=R_i^TI_{\tau(1)-1}R_j=(v_i^{\tau(1)-1})^Tv_j^{\tau(1)-1},
		\end{split}
		\end{equation*}
		as required. The proof for the cases $m=3,4$ is similar.
	\end{enumerate}
\qed

\subsection{Proof of Proposition~\ref{Jsync:MainProp}}\label{proof:Jsync:MainProp}
	First, we note that the set $\{g_1,g_2,g_3,g_4\}$ of~\eqref{Sec1:gpMem} forms a multiplicative group of matrices, known in literature as the Klein four-group. In particular, $g_1=I$ is the identity element of the group, and we have
	\begin{equation}\label{JSync:Klein4}
	g_{m}^2=g_1,\quad m=1,2,3,4.
	\end{equation}
	
	Now, since $R_i,R_j$ and $R_k$ are in $SO(3)$, we have
	\begin{equation}\label{JSync:Rs2GpMem}
	\begin{split}
	&R_{ij}^mR_{jk}^lR_{ki}^r=(R_i^Tg_{\tau_{ij}(m)}R_j)(R_j^Tg_{\tau_{jk}(l)}R_k)(R_k^Tg_{\tau_{ki}(r)}R_i)=I \iff\\
	&R_i^Tg_{\tau_{ij}(m)}g_{\tau_{jk}(l)}g_{\tau_{ki}(r)}R_i=I\iff
	g_{\tau_{ij}(m)}g_{\tau_{jk}(l)}g_{\tau_{ki}(r)}=I.
	\end{split}
	\end{equation}
	By~\eqref{JSync:Klein4}, each member of the group $\{g_m\}_{m=1}^4$ is its own inverse, and so for any triplet $(m,l,r)\in\{1,2,3,4\}^3$ we have that $g_{\tau_{ij}(m)}g_{\tau_{jk}(l)}g_{\tau_{ki}(r)}=I$ if and only if
	\begin{equation}\label{JSync:KleinEquation}
	g_{\tau_{ij}(m)}g_{\tau_{jk}(l)}=g_{\tau_{ki}(r)}.
	\end{equation}
	Now observe, that since $\tau_{ij},\tau_{jk},\tau_{ki}\in S_4$ are permutations, we have
	\begin{equation}\label{JSync:KleinPermEquivalence}
	\{g_{\tau_{ij}(m)}\}_{m=1}^4=\{g_{\tau_{jk}(m)}\}_{m=1}^4=\{g_{\tau_{ki}(m)}\}_{m=1}^4=\{g_m\}_{m=1}^4,
	\end{equation}
	from which it follows that there are 16 possible products $g_{\tau_{ij}(m)}g_{\tau_{jk}(l)}$ on the left-hand side of~\eqref{JSync:KleinEquation}, corresponding to the 16 products
	$R_{ij}^mR_{jk}^l$ for $m,l\in\{1,2,3,4\}$.
	Thus, there are at most 16 triplets $(m,l,r)\in\{1,2,3,4\}^3$ for which~\eqref{JSync:KleinEquation} is satisfied. On the other hand, combining the closure property of groups with~\eqref{JSync:KleinPermEquivalence} gives us that there always exists an element $g_{\tau_{ki}(r)}$ of $\{g_{\tau_{ki}(m)}\}_{m=1}^4$ such that~\eqref{JSync:KleinEquation} is satisfied. We conclude that are exactly 16 triplets $(m,l,r)\in\{1,2,3,4\}^3$ such that~\eqref{JSync:KleinEquation} is satisfied, from which by~\eqref{JSync:Rs2GpMem}, the proof is concluded.

\qed

\subsection{Proof of Theorem \ref{Clr.MainThrm}}\label{proof:Clr.MainThrm}
We assume without loss of generality, that the first ${\binom{N}{2}}$ rows of $\Omega$ correspond to the vertices $\{v_{ij}^1\}_{i<j\in[N]}$, the following  ${\binom{N}{2}}$ rows correspond to the vertices $\{v_{ij}^2\}_{i<j\in[N]}$, and the last  ${\binom{N}{2}}$ rows correspond to $\{v_{ij}^3\}_{i<j\in[N]}$.
Let $\Sigma^{+}$ be the ${\binom{N}{2}}\times {\binom{N}{2}}$ matrix given by
\begin{equation}\label{Clr.MainThrm.SigPlus}
(\Sigma^+)_{(i,j)(k,l)}=\begin{cases}
1 & |\{i,j\}\cap\{k,l\}|=1,\\
0 & \text{otherwise}.
\end{cases}
\end{equation}
In~\cite{Jsync}, it was shown that the spectrum of $\Sigma^+$ is given by
\begin{equation}\label{Clr.MainThrm.SigPlusSpec}
\begin{pmatrix}2(N-2)& N-4& -2\\ 1 & N-1& {\binom{N}{2}}-N\end{pmatrix}.
\end{equation}
We now show how to relate the spectrum of $\Omega$ to that of $\Sigma^+$.

Let $B_{rp}$ for $r,p=1,2,3$, be an ${\binom{N}{2}}\times{\binom{N}{2}}$ matrix which consists of rows $((r-1){\binom{N}{2}}+1),\ldots,r{\binom{N}{2}}$ and columns
$((p-1){\binom{N}{2}}+1),\ldots,p{\binom{N}{2}}$ of the matrix $\Omega$. That is, we partition $\Omega$ into 9 block matrices of equal dimensions.
By assumption, for any $r\in\{1,2,3\}$, both the rows and columns of the matrix $B_{rr}$ correspond to vertices of the same set $C_r$ in~\eqref{Clr.Color.Class}. Thus, by~\eqref{Clr.Omega.Def} we have that for $r\in\{1,2,3\}$,
\begin{equation}\label{Clr.MainThrm.Brr}
(B_{rr})_{(i,j)(k,l)}=\begin{cases}
1 & |\{i,j\}\cap\{k,l\}|=1,\\
0 & \text{otherwise},
\end{cases}
\end{equation}
which by~\eqref{Clr.MainThrm.SigPlus} gives that $B_{rr}=\Sigma^+$ for all $r\in\{1,2,3\}$.
Now, consider any matrix $B_{rp}$ for $r\neq p\in\{1,2,3\}$, and note that its rows correspond to the vertices of the set $C_r=\{v_{ij}^r\}_{i<j\in[N]}$, whereas its columns correspond the vertices of the set $C_p=\{v_{ij}^p\}_{i<j\in[N]}$. Again, by~\eqref{Clr.Omega.Def}, we have that for $r\neq p\in\{1,2,3\}$,
\begin{equation}\label{Clr.MainThrm.Brp}
(B_{rp})_{(i,j)(k,l)}=\begin{cases}
-1 & |\{i,j\}\cap\{k,l\}|=1,\\
\;\;\:0 & \text{otherwise}.
\end{cases}
\end{equation}
Thus, by~\eqref{Clr.MainThrm.SigPlus}, we have that $B_{rp}=-\Sigma^+$ for all $r\neq p\in\{1,2,3\}$, and we conclude that
\begin{equation}\label{Clr.MainThrm.Om2Sig}
\Omega=\begin{pmatrix*}[r]\Sigma^+& -\Sigma^+& -\Sigma^+\\ -\Sigma^+& \Sigma^+& -\Sigma^+ \\ -\Sigma^+& -\Sigma^+& \Sigma^+\end{pmatrix*}.
\end{equation}
Now, let $u\in\mathbb{R}^{\binom{N}{2}}$ be any eigenvector of $\Sigma^+$ corresponding to an eigenvalue $\lambda$.
Denote $z=(0,\ldots,0)^T\in\mathbb{R}^{\binom{N}{2}}$, and consider the column vectors of length $3{\binom{N}{2}}$
\begin{equation}\label{Clr.MainProof.uiDef}
u_1=\begin{pmatrix*}
u\\u\\u
\end{pmatrix*},\quad
u_2=\begin{pmatrix*}
u\\z\\-u
\end{pmatrix*},\quad
u_3=\begin{pmatrix*}
u\\-2u\\u
\end{pmatrix*}.
\end{equation}
By~\eqref{Clr.MainThrm.Om2Sig}, we have
\begin{equation}\label{Clr.MainThrm.uDef}
\begin{split}
\Omega u_1&=
\begin{pmatrix*}[r]
\Sigma^+u-\Sigma^+u-\Sigma^+u\\-\Sigma^+u+\Sigma^+u-\Sigma^+u\\-\Sigma^+u-\Sigma^+u+\Sigma^+u
\end{pmatrix*}=
\begin{pmatrix}
-\lambda u\\-\lambda u\\-\lambda u
\end{pmatrix}=-\lambda\begin{pmatrix}
u\\u\\u
\end{pmatrix}=-\lambda u_1,\\
\Omega u_2&=\begin{pmatrix*}[r]
\Sigma^+u-\Sigma^+z+\Sigma^+u\\-\Sigma^+u+\Sigma^+z+\Sigma^+u\\-\Sigma^+u-\Sigma^+z-\Sigma^+u
\end{pmatrix*}=\begin{pmatrix}
2\lambda u\\2\lambda z\\-2\lambda u
\end{pmatrix}=2\lambda\begin{pmatrix}
u\\z\\-u
\end{pmatrix}=2\lambda u_2,\\
\Omega u_3&=\begin{pmatrix*}[r]
\Sigma^+u+2\Sigma^+u-\Sigma^+u\\-\Sigma^+u-2\Sigma^+ u-\Sigma^+u\\-\Sigma^+u+2\Sigma^+u+\Sigma^+u
\end{pmatrix*}=\begin{pmatrix}
2\lambda u\\-4\lambda u\\2\lambda u
\end{pmatrix}=2\lambda\begin{pmatrix}
u\\-2u\\u
\end{pmatrix}=2\lambda u_3.
\end{split}
\end{equation}
This shows that if $\lambda$ is an eigenvalue of $\Sigma^+$, then $-\lambda$ and $2\lambda$ are eigenvalues of $\Omega$. Furthermore, if $u$ is an eigenvector of $\Sigma^+$ with eigenvalue $\lambda$, then $u_1$ in~\eqref{Clr.MainProof.uiDef} is an eigenvector of $\Omega$ corresponding to the eigenvalue $-\lambda$ of $\Omega$, and $u_2$ and $u_3$ in~\eqref{Clr.MainProof.uiDef} are eigenvectors of $\Omega$ corresponding to the eigenvalue $2\lambda$ of $\Omega$. Now, note that
\begin{equation*}
\langle u_2,u_3 \rangle=\langle u,u \rangle-2\langle z,u \rangle-\langle u,u \rangle=0,
\end{equation*}
and thus $u_2$ and $u_3$ in~\eqref{Clr.MainProof.uiDef} are independent for any $u\in\mathbb{R}^{\binom{N}{2}}$. Furthermore, since all eigenvalues of $\Sigma^+$ are non-zero (see~\eqref{Clr.MainThrm.SigPlusSpec}), we have that $2\lambda\neq-\lambda\neq 0$, and so $u_1$ is in a different eigenspace of $\Omega$ than $u_2$ and $u_3$. Thus, all three vectors in~\eqref{Clr.MainProof.uiDef} are independent eigenvectors of $\Omega$.
Let us denote the multiplicity of an eigenvalue $\lambda$ of the matrix $\Sigma^+$ by $m_{\Sigma^+}(\lambda)$, and the multiplicity of an eigenvalue $\mu$ of $\Omega$, by $m_\Omega(\mu)$. We also denote the three eigenvalues of $\Sigma^+$ by $\lambda_1,\lambda_2$ and $\lambda_3$. It is simple to verify, that if a pair of vectors $u$ and $v$ are independent, then so are the pairs of vectors $\{(u,u,u)^T,(v,v,v)^T\}$, $\{(u,z,-u)^T,(v,z,-v)^T\}$, and $\{(u,-2u,u)^T,(v,-2v,v)^T\}$. Thus, by~\eqref{Clr.MainThrm.uDef},
if $\lambda$ is an eigenvalue of $\Sigma^+$, then the eigenvalues $2\lambda$ and $-\lambda$ of $\Omega$ satisfy
\begin{equation}\label{Clr.MainProof.MultBounds}
m_\Omega(-\lambda)\geq m_{\Sigma^+}(\lambda),\quad m_\Omega(2\lambda)\geq 2m_{\Sigma^+}(\lambda),
\end{equation}
which gives us that
\begin{equation}\label{Clr.MainProof.SumMultLowBound}
\sum_{i=1}^3 \left ( m_\Omega(-\lambda_i)+m_{\Omega}(2\lambda_i) \right ) \geq 3\sum_{i=1}^3m_{\Sigma^+}(\lambda_i)=3{\binom{N}{2}}.
\end{equation}
On the other hand, since $\Omega$ has dimensions $3{\binom{N}{2}}\times3{\binom{N}{2}}$, we have
\begin{equation}\label{Clr.MainProof.SumMultUpBound}
\sum_{i=1}^3m_\Omega(-\lambda_i)+m_{\Omega}(2\lambda_i)\leq3{\binom{N}{2}},
\end{equation}
by which we have that
\begin{equation}\label{Clr:eigMultEquality}
m_\Omega(2\lambda_i)=2m_{\Sigma^+}(\lambda_i),\quad m_\Omega(-\lambda_i)= m_{\Sigma^+}(\lambda_i),\quad i=1,2,3,
\end{equation}
for otherwise, by~\eqref{Clr.MainProof.MultBounds} we would have a strong inequality in~\eqref{Clr.MainProof.SumMultLowBound}, which is a contradiction to~\eqref{Clr.MainProof.SumMultUpBound}.

We conclude that the set of eigenvalues of $\Omega$ is given by
$\{2\lambda_i,-\lambda_i\}_{i=1}^3$. Finally, the multiplicities of the eigenvalues of $\Omega$ in~\eqref{Clr:OmegaSpec} are computed by combining~\eqref{Clr.MainThrm.SigPlusSpec} with~\eqref{Clr:eigMultEquality}.

\qed

\subsection{Proof of Proposition~\ref{Clr.MainProp}}\label{proof:Clr.MainProp}
We begin by introducing some notation and definitions which we use in the current and subsequent proofs.
Let $P_{\sigma}$ denote the $3\times3$ permutation matrix of $\sigma\in S_3$, i.e., $P_{\sigma}$ satisfies $P_{\sigma}v=(v(\sigma(1)),v(\sigma(2)),v(\sigma(3))^T$ for any $v\in \mathbb{R}^3$.
Thus, using the notation introduced in~\eqref{Clr:w_ij} we can write
\begin{equation}\label{Clr:u_ij2p_ij}
(u_\alpha)_{ij}=P_{\sigma_{ij}}u_{3c},\quad (u_\beta)_{ij}=P_{\sigma_{ij}}u_{2c},
\end{equation}
where $u_\alpha$ and $u_\beta$ are given in Definition~\ref{Clr.uAlphDef}, and $u_{3c}$ and $u_{2c}$ are defined in~\eqref{Clr.Unmix.CVecs}.

Now, using the notation of~\eqref{Clr.Omega.BlockDef}, consider the block $\Omega_{(i,j)(j,k)}$ of $\Omega$ for some $i<j<k\in[N]$. If it were the case that $\sigma_{ij}$ and $\sigma_{jk}$ are the identity permutations, then by~\eqref{Clr.Omega.BlockDef} and~\eqref{Clr.Omega.Def}, we would have that
\begin{equation}\label{Clr.MainProp.IdBlock}
\Omega_{(i,j)(j,k)}=
\begin{pmatrix*}[r]
1&-1&-1\\-1&1&-1\\-1&-1&1
\end{pmatrix*}.
\end{equation}
However, in general, $\Omega_{(i,j)(j,k)}$ is the $3 \times 3$ block of $\Omega$ which we get by taking the entries of $\Omega$ in the rows corresponding to the triplet of vertices $(v_{ij}^{\sigma_{ij}(1)},v_{ij}^{\sigma_{ij}(2)},v_{ij}^{\sigma_{ij}(3)})$, and columns corresponding to the triplet of vertices $(v_{jk}^{\sigma_{jk}(1)},v_{jk}^{\sigma_{jk}(2)},v_{jk}^{\sigma_{kl}(3)})$. In other words, $\Omega_{(i,j)(j,k)}$ is obtained from the matrix in~\eqref{Clr.MainProp.IdBlock} by permuting its rows by $\sigma_{ij}$ and its columns by $\sigma_{jk}$, that is
\begin{equation*}
\Omega_{(i,j)(j,k)}=P_{\sigma_{ij}} \begin{pmatrix*}[r]
1&-1&-1\\-1&1&-1\\-1&-1&1
\end{pmatrix*}P_{\sigma_{jk}}^T.
\end{equation*}
By the same argument applied to $\Omega_{(i,j)(k,l)}$ whenever $|\{i,j\}\cap\{k,l\}|=1$, we have that
\begin{equation}\label{Clr:OmegaBlock}
\Omega_{(i,j)(k,l)}=P_{\sigma_{ij}} \begin{pmatrix*}[r]
1&-1&-1\\-1&1&-1\\-1&-1&1
\end{pmatrix*}P_{\sigma_{kl}}^T.
\end{equation}
The reason that $P_{\sigma_{kl}}$ in~\eqref{Clr:OmegaBlock} is transposed, is that in order to permute the columns of a matrix by $\sigma_{kl}$, one has to multiply it on the right by $P_{\sigma_{kl}}^T$.
We now prove Proposition~\ref{Clr.MainProp}
\begin{proof}[Proof of Proposition~\ref{Clr.MainProp}]
	Let us compute the Rayleigh quotient for $u_\alpha$. By the second equality in~\eqref{Clr.Omega.IdxIdentity}, we have
	\begin{equation}
	\begin{split}	
	u_{\alpha}^T\Omega u_{\alpha}=\sum_{i<j<k\in[N]}
	[&(u_\alpha)_{ij}^T\Omega_{(i,j)(j,k)}(u_\alpha)_{jk}+
	(u_\alpha)_{jk}^T\Omega_{(j,k)(i,j)}(u_\alpha)_{ij}\\
    &\qquad+(u_\alpha)_{ij}^T\Omega_{(i,j)(i,k)}(u_\alpha)_{ik}+ (u_\alpha)_{ik}^T\Omega_{(i,k)(i,j)}(u_\alpha)_{ij}\\
    &\qquad+(u_\alpha)_{jk}^T\Omega_{(j,k)(i,k)}(u_\alpha)_{ik}+
	(u_\alpha)_{ik}^T\Omega_{(i,k)(j,k)}(u_\alpha)_{jk}].
	\end{split}
	\end{equation}	
	By~\eqref{Clr:u_ij2p_ij} and~\eqref{Clr:OmegaBlock},
	and since permutation matrices are orthogonal, we have
	\begin{equation}\label{Clr.MainProof.MainEq}
	\begin{split}
	u_{\alpha}^T\Omega u_{\alpha}
	=\sum_{i<j<k\in[N]}
	&(P_{\sigma_{ij}}u_{3c})^TP_{\sigma_{ij}} \begin{pmatrix*}[r]
	1&-1&-1\\-1&1&-1\\-1&-1&1
	\end{pmatrix*}P_{\sigma_{jk}}^TP_{\sigma_{jk}}u_{3c}\\
	&+(P_{\sigma_{jk}}u_{3c})^TP_{\sigma_{jk}} \begin{pmatrix*}[r]
	1&-1&-1\\-1&1&-1\\-1&-1&1
	\end{pmatrix*}P_{\sigma_{ij}}^TP_{\sigma_{ij}}u_{3c}\\		
	&+(P_{\sigma_{ij}}u_{3c})^TP_{\sigma_{ij}} \begin{pmatrix*}[r]
	1&-1&-1\\-1&1&-1\\-1&-1&1
	\end{pmatrix*}P_{\sigma_{ik}}^TP_{\sigma_{ik}}u_{3c}\\
	&+(P_{\sigma_{ik}}u_{3c})^TP_{\sigma_{ik}} \begin{pmatrix*}[r]
	1&-1&-1\\-1&1&-1\\-1&-1&1
	\end{pmatrix*}P_{\sigma_{ij}}^TP_{\sigma_{ij}}u_{3c}\\
	&+(P_{\sigma_{jk}}u_{3c})^TP_{\sigma_{jk}} \begin{pmatrix*}[r]
	1&-1&-1\\-1&1&-1\\-1&-1&1
	\end{pmatrix*}P_{\sigma_{ik}}^TP_{\sigma_{ik}}u_{3c}\\	
	&+(P_{\sigma_{ik}}u_{3c})^TP_{\sigma_{ik}} \begin{pmatrix*}[r]
	1&-1&-1\\-1&1&-1\\-1&-1&1
	\end{pmatrix*}P_{\sigma_{jk}}^TP_{\sigma_{jk}}u_{3c}\\			
	=\sum_{i<j<k\in[N]}&6\cdot(u_{3c})^T \begin{pmatrix*}[r]
	1&-1&-1\\-1&1&-1\\-1&-1&1
	\end{pmatrix*}u_{3c}
	\end{split}	
	\end{equation}
	It is straightforward to check that each term in the sum~\eqref{Clr.MainProof.MainEq} equals exactly $24\alpha^2$. By Definition~\ref{Clr.uAlphDef}, $\alpha=(2{\binom{N}{2}})^{-\frac{1}{2}}$,
	and since there are $\binom{N}{3}$ triplets $i<j<k\in [N]$, the sum in~\eqref{Clr.MainProof.MainEq} amounts to
	\begin{equation*}
	u_{\alpha}^T\Omega u_{\alpha}={\binom{N}{3}}\cdot 24 \alpha^2=\frac{24{\binom{N}{3}}}{2{\binom{N}{2}}}=4(N-2).
	\end{equation*}
	By Theorem~\ref{Clr.MainThrm}, $\mu_c=4(N-2)$ is the leading eigenvalue of $\Omega$, and thus, $u_\alpha$ maximizes the Rayleigh quotient of $\Omega$, by which we have that $u_{\alpha}$ is in the eigenspace of $\mu_c$.
	A similar calculation for $u_{\beta}$ shows that it is also in the same eigenspace. Finally, observe that since $P_{\sigma_{ij}}$ are orthogonal for all $i<j\in[N]$, we have
	\begin{equation*}
	\begin{split}
	<u_{\alpha},u_{\beta}>&=\sum_{i<j\in[N]}
	<(u_\alpha)_{ij},(u_\beta)_{ij}>=
	\sum_{i<j\in[N]}
	<P_{\sigma_{ij}}u_{3c},P_{\sigma_{ij}}u_{2c}>\\
	&=\sum_{i<j\in[N]}<u_{3c},u_{2c}>
	=\sum_{i<j\in[N]}\alpha\cdot\beta+0\cdot(-2\beta) +\alpha\cdot(-\beta)=0.
	\end{split}
	\end{equation*}		
\end{proof}

\subsection{Proof of Lemma~\ref{Clr.Omega.IdxLemma}}\label{proof:Clr.Omega.IdxLemma}
	We begin by showing the first equality in~\eqref{Clr.Omega.IdxIdentity}.
	Fix some $i<j\in[N]$, and observe that for any $k<l\in[N]$ such that $|\{i,j\}\cap\{k,l\}|=1$, we have that
	\begin{equation}\label{SignsSync:IndexCases1}
	(i,j)(k,l)=
	\begin{cases}
	(i,j)(k,j) & j=l \text{ and }(k<i<j \text{ or } i<k<j),\\
	(i,j)(j,l) & j=k \text{ and }i<j<l,\\
	(i,j)(k,i) & i=l \text{ and }k<i<j,\\
	(i,j)(i,l) & i=k \text{ and }i<l<j \text{ or } i<j<l,
	\end{cases}
	\end{equation}
    where in the first case of~\eqref{SignsSync:IndexCases1} it cannot be that $i=k$ since then we would have that $\left \lvert \{i,j\}\cap\{k,l\} \right \rvert>1$ (and similarly for the other cases). From~\eqref{SignsSync:IndexCases1} and~\eqref{eq:def_of_A1_to_A4} we get that
	\begin{equation*}
		(i,j)(k,l)\in
		\begin{cases}
		A_{ij}^1 & j=l\text{ and }(k<i<j \text{ or } i<k<j),\\
		A_{ij}^2 & j=k\text{ and }i<j<l,\\
		A_{ij}^3 & i=l\text{ and }k<i<j,\\
		A_{ij}^4 & i=k\text{ and }i<l<j \text{ or } i<j<l.
		\end{cases}
	\end{equation*}
	This shows that for fixed $i <j$ it holds that
\begin{equation*}
\left \{ (i,j)(k,l) \; | \; k<l,\ |\{i,j\}\cap\{k,l\}|=1 \right \} \subseteq A_{ij}^1\cup A_{ij}^2\cup A_{ij}^3\cup A_{ij}^4,
\end{equation*}
and so by taking a union over all $i<j\in[N]$, we get from~\eqref{eq:def_of_A} that
	\begin{equation}\label{Clr.IdxLemmaPf.eq1}
    \begin{aligned}
    A &= \bigcup_{i<j\in[N]} \left \{ (i,j)(k,l) \; | \; k<l,\ |\{i,j\}\cap\{k,l\}|=1 \right \} \\
      &\subseteq \bigcup_{i<j\in[N]} A_{ij}^1\cup A_{ij}^2\cup A_{ij}^3\cup A_{ij}^4.
    \end{aligned}
	\end{equation}
	Conversely, suppose that $(i,j)(k,j)\in A_{ij}^1$ for some $i<j\in[N]$ and $k<j$, $k\neq i$. Then, we have that $(i,j)(k,j)=(i,j)(k,l)$ for $l=j$,  $|\{i,j\}\cap\{k,l\}|=1$ and $k<l$. Thus, by \eqref{eq:def_of_A}, for any $k\in[N]$ such that $k<j$ and $k\neq i$ we have that $(i,j)(k,j)\in A$, from which we have that $A_{ij}^1\subseteq A$. Applying a similar argument to $A_{ij}^2, A_{ij}^3$ and $A_{ij}^4$, we get that $A_{ij}^2, A_{ij}^3,A_{ij}^4\subseteq A$ for all $i<j\in [N]$, from which we get that
	\begin{equation*}
	A_{ij}^1\cup A_{ij}^2\cup A_{ij}^3\cup A_{ij}^4\subseteq A, \quad i<j\in[N].
	\end{equation*}
	Taking a union over all $i<j\in[N]$ we have
	\begin{equation}\label{eq:A contains union4}
	A\supseteq\bigcup_{i<j\in[N]}A_{ij}^1\cup A_{ij}^2\cup A_{ij}^3\cup A_{ij}^4,
	\end{equation}
	which together with \eqref{Clr.IdxLemmaPf.eq1} proves the first equality in \eqref{Clr.Omega.IdxIdentity}.
	
	Let us now show the second equality in \eqref{Clr.Omega.IdxIdentity}. Suppose that $i<j<k\in [N]$. Then, from~\eqref{eq:def_of_A1_to_A4} we have that
	\begin{equation}\label{eq:break A and Af}
	\begin{alignedat}{6}
	(i,j)(j,k)&\in A_{ij}^2, &\quad& (i,j)(i,k)&\in A_{ij}^4, &\quad& (j,k)(i,k)&\in A_{jk}^1, \\
	(j,k)(i,j)&\in A_{jk}^3, &\quad& (i,k)(i,j)&\in A_{ik}^4, &\quad& (i,k)(j,k)&\in A_{ik}^1.
	\end{alignedat}	
	\end{equation}
	By~\eqref{eq:break A and Af} and the definition of $A_{ijk}$ and $A_{ijk}^{f}$ in~\eqref{eq:A and Af}, we have
	\begin{equation*}
	A_{ijk} \subseteq  A_{ij}^2\cup A_{ij}^4 \cup A_{jk}^1, \quad A_{ijk}^f \subseteq A_{jk}^3\cup A_{ik}^4 \cup A_{ik}^1.
	\end{equation*}
	Thus, taking the union over all $i<j<k\in [N]$ gives us that
    \begin{equation}\label{Clr.Omega.6Union}
    \bigcup_{i<j<k\in[N]}A_{ijk}\cup A_{ijk}^f \subseteq \bigcup_{i<j<k\in[N]} A_{ij}^2\cup A_{ij}^4 \cup A_{jk}^1 \cup A_{jk}^3\cup A_{ik}^4 \cup A_{ik}^1.
    \end{equation}
    Now, for each $i<j<k\in[N]$ we have
    \begin{alignat*}{7}
    &A_{ij}^2 &\cup& A_{ij}^4 &\subseteq& A_{ij}^1 &\cup& A_{ij}^2 &\cup& A_{ij}^3 &\cup& A_{ij}^4,\\
    &A_{jk}^1 &\cup& A_{jk}^3 &\subseteq& A_{jk}^1 &\cup& A_{jk}^2 &\cup& A_{jk}^3 &\cup& A_{jk}^4,\\
    &A_{ik}^4 &\cup& A_{ik}^1 &\subseteq& A_{ik}^1 &\cup& A_{ik}^2 &\cup& A_{ik}^3 &\cup& A_{ik}^4,
    \end{alignat*}
    by which we have that (renaming the indices to those given in~\eqref{eq:def_of_A1_to_A4})
    \begin{equation*}
     A_{ij}^2\cup A_{ij}^4 \cup A_{jk}^1 \cup A_{jk}^3\cup A_{ik}^4 \cup A_{ik}^1\subseteq \bigcup_{i<j\in[N]}A_{ij}^1\cup A_{ij}^2 \cup A_{ij}^3 \cup A_{ij}^4.
    \end{equation*}
    Taking a union over all $i<j<k\in [N]$ we get
	\begin{equation*}
	\bigcup_{i<j<k\in[N]} A_{ij}^2\cup A_{ij}^4 \cup A_{jk}^1 \cup A_{jk}^3\cup A_{ik}^4 \cup A_{ik}^1\subseteq \bigcup_{i<j\in[N]}A_{ij}^1\cup A_{ij}^2 \cup A_{ij}^3 \cup A_{ij}^4,
	\end{equation*}
	thus, by \eqref{Clr.Omega.6Union}, we have
	\begin{equation*}
	\bigcup_{i<j<k\in[N]}A_{ijk}\cup A_{ijk}^f \subseteq \bigcup_{i<j\in[N]}A_{ij}^1\cup A_{ij}^2 \cup A_{ij}^3 \cup A_{ij}^4 \subseteq A,
	\end{equation*}
where the last inequality follows from~\eqref{eq:A contains union4}.

	Conversely, take $(i,j)(k,l) \in A$. By the first part of the proof, $(i,j)(k,l)$ belongs to one of the sets in~\eqref{eq:def_of_A1_to_A4}. If $(i,j)(k,j)\in A_{ij}^1$, then we have that
	either $k<i<j$ or $i<k<j$, and thus
	\begin{equation*}
	(i,j)(k,j)\in\begin{cases}
	\{(k,i)(i,j)\;,\;(k,i)(k,j)\;,\;(i,j)(k,j)\} & k<i<j,\\
	\{(k,j)(i,k)\;,\;(i,j)(i,k)\;,\; (i,j)(k,j)\} & i<k<j,
	\end{cases}
	\end{equation*}
	that is, $(i,j)(k,j)\in A_{kij}$ or $(i,j)(k,j)\in A_{ikj}^f$.
	This shows that (renaming the indices to the order given in~\eqref{eq:A and Af})
	\begin{equation*}
	(i,j)(k,j)\in\bigcup_{i<j<k\in[N]}A_{ijk}\cup A_{ijk}^f,
	\end{equation*}
	for either $k<i<j\in[N]$ or $i<k<j\in[N]$, by which we conclude that
	\begin{equation*}
	A_{ij}^1 \subseteq \bigcup_{i<j<k\in[N]}A_{ijk}\cup A_{ijk}^f.
	\end{equation*}
	In the same manner one can show that
	\begin{equation*}
	A_{ij}^2,A_{ij}^3,A_{ij}^4\subseteq\bigcup_{i<j<k\in[N]}A_{ijk}\cup A_{ijk}^f,
	\end{equation*}
	for all $i<j\in[N]$. Thus, we have that
	\begin{equation*}
	A_{ij}^1\cup A_{ij}^2\cup A_{ij}^3\cup A_{ij}^4\subseteq\bigcup_{i<j<k\in[N]}A_{ijk}\cup A_{ijk}^f
	\end{equation*}
	for all $i<j\in[N]$. Taking a union over all $i<j\in[N]$ and using the first equality in~\eqref{Clr.Omega.IdxIdentity} gives
	\begin{equation*}
	A = \bigcup_{i<j\in[N]}A_{ij}^1 \cup A_{ij}^2 \cup A_{ij}^3 \cup A_{ij}^4\subseteq\bigcup_{i<j<k\in[N]}A_{ijk}\cup A_{ijk}^f,
	\end{equation*}
	which concludes the proof of the second equality in \eqref{Clr.Omega.IdxIdentity}.

\qed

\subsection{Proof of Proposition~\ref{Col.Unmix.MinUniq}}\label{proof:Col.Unmix.MinUniq}
Define the sets
\begin{equation}
\Gamma_{3c}=\{P_\sigma u_{3c}\;|\; P_\sigma\in S_3\},\quad \Gamma_{2c}=\{P_\sigma u_{2c}\;|\; P_\sigma\in S_3\},
\end{equation}
where $u_{2c}$ and $u_{3c}$ were defined in~\eqref{Clr.Unmix.CVecs}, and $P_\sigma$ is the permutation matrix of $\sigma\in S_3$. That is, $\Gamma_{3c}$ is the set of all permutations of $u_{3c}$ and $\Gamma_{2c}$ is the set of all permutations of $u_{2c}$.
\begin{lemma}\label{Clr.Unmix.MainPropLemma}
	Let $u,v\in\mathbb{R}^3$ be such that $u=P_{\sigma}u_{3c}$ and  $v=P_{\sigma}u_{2c}$ for some $P_{\sigma}\in S_3$. Suppose, that $u',v'\in \mathbb{R}^3$ are such that $(u'\;v')=(u\;v)R(\theta)$
	for some $\theta\in [0,2\pi)$, $u'\in \Gamma_{3c}$ and $v'\in\Gamma_{2c}$, where $R(\theta)$ was defined in~\eqref{Clr.Unmix.RotMat}.
	Then, there exists $P_{\sigma}'\in S_3$ such that
	\begin{equation}\label{Clr.PropLemma}
	\begin{pmatrix}	|&|\\ u&v\\|&|\end{pmatrix}R(\theta)=P_{\sigma}'\begin{pmatrix}	|&|\\ \pm u& v\\|&|\end{pmatrix}.
	\end{equation}	
\end{lemma}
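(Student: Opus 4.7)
The plan is to exploit the fact that both $\Gamma_{3c}$ and $\Gamma_{2c}$ are $S_{3}$-orbits lying on a common circle inside a single two-dimensional plane, so the rotation $R(\theta)$ can simultaneously send $u_{3c}$ into $\Gamma_{3c}$ and $u_{2c}$ into $\Gamma_{2c}$ only at a small discrete set of angles, and each such angle realizes a single coordinate permutation.

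I would first reduce to the case $P_{\sigma}=I$ by pre-multiplying the identity $(u'\;v')=(u\;v)R(\theta)$ by $P_{\sigma}^{-1}$: this gives $(P_{\sigma}^{-1}u'\;P_{\sigma}^{-1}v')=(u_{3c}\;u_{2c})R(\theta)$, and since $\Gamma_{3c}$ and $\Gamma_{2c}$ are closed under the $S_{3}$-action, the left-hand side still satisfies the same membership hypotheses.  Once the lemma is proved with $u_{3c},u_{2c}$ in place of $u,v$ with some permutation $P_{\tau}$, the original case follows by conjugation, $P_{\sigma'}=P_{\sigma}P_{\tau}P_{\sigma}^{-1}\in S_{3}$.

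The geometric content comes from Definition~\ref{Clr.uAlphDef}: one has $\alpha^{2}=3\beta^{2}$, so $\|u_{3c}\|^{2}=2\alpha^{2}=6\beta^{2}=\|u_{2c}\|^{2}$, and a direct check gives $\langle u_{3c},u_{2c}\rangle=\alpha\beta+0-\alpha\beta=0$.  Setting $r=\sqrt{2}\alpha=\sqrt{6}\beta$, the vectors $u_{3c}/r$ and $u_{2c}/r$ form an oriented orthonormal basis of the two-dimensional plane $W=\operatorname{span}(u_{3c},u_{2c})\subset\mathbb{R}^{3}$, and the map sending the $3\times 2$ matrix $(u_{3c}\;u_{2c})$ to $(u_{3c}\;u_{2c})R(\theta)$ acts on the columns as a planar rotation by $\theta$ inside $W$, placing $u'$ at angle $\theta$ and $v'$ at angle $\theta+\pi/2$ on the circle of radius $r$ in $W$.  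A short calculation with the six (respectively three) permutations shows that $\Gamma_{3c}$ is the vertex set of a regular hexagon at the angles $\{k\pi/3:k=0,\ldots,5\}$ and $\Gamma_{2c}$ is the vertex set of an equilateral triangle at the angles $\{\pi/2+2k\pi/3:k=0,1,2\}$.

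The constraints $u'\in\Gamma_{3c}$ and $v'\in\Gamma_{2c}$ then force $\theta\in\{k\pi/3\}_{k=0}^{5}\cap\{2k\pi/3\}_{k=0}^{2}=\{0,\,2\pi/3,\,4\pi/3\}$, and for each admissible $\theta$ a one-line substitution into the formulas $u'=\cos\theta\,u_{3c}+\sin\theta\,u_{2c}$ and $v'=-\sin\theta\,u_{3c}+\cos\theta\,u_{2c}$ identifies the required cyclic permutation $\tau$ explicitly, with the sign always being $+$: $\tau=\mathrm{id}$ for $\theta=0$, $\tau=(2,3,1)$ for $\theta=2\pi/3$, and $\tau=(3,1,2)$ for $\theta=4\pi/3$.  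The main obstacle is establishing the common-circle picture of the third paragraph; this rests crucially on the numerical identity $\alpha=\sqrt{3}\beta$, without which $\Gamma_{3c}$ and $\Gamma_{2c}$ would sit on concentric but distinct circles inside $W$ and no planar rotation could match them up, so this identity is the single ingredient that makes the conclusion possible.
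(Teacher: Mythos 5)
Your proof is correct, but it takes a genuinely different route from the paper's. The paper argues abstractly: since $v,v'\in\Gamma_{2c}$ and $\Gamma_{2c}$ is a single $S_3$-orbit, there is some $P_{\sigma}'$ with $v'=P_{\sigma}'v$; then $u'$ and $\pm P_{\sigma}'u$ are vectors of equal norm lying in the two-dimensional plane $W=\{x+y+z=0\}$ and orthogonal to $v'$, so $u'=\pm P_{\sigma}'u$, with no need to identify $\theta$ or the permutation. You instead make the geometry fully explicit---$\Gamma_{3c}$ a regular hexagon and $\Gamma_{2c}$ an equilateral triangle on a common circle in $W$---enumerate the admissible angles $\theta\in\{0,2\pi/3,4\pi/3\}$ after reducing to $P_{\sigma}=I$, and exhibit the permutation in each case (your reduction and the conjugation $P_{\sigma}'=P_{\sigma}P_{\tau}P_{\sigma}^{-1}$ are both sound, and I have checked your three explicit substitutions). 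The paper's argument is shorter and enumeration-free but determines $u'$ only up to sign; yours costs a few computations but yields strictly more: it pins down $\theta$ modulo $2\pi/3$, shows $P_{\sigma}'$ is conjugate to a cyclic permutation, and shows the sign in \eqref{Clr.PropLemma} can always be taken to be $+$. Both arguments ultimately hinge on the identity $\alpha=\sqrt{3}\beta$ (equivalently $\|u_{3c}\|=\|u_{2c}\|$), which you rightly single out as essential; the paper uses it only implicitly, in the step deducing $u'\perp v'$ from the orthogonality of $R(\theta)$, which in fact requires the two columns of $(u\;v)$ to have equal norms and not merely to be orthogonal.
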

\begin{proof}
	First, since $P_\sigma$ is orthogonal, then
	\begin{equation}\label{Clr:PropLemma1}
	<u,v>=<P_\sigma u_{3c},P_\sigma u_{2c}>=<u_{3c},u_{2c}>=0.
	\end{equation}
	Define $W=\{(x,y,z)^T|\;x+y+z=0\}$, and note that $W$ is a linear subspace of $\mathbb{R}^3$ that contains $\Gamma_{3c}$ and $\Gamma_{2c}$. Since $(u'\;v')=(u\;v)R(\theta)$ and $R(\theta)$ is an orthogonal matrix, by~\eqref{Clr:PropLemma1} we have that
	$u'$ and $v'$ are also orthogonal vectors, and by assumption $u'$ and $v'$ are in $\Gamma_{3c}$ and $\Gamma_{2c}$, respectively, and so they are in $W$.
	Now, since $v,v'\in \Gamma_{2c}$ there exists $P_{\sigma}'\in S_3$ such that $v'=P_{\sigma}'v$.
	Since $P_{\sigma}'$ is an orthogonal matrix, we have
	\[<\pm P_{\sigma}'u,v'>=<\pm u,v>=0.\]
	Finally, since $W$ is of dimension 2, there are exactly two vectors perpendicular to $v'$ in $W$. Thus, it must be that $u'=\pm P_{\sigma}'u$, from which we get~\eqref{Clr.PropLemma}.
\end{proof}
 We now prove Proposition~\ref{Col.Unmix.MinUniq}.
\begin{proof}[Proof of Proposition~\ref{Col.Unmix.MinUniq}]
The function $f_{c}(\theta)$, given in~\eqref{Clr.Unmix.Max}, satisfies $f_c(\theta)\geq 0$ for all $\theta \in [0,2\pi)$. Suppose that $\theta$ is such that $f_c(\theta)=0$. Such a $\theta$ necessarily exists, since if we choose $\theta$ in~\eqref{eq:vtheta} such that $(v_a^{\theta},v_b^{\theta})=(\pm u_{\alpha}, u_{\beta})$ (which can be done due to~\eqref{Clr.Unmix.Rot}), then~\eqref{Clr.Unmix.Max} equals zero.
In the notation of \eqref{Clr:w_ij}, we now show that there exists $\sigma\in S_3$ such that either
\begin{equation}\label{Clr.Unmix.Prop2.Claim_ij1}
(v_a^\theta)_{ij}=(u_\alpha^\sigma)_{ij},\quad (v_b^\theta)_{ij}=(u_\beta^\sigma)_{ij},\quad i<j\in[N],
\end{equation}
or that
\begin{equation}\label{Clr.Unmix.Prop2.Claim_ij2}
(v_a^\theta)_{ij}=(-u_\alpha^\sigma)_{ij},\quad (v_b^\theta)_{ij}=(u_\beta^\sigma)_{ij},\quad i<j\in[N],
\end{equation}
from which it follows that $(v_a^\theta,v_b^\theta)\in\{\pm(u_{\alpha}^\sigma,u_{\beta}^\sigma)\;|\;\sigma\in S_3\}$.

First, we show that
\begin{equation}\label{eq:v_theta_InGamma}
(v_a^\theta)_{ij}\in\Gamma_{3c},\quad (v_b^\theta)_{ij}\in \Gamma_{2c}, \quad i<j\in[N].
\end{equation}
Indeed, for each pair $i<j\in[N]$, looking at the first square brackets in~\eqref{Clr.Unmix.Max}, we have that
	\begin{equation*}
	(M_{ij}(v_a^\theta)+m_{ij}(v_a^\theta))^2+d_{ij}(v_a^\theta)^2=0 \iff
	\left\{
	\begin{array}{ll}	
	M_{ij}(v_a^\theta)=-m_{ij}(v_a^\theta),\\
	d_{ij}(v_a^\theta)=0. \end{array}
	\right.
	\end{equation*}	
	This shows that $(v_a^\theta)_{ij}$ must be a permutation of $u_{3c}$ in~\eqref{Clr.Unmix.CVecs}, that is, $(v_a^\theta)_{ij}\in \Gamma_{3c}$. Similarly, looking at the second square brackets in~\eqref{Clr.Unmix.Max}, we have that
	\begin{equation*}	
	[(m_{ij}(v_b^{\theta})+2M_{ij}(v_b^{\theta}))^2 +(m_{ij}(v_b^{\theta})+2d_{ij}(v_b^{\theta}))^2
	+(M_{ij}(v_b^{\theta})-d_{ij}(v_b^{\theta}))^2]=0
	\end{equation*}
	\begin{equation*}
	\iff\left\{
	\begin{array}{ll}
	m_{ij}(v_b^{\theta})=-2M_{ij}(v_b^{\theta}),\\
	m_{ij}(v_b^{\theta})=-2d_{ij}(v_b^{\theta}),\\
	d_{ij}(v_b^{\theta})=M_{ij}(v_b^{\theta}),	
	\end{array}
	\right.
	\end{equation*}
	which is possible only if $(v_b^\theta)_{ij}$ is a permutation of the vector $u_{2c}$ in~\eqref{Clr.Unmix.CVecs}, i.e., $(v_b^\theta)_{ij}\in\Gamma_{2c}$, which shows \eqref{eq:v_theta_InGamma}.
	
	Now, by \eqref{Clr.Unmix.Rot} and \eqref{eq:vtheta}, the vectors $v_a^\theta$ and $v_b^\theta$ are either given by
	\begin{equation}\label{eq:Prop8Case1}
	(v_a^\theta\;\;v_b^\theta)=(v_a\; \;v_b)R(\theta)=(u_\alpha\; u_\beta)R(\varphi)R(\theta)=(u_\alpha\; u_\beta)R(\varphi+\theta),
	\end{equation}
	or by
	\begin{equation}\label{eq:Prop8Case2}
	(v_a^\theta\;\;v_b^\theta)=(v_a\; \;v_b)R(\theta)=(-u_\alpha\; u_\beta)R(\varphi)R(\theta)=(-u_\alpha\; u_\beta)R(\varphi+\theta),
	\end{equation}
	for some $\varphi\in[0,2\pi)$, where $v_a$ and $v_b$ is the pair of orthogonal eigenvectors of $\Omega$ defined in \eqref{Clr.Unmix.VaVb}. 	
	Let us assume first the case \eqref{eq:Prop8Case1}.
	In the notation of~\eqref{Clr:w_ij}, we define
	\begin{equation*}
	A_{\alpha\beta}^+=
	\begin{pmatrix} |&|\\(u_{\alpha})_{12} &(u_{\beta})_{12} \\ |&| \end{pmatrix},\quad A_{\alpha\beta}^-=
	\begin{pmatrix} |&|\\(-u_{\alpha})_{12} &(u_{\beta})_{12} \\|&|\end{pmatrix}.
	\end{equation*}	
	Also, denoting by
	\begin{equation}\label{eq:A^+Def}
	A_c^+=(u_{3c}\;\; u_{2c}),\quad A_c^-=(-u_{3c}\;\;u_{2c}),
	\end{equation}
	the $2\times3$ matrices with columns $\pm u_{3c}$ and $u_{2c}$ (defined in \eqref{Clr.Unmix.CVecs}),
	we get that by Definition~\ref{Clr.uAlphDef} and \eqref{Clr:w_ij}, for each pair $i<j\in[N]$ we have
	\begin{equation}\label{Clr.Unmix.AlphBet1}
	\begin{pmatrix} |&|\\(u_{\alpha})_{ij} &(u_{\beta})_{ij} \\ |&| \end{pmatrix}=P_{\sigma_{ij}}A_{c}^+.
	\end{equation}
	In particular, for $i=1$ and $j=2$ we have
	\begin{equation}\label{Clr.Unmix.AlphBet5}
	A_{\alpha\beta}^+=\begin{pmatrix} |&|\\(u_{\alpha})_{12} &(u_{\beta})_{12} \\ |&| \end{pmatrix}=P_{\sigma_{12}}A_{c}^+.
	\end{equation}
	Thus, by~\eqref{Clr.Unmix.AlphBet1} and~\eqref{Clr.Unmix.AlphBet5} it follows that for all $i<j\in[N]$ we have
	\begin{equation}\label{Clr:Prop771}
	\begin{pmatrix} |&|\\(u_{\alpha})_{ij} &(u_{\beta})_{ij} \\ |&| \end{pmatrix}=P_{\sigma_{ij}}P_{\sigma_{12}}^TA_{\alpha\beta}^+.
	\end{equation}	
	Now, by \eqref{eq:Prop8Case1} and \eqref{Clr:w_ij} we have that
	\begin{equation}\label{eq:rotate Aplus}
	\begin{pmatrix}
	|&|\\ (v_{a}^\theta)_{12} & (v_{b}^\theta)_{12}\\ |&|
	\end{pmatrix}=
	\begin{pmatrix} |&|\\(u_{\alpha})_{12} &(u_{\beta})_{12} \\ |&|\end{pmatrix}R(\theta+\varphi)=A^+_{\alpha\beta}R(\theta+\varphi).
	\end{equation}
	By~\eqref{Clr.Unmix.AlphBet5} we have that $(u_\alpha)_{12}=P_{\sigma_{12}}u_{3c}$ and $(u_\beta)_{12}=P_{\sigma_{12}}u_{2c}$, and by \eqref{eq:v_theta_InGamma}, we have that $(v_a^\theta)_{12}\in \Gamma_{3c}$ and $(v_b^\theta)_{12}\in \Gamma_{2c}$.	
	Thus, using~\eqref{eq:rotate Aplus}, by Lemma \ref{Clr.Unmix.MainPropLemma} there exists a permutation matrix $P_\tau$ such that either
	\begin{equation}\label{eq:useOfLemmaA1}
	A_{\alpha\beta}^+R(\theta+\varphi)=P_{\tau}A_{\alpha\beta}^+\quad \text{or}\quad A_{\alpha\beta}^+R(\theta+\varphi)=P_{\tau}A_{\alpha\beta}^-.
	\end{equation}	

	 First, assume that the case on the left of \eqref{eq:useOfLemmaA1} holds. It then follows from~\eqref{Clr:w_ij}, \eqref{eq:Prop8Case1}, \eqref{Clr:Prop771} and~\eqref{Clr.Unmix.AlphBet5} that
	\begin{equation*}
	\begin{split}
	\begin{pmatrix}
	|&|\\ (v_{a}^\theta)_{ij} & (v_{b}^\theta)_{ij}\\ |&|
	\end{pmatrix}&=
		\begin{pmatrix} |&|\\(u_{\alpha})_{ij} &(u_{\beta})_{ij} \\ |&|\end{pmatrix}R(\theta+\varphi)=P_{\sigma_{ij}}P_{\sigma_{12}}^TA_{\alpha\beta}^+R(\theta+\varphi)\\
		&=P_{\sigma_{ij}}P_{\sigma_{12}}^TP_{\tau}A_{\alpha\beta}^+=P_{\sigma_{ij}}(P_{\sigma_{12}}^TP_{\tau}P_{\sigma_{12}})A_c^+.
	\end{split}
	\end{equation*}
	Writing $P_{\sigma}=P_{\sigma_{12}}^TP_{\tau}P_{\sigma_{12}}$, we get by \eqref{eq:A^+Def} that for all $i<j\in[N]$
	\begin{equation}\label{eq:UnmixPropCubeDeGras}
	\begin{pmatrix}
	|&|\\ (v_{a}^\theta)_{ij} & (v_{b}^\theta)_{ij}\\ |&|
	\end{pmatrix}=
	P_{\sigma_{ij}}P_{\sigma}A_{c}^+\\
	=P_{\sigma_{ij}}\begin{pmatrix}
	u_{3c}(\sigma(1))&u_{2c}(\sigma(1))\\
	u_{3c}(\sigma(2))&u_{2c}(\sigma(2))\\
	u_{3c}(\sigma(3))&u_{2c}(\sigma(3))
	\end{pmatrix}.
	\end{equation}
	By \eqref{Clr.Unmix.uSig} and \eqref{Clr:w_ij}, we have that
	\begin{equation}\label{Clr.Unmix.Prop2.MainIdentity1}
	\begin{split}
	P_{\sigma_{ij}}
	\begin{pmatrix}
	u_{3c}(\sigma(1)) &u_{2c}(\sigma(1))\\u_{3c}(\sigma(2))& u_{2c}(\sigma(2))\\u_{3c}(\sigma(3))& u_{2c}(\sigma(3))\\
	\end{pmatrix}&=
	P_{\sigma_{ij}}\begin{pmatrix}
	u_\alpha^\sigma(v_{ij}^{1})&u_\beta^\sigma(v_{ij}^{1})\\u_\alpha^\sigma(v_{ij}^{2})&u_\beta^\sigma(v_{ij}^{2})\\u_\alpha^\sigma(v_{ij}^{3})&u_\beta^\sigma(v_{ij}^{3})
	\end{pmatrix}\\&=
	\begin{pmatrix}
	u_\alpha^\sigma(v_{ij}^{\sigma_{ij}(1)})&u_\beta^\sigma(v_{ij}^{\sigma_{ij}(1)})\\u_\alpha^\sigma(v_{ij}^{\sigma_{ij}(2)})&u_\beta^\sigma(v_{ij}^{\sigma_{ij}(2)})\\u_\alpha^\sigma(v_{ij}^{\sigma_{ij}(3)})&u_\beta^\sigma(v_{ij}^{\sigma_{ij}(3)})
	\end{pmatrix}=
	\begin{pmatrix}
	|&|\\(u_\alpha^{\sigma})_{ij}&(u_\beta^{\sigma})_{ij}\\|&|
	\end{pmatrix}
	\end{split}
	\end{equation}
	for all $i<j\in[N]$. The last two equations show that \eqref{Clr.Unmix.Prop2.Claim_ij1} holds, which proves the proposition for the case $\eqref{eq:Prop8Case1}$, when the identity on the left of \eqref{eq:useOfLemmaA1} holds.

	If~$\eqref{eq:Prop8Case1}$ holds as well as the case on the right of~\eqref{eq:useOfLemmaA1}, where $A_{\alpha\beta}^+R(\theta+\varphi)=P_{\tau}A_{\alpha\beta}^-$, then by repeating the latter calculation with $A_{\alpha\beta}^+$ and $A_c^+$ replaced by $A_{\alpha\beta}^-$ and $A_c^-$, we get that
	\begin{equation}\label{Clr.Unmix.Prop2.MainIdentity2}
	\begin{split}
	\begin{pmatrix}
	|&|\\ (v_{a}^\theta)_{ij} & (v_{b}^\theta)_{ij}\\ |&|
	\end{pmatrix}
	&=P_{\sigma_{ij}}P_{\sigma}A_{c}^-\\
	&=P_{\sigma_{ij}}\begin{pmatrix}
	-u_{3c}(\sigma(1))&u_{2c}(\sigma(1))\\
	-u_{3c}(\sigma(2))&u_{2c}(\sigma(2))\\
	-u_{3c}(\sigma(3))&u_{2c}(\sigma(3))
	\end{pmatrix}=\begin{pmatrix}
	|&|\\ (-u_\alpha^\sigma)_{ij} & (u_\beta^\sigma)_{ij}\\ |&|
	\end{pmatrix},
	\end{split}
	\end{equation}
	for all $i<j\in[N]$, i.e., that \eqref{Clr.Unmix.Prop2.Claim_ij2} holds, which proves the proposition for the case $\eqref{eq:Prop8Case1}$ when the identity on the right of \eqref{eq:useOfLemmaA1} holds. This concludes the proof for the case $\eqref{eq:Prop8Case1}$.

	In the case where~$\eqref{eq:Prop8Case2}$ holds, we get by the same method of proof
that either \eqref{Clr.Unmix.Prop2.Claim_ij2} or \eqref{Clr.Unmix.Prop2.Claim_ij1} hold, which proves the proposition for this case.

To conclude, we have shown that given $\theta$ which minimizes~\eqref{Clr.Unmix.Max}, the vectors $v_a^{\theta}$ and $v_b^{\theta}$ defined in~\eqref{eq:vtheta}, must satisfy either $(v_a^{\theta},v_b^{\theta}) = (u_{\alpha}^{\sigma}, u_{\beta}^{\sigma})$ or $(v_a^{\theta},v_b^{\theta}) = (-u_{\alpha}^{\sigma}, u_{\beta}^{\sigma})$ for some $\sigma \in S_{3}$.

\end{proof}

\subsection{Proof of Proposition~\ref{SignsSync:MainProp}}\label{proof:SignsSync:MainProp}
	Suppose that~\eqref{SignsSync:CycleId} holds, and fix an arbitrary $n\in[N]$. Then, the $(i,j)^{th}$ $3\times3$ block of $H$ is given by $s_{ij}v_i^Tv_j=s_{in}s_{nj}v_i^Tv_j$. Thus, since $s_{nj}=s_{jn}$ for all $j\in[N]$, we have
	\begin{equation}
	H=(v_n^s)^Tv_n^s,\quad v_n^s=(s_{n1}v_1,\ldots,s_{nN}v_N),
	\end{equation}
	which gives~\eqref{SignsSync:HnDecomp} and shows that $H$ indeed has rank 1.
	
	Now suppose that~\eqref{SignsSync:CycleId} does not hold, and assume without loss of generality that $s_{12}s_{23}=-s_{13}$. Denote the $m^{th}$ entry of a row $v_i$ by $v_i(m)$, $i\in[N]$.
	The rank 1 matrix $v_1^Tv_2$ is non-zero, thus, there exist $r,l\in\{1,2,3\}$ such that $v_1(r),v_2(l)\neq 0$, that is, the $r^{th}$ and $l^{th}$ entries of the vectors $v_1$ and $v_2$, respectively, are non-zero.
	By~\eqref{SignsSync:v_ii}, we have that $s_{11}=1$, and thus, the first three rows of $H$ are given by the $3\times3N$ matrix
	\begin{equation*}
	(v_1^Tv_1,s_{12}v_1^Tv_2,s_{13}v_1^Tv_3\ldots,s_{1N}v_1^Tv_N)=v_1^T(v_1,s_{12}v_2,s_{13}v_3\ldots,s_{1N}v_N).
	\end{equation*}
	Similarly, the next three rows of $H$ are given by the $3\times3N$ matrix
	\begin{equation*}
	(s_{21}v_2^Tv_1,v_2^Tv_2,s_{23}v_2^Tv_3,\ldots,s_{2N}v_2^Tv_N)=v_2^T(s_{21}v_1,v_2,s_{23}v_3\ldots,s_{2N}v_N).
	\end{equation*}
	Thus, since each vector $v_i$ is of length 3, rows number $r$ and $3+l$ of $H$ are given by
	\begin{gather}
	v_1(r)(v_1,s_{12}v_2,s_{13}v_3,\ldots,s_{1N}v_N), \label{eq:v1}\\
	v_2(l)(s_{21}v_1,v_2,s_{23}v_3,\ldots,s_{2N}v_N). \label{eq:v2}
	\end{gather}
	Multiplying~\eqref{eq:v1} by $\frac{1}{v_1(r)}$ and~\eqref{eq:v2} by $\frac{s_{12}}{v_2(l)}$, by our assumption we get
	\begin{equation*}
	\begin{split}
	(v_1,s_{12}v_2,s_{13}&v_3,\ldots,s_{1N}v_N),\\
	(v_1,s_{12}v_2,-s_{13}&v_3,\ldots,s_{12}s_{2N}v_N).
	\end{split}
	\end{equation*}
	Since $v_3\neq 0$, the latter two vectors are linearly dependent only if $s_{13}=-s_{13}$, which is impossible. Therefore, rows number $r$ and $3+l$ of $H$  given in~\eqref{eq:v1} and~\eqref{eq:v2} are linearly independent, which implies that $\operatorname{rank}(H)\geq2$.

\qed

\subsection{Proof of Proposition~\ref{SignsSync:SignsProp}}\label{proof:SignsSync:SignsProp}

	Fix a pair of indices $i<j\in[N]$. We begin by deriving an expression for $(Su_s)_{ij}$, which is the entry of $Su_s$ corresponding to the $(i,j)^{th}$ row of $S$. By~\eqref{SignsSync:SignsSyncMat} and the first equality in~\eqref{Clr.Omega.IdxIdentity} of Lemma~\ref{Clr.Omega.IdxLemma}, we have
	\begin{equation}\label{SignsSync:eigVecSums}
	\begin{split}
	(Su_s)_{ij}=&\sum_{k<j\in[N],k\neq i}(\widetilde{s}_{ij}\widetilde{s}_{kj})\widetilde{s}_{kj}+\sum_{j<k\in[N]}(\widetilde{s}_{ij}\widetilde{s}_{jk})\widetilde{s}_{jk}\\&+\sum_{k<i\in[N]}(\widetilde{s}_{ij}\widetilde{s}_{ki})\widetilde{s}_{ki}+\sum_{i<k\in[N],k\neq j}(\widetilde{s}_{ij}\widetilde{s}_{ik})\widetilde{s}_{ik}\\	
	=&\sum_{k<j\in[N],k\neq i}\widetilde{s}_{ij}+\sum_{j<k\in[N]}\widetilde{s}_{ij}+\sum_{k<i\in[N]}\widetilde{s}_{ij}+\sum_{i<k\in[N],k\neq j}\widetilde{s}_{ij}\\
	=&\sum_{k\neq i,j}\widetilde{s}_{ij}+\sum_{k\neq i,j}\widetilde{s}_{ij}=2\cdot \sum_{k\neq i,j}\widetilde{s}_{ij},	
	\end{split}
	\end{equation}
	where the 4 sums to the right of the first equality in~\eqref{SignsSync:eigVecSums} correspond to the 4 sets $A_{ij}^m$ of Lemma~\ref{Clr.Omega.IdxLemma}.	
	Since there are exactly $N-2$ indices $k\in[N]$ for which $k\neq i,j$, we conclude from~\eqref{SignsSync:eigVecSums} that $(Su_s)_{(i,j)}=2(N-2)\widetilde{s}_{ij}$ for all $i<j\in[N]$. Thus, $u_s$ is an eigenvector of $S$ corresponding to the eigenvalue $2(N-2)$.
	Next, we show that $2(N-2)$ is simple. Define a partition of the set $\{\widetilde{s}_{ij}\}_{i<j\in[N]}$ of~\eqref{SignsSync:s_ijTildeDef} into two disjoint sets
	\begin{equation}\label{SignsSync:SignClass}
	S^-=\{\widetilde{s}_{ij}\;|\;\widetilde{s}_{ij}=-1\},\quad S^+=\{\widetilde{s}_{ij}\;|\;\widetilde{s}_{ij}=1\},
	\end{equation}
	and note that a pair $\widetilde{s}_{ij}$ and $\widetilde{s}_{kl}$ are in the same set of~\eqref{SignsSync:SignClass} \textit{iff} $\widetilde{s}_{ij}\widetilde{s}_{kl}=1$.
	Thus, by~\eqref{SignsSync:SignsSyncMat} and~\eqref{SignsSync:SignClass}, the matrix $S$ is given by
	\begin{equation}
	(S)_{(i,j)(k,l)}=
	\begin{cases}
	\begin{aligned}
	1& & &|\{i,j\}\cap\{k,l\}|=1 \text{ and } \widetilde{s}_{ij} \text{ and }\\
	& & &\widetilde{s}_{kl} \text{ are in the same set of~\eqref{SignsSync:SignClass}},\\
	-1& & &|\{i,j\}\cap\{k,l\}|=1 \text{ and } \widetilde{s}_{ij} \text{ and }\\
	& &  &\widetilde{s}_{kl} \text{ are in different sets of~\eqref{SignsSync:SignClass}},\\
	0& & &\text{otherwise}.
	\end{aligned}
	\end{cases}
	\end{equation}
	In~\cite{Jsync}, it was shown that the leading eigenvalue of $S$ is simple and is given by $2(N-2)$, which concludes the proof.

\qed

\bibliographystyle{plain}
\bibliography{D21}

\end{document}